\newcommand{\erclogowrapped}[1]{%
\setlength\intextsep{0pt}%
\begin{wrapfigure}[3]{r}{#1}%
\includegraphics[width=#1]{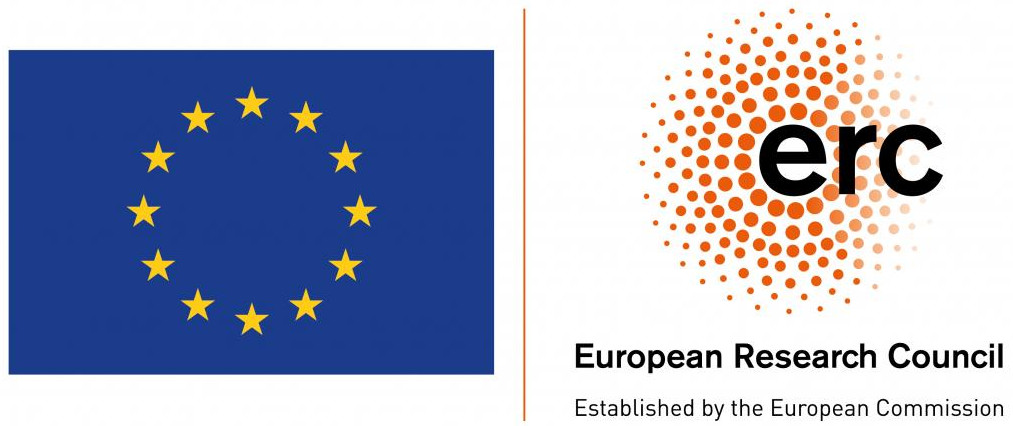}%
\end{wrapfigure}%
}
\newcommand{\paren}[1]{\left( #1 \right)}
\newcommand{\counting}{M_{\mathsf{count}}}
\newcommand{\norm}[1]{\left\| #1 \right\|}
\newcommand{\set}[1]{\left\{ #1 \right\}}
\newcommand{\op}[1]{\operatorname{#1}}
\newcommand{\col}{{1 \to 2}}
\newcommand{\row}{{2 \to \infty}}
\newcommand{\meanSE}{\text{err}_{2}}
\newcommand{\maxSE}{\text{err}_{\infty}}
\renewcommand{\meanSE}{\text{MeanSE}}
\renewcommand{\maxSE}{\text{MaxSE}}
\newcommand{\MeanSEUpperBound}{\frac{\log (n)}{\pi} + \underbrace{\alpha_{\infty} - \frac{1}{\pi}}_{\approx 0.748} + o(1)}
\newcommand{\MeanSEUpperBoundExplicit}{{0.748} + \frac{\log (n)}{\pi} +  o(1)}
\newcommand{\MaxSEUpperBoundExplicit}{0.846 + \frac{\log (n)}{\pi} + o(1)}
\newtheorem{theorem}{Theorem}
\newtheorem{lemma}{Lemma}
\newtheorem{corollary}{Corollary}
\newtheorem{definition}{Definition}
\title{Normalized Square Root: Sharper Matrix Factorization Bounds for Differentially Private Continual Counting}
\author{Monika Henzinger}
\thanks{Institute of Science and Technology (ISTA), Austria. \texttt{monika.henzinger@ist.ac.at}}
\author{Nikita P. Kalinin}
\thanks{Institute of Science and Technology (ISTA), Austria. \texttt{nikita.kalinin@ist.ac.at}}
\author{Jalaj Upadhyay}
\thanks{Rutgers University, USA \texttt{jalaj.upadhyay@rutgers.edu}}
\begin{document}

\begin{abstract}
   The factorization norms of the lower-triangular all-ones $n\times n$ matrix, $\gamma_2(M_{\mathsf{count}})$ and $\gamma_{\mathrm{F}}(M_{\mathsf{count}})$, play a central role in differential privacy as they are used to give theoretical justification of the accuracy of the only known production-level private training algorithm of deep neural networks by Google. Prior to this work, the best known upper bound on $\gamma_2(M_{\mathsf{count}})$ was $1 + \tfrac{\log(n)}{\pi}$ by Mathias (Linear Algebra and Applications, 1993), and the best known lower bound was $\tfrac{1}{\pi}\bigl(2 + \log\bigl(\tfrac{2n+1}{3}\bigr)\bigr) \approx 0.507 + \tfrac{\log(n)}{\pi}$ (Matou{\v{s}}ek, Nikolov, Talwar, IMRN 2020), where $\log(\cdot)$ is the natural logarithm. Recently, Henzinger and Upadhyay (SODA 2025) gave the first explicit factorization that meets the bound of Mathias (1993) and asked whether there exists an explicit factorization that improves on Mathias’ bound. We answer this question in the affirmative. Additionally, we improve the lower bound significantly. More specifically, we show that
   $$
      o(1) + 0.701 + \frac{\log(n)}{\pi} \;\leq\; \gamma_2(M_{\mathsf{count}}) \;\leq\; 0.846 + \frac{\log(n)}{\pi} + o(1).
   $$
   That is, we reduce the gap between the upper and lower bound to $0.14 + o(1)$.

   We show that our factors achieve a better upper bound for $\gamma_{\mathrm{F}}(M_{\mathsf{count}})$ compared to prior work, and we also establish an improved lower bound for $\gamma_{\mathrm{F}}(M_{\mathsf{count}})$:
   $$
      o(1) + 0.701 + \frac{\log(n)}{\pi} \;\leq\; \gamma_{\mathrm{F}}(M_{\mathsf{count}}) \;\leq\; 0.748 + \frac{\log(n)}{\pi} + o(1).
   $$
   That is, the gap between the lower and upper bound provided by our explicit factorization is $0.047 + o(1)$.
\end{abstract}

\maketitle

\setcounter{page}{1}
\section{Introduction}
\label{sec:introduction}
The \textit{matrix mechanism}~\cite{li2015matrix} is a foundational tool in differential privacy that enables the accurate release of a set of linear queries (called the {\em workload matrix}) while minimizing added noise. Given a workload matrix $A$, the matrix mechanism, $\mathcal M_{\mathsf{mm}}$, first computes a factor $A=LR$ and then on input $x$ outputs $\mathcal M_{\mathsf{mm}}(A;x)=L(Rx+z)$, where $z$ is an appropriately scaled multivariate Gaussian vector.  With a suitable scaling, the mechanism satisfies $\mu$-Gaussian differential privacy (see Definition~\ref{defn:DP}).  In the recent years, a special workload matrix, namely the lower-triangular all-ones matrix, i.e.,
\begin{align}
\label{eq:countingMatrix}
\counting[i,j] = \begin{cases}
    1 & i \geq j \\
    0 & i < j
\end{cases}    
\end{align}
has gained special interest due to its application in the private training of neural networks~\cite{mcmahan2022private}. The matrix mechanism improves model accuracy under the strict privacy constraints of differential privacy by applying a carefully chosen linear transformation to the gradients and then privatizing it (see \cite{pillutla2025correlated}).

To evaluate the accuracy of the matrix mechanism in privately training neural networks, two widely used {metrics} are the \textit{Mean Squared Error} (MeanSE) and \textit{Maximum Squared Error} (MaxSE). For a randomized algorithm $\mathcal M$ and $\counting\in \{0,1\}^{n\times n}$ as defined above, they are defined as follows:
\begin{align}
    \begin{split}
        \maxSE(\mathcal M;n) &:= \max_i \paren{\mathbb E_{\mathcal M}[(\mathcal M(x)-\counting x)^2_i]}^{1/2} \\
        \meanSE(\mathcal M;n) &:=  \mathbb E_{\mathcal M} \left[\tfrac{1}{n}\norm{\mathcal M(x) - \counting x}_2^2\right]^{1/2}
    \end{split}
\end{align}

While $\meanSE(\mathcal M;n)$ reflects the {average accuracy} of the mechanism, $\maxSE(\mathcal M;n)$ helps identify potential bottlenecks in convergence or stability.  Together, these metrics provide complementary insights into how the matrix mechanism balances privacy and accuracy during differentially private model training. For example, training neural networks privately is an iterative process that takes a total so-called {\em privacy budget}, and a scaled Gaussian noise (known as {\em scale multiplier}) is added to the estimated gradients in every iteration. This process is stopped once the total privacy budget is consumed. Since the stopping time is not known {\it a prori}, having a high MaxSE runs into the risk of {outputting a model with large accuracy loss}. Therefore, to get a proper understanding of private training, ideally, one would like a mechanism to give the best bounds for both these error metrics.

The $\meanSE$ and $\maxSE$ error of the matrix mechanism $\mathcal M_{\mathsf{mm}}$ can be upper bounded in terms of the appropriate factorization norms, denoted by $\gamma_{\op F}(\cdot)$ and $\gamma_{\op 2}(\cdot)$-norm of matrix $\counting$~\cite{edmonds2020power}. Let $\mu$ be the scale multiplier of the zero-mean Gaussian distribution required to ensure $(1/\mu)$-Gaussian differential privacy (see Definition~\ref{def:gaussian_mech}). For a matrix $A$, let $\norm{A}_{p\to q} = \max\limits_{x} \frac{\norm{Ax}_q}{\norm{x}_p}$. Then  
\begin{align}
\begin{split}
    \maxSE(\mathcal M_{\mathsf{mm}};n) = \widetilde O(\mu \cdot \gamma_{\op 2}(\counting)  ) \quad  &\text{and} \quad \meanSE(\mathcal M_{\mathsf{mm}};n) = O(\mu \cdot \gamma_{\op F}(\counting) ), ~~ \text{where} \\
    \gamma_{\op 2}(A) := \min_{A = LR} \set{\norm{L}_\row \norm{R}_\col}  \quad &\text{and} \quad 
    \gamma_{\op F}(A) := \min_{A = LR} \set{\tfrac{1}{\sqrt{n}}\norm{L}_{\op F}\norm{R}_\col}.
\end{split}
\end{align}

For a specific instantiation of the matrix mechanism, $\mathcal M_{\mathsf{mm}}(\counting,x)=L(Rx+z)$, using  factorization, $\counting=LR$, we denote the maximum squared error by $\text{MaxSE}(L,R)$ and the mean-squared error of the mechanism by $\text{MeanSE}(L,R)$ for preserving $1$-Gaussian differential privacy.

While these $\gamma$-norms provide a principled measure of error of the matrix mechanism,
computing {the optimal factorization} requires solving a computationally expensive semidefinite program (SDP)~\cite{haagerup1980decomposition,  henzinger2023almost}. To address this, a number of explicit factorizations have been proposed that approximate $\gamma_{\op F}(\counting)$~\cite{dwork2010differentially,henzinger2025improved, henzinger2023almost} and $\gamma_{\op 2}(\counting)$~\cite{bennett1977schur,
dvijotham2024efficient,dwork2010differentially,  henzinger2022constant,henzinger2023almost,henzinger2025improved} without relying on SDP solvers. 

Unfortunately, none of these factorizations perform best for both of these metric. For example, the \textit{group algebra factorization}~\cite{henzinger2025improved} gives a better bound for MaxSE error but performs worse than the \textit{square root factorization}~\cite{henzinger2023almost} for MeanSE. See Figure~\ref{fig:factorization_errors} for a plot of the MaxSE and MeanSE error minus $\tfrac{\log (n)}{\pi}$ for (among others) these  factorizations.

\begin{figure}[t]
    \centering
    \caption{\textbf{Left:} the figure shows the MaxSE error for different matrix sizes ($n$). The plot represents the additive difference with $\tfrac{\log (n)}{\pi}$. We also display the optimal factorization value achieved by Nuclear Norm maximization (see Theorem 9 from \cite{lee2008direct}). A lower bound is suggested by \cite{matouvsek2020factorization}. \textbf{Right:} the figure shows the MeanSE error, with the optimal value computed by the corresponding SDP program (see Figure 2 in \cite{henzinger2023almost}). The lower bound in this plot is also suggested by \cite{henzinger2023almost}. In this case, the Normalized Square Root factorization yields lower errors than the standard Square Root factorization and is much closer to the optimal values. }    \includegraphics[width=0.45\linewidth]{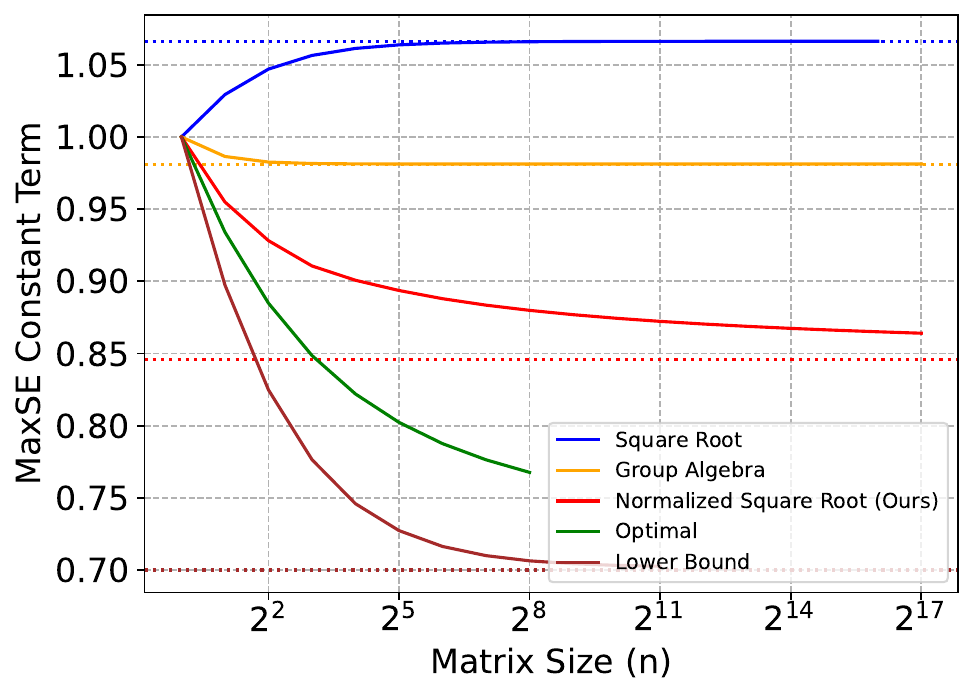}
    \includegraphics[width=0.45\linewidth]{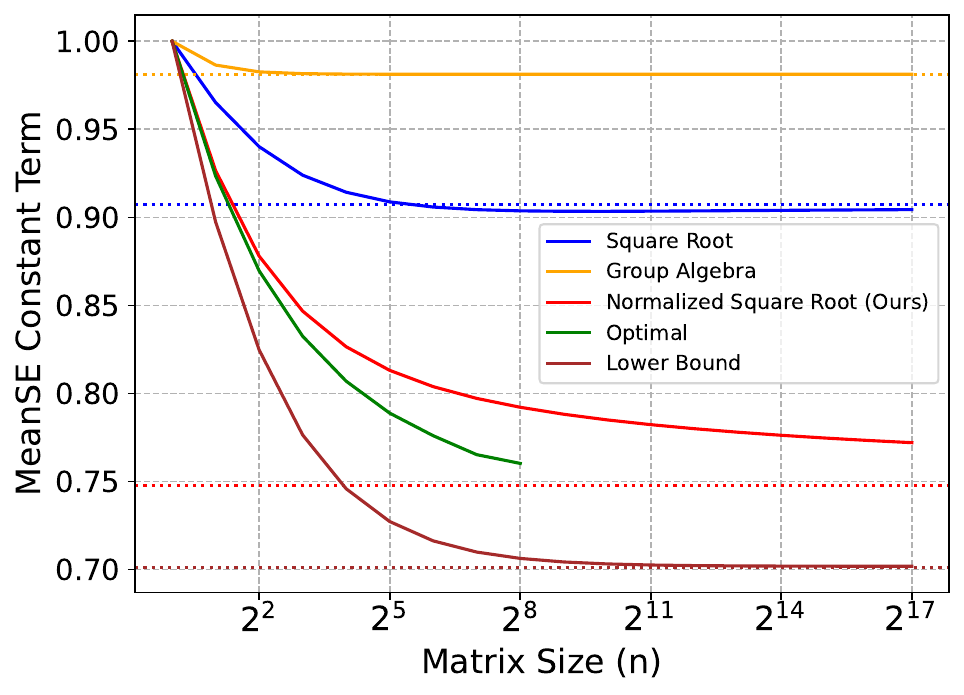}
    \label{fig:factorization_errors}
\end{figure}

The main contribution of this paper is to provide an explicit factorization that achieves the best known bounds under both MeanSE and  MaxSE metric for $\counting$. As these metrics are proportional to $\gamma_{\op F}(\counting)$ and $\gamma_{\op 2}(\counting)$, we state our result in terms of these norms.

\begin{theorem}
    [Main Result]
    \label{thm:mainTheorem}
    There exists an explicit factorization $LR=\counting$ such that 
    \begin{align*}
    \begin{split}
        0.701 + \frac{\log(n)}{\pi} &\leq \gamma_{\op 2}(\counting) \leq \norm{L}_{\row}\norm{R}_\col \leq \MaxSEUpperBoundExplicit\\
        0.701 + \frac{\log(n)}{\pi} &\leq \gamma_{\op F}(\counting) \leq \frac{1}{\sqrt{n}}\norm{L}_{\op F}\norm{R}_\col \leq \MeanSEUpperBoundExplicit.
    \end{split}
    \end{align*}
\end{theorem}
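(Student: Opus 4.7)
My plan is to prove the upper bounds by exhibiting an explicit \emph{Normalized Square Root} factorization of $\counting$, and to prove the lower bounds by refining the dual SDP certificate of Matou\v{s}ek--Nikolov--Talwar. Let $T = \counting^{1/2}$ denote the lower-triangular Toeplitz square root, whose entries are $T_{ij} = f(i-j)$ with $f(k) = \binom{2k}{k}/4^k$. For any positive diagonal matrix $D = \operatorname{diag}(d_1,\ldots,d_n)$ the pair $L := TD^{-1}$, $R := DT$ automatically satisfies $LR = T^2 = \counting$. I would choose the weights $d_i$ to normalize the factorization, specifically so that the column $\ell_2$-norms of $R$ are all equal to a common value $C$; since the $j$-th squared column norm of $R$ equals $\sum_{k=0}^{n-j} d_{j+k}^2 f(k)^2$, this gives a lower-triangular system in $d_i^2$ which can be solved recursively from $d_n$ down to $d_1$.

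Having fixed this factorization, the estimates for $\norm{R}_\col$, $\norm{L}_\row$, and $\norm{L}_{\op F}$ drive the upper bounds. By construction $\norm{R}_\col = C$, and $C$ is computed asymptotically from the identity $\sum_{k=0}^{m-1} f(k)^2 = \tfrac{\log m}{\pi} + \alpha_\infty + o(1)$ for an explicit constant $\alpha_\infty$. The $i$-th row of $L$ has squared norm $\sum_{k=0}^{i-1} f(k)^2/d_{i-k}^2$; the maximum over $i$ yields $\norm{L}_\row$ and gives $\norm{L}_\row\norm{R}_\col \le 0.846 + \tfrac{\log n}{\pi} + o(1)$, while the average, namely $\tfrac{1}{n}\norm{L}_{\op F}^2$, produces the tighter bound $\tfrac{1}{\sqrt n}\norm{L}_{\op F}\norm{R}_\col \le 0.748 + \tfrac{\log n}{\pi} + o(1)$; the gap between the two constants comes from the fact that the rows of $L$ near the top of the matrix are much shorter than the worst row. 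In both cases the leading constants arise from evaluating explicit integrals of $1/(\pi k)$ against logarithmic kernels that emerge from the recursion for $d_i$.

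For the lower bound $0.701 + \tfrac{\log n}{\pi}$, valid simultaneously for $\gamma_{\op 2}(\counting) \ge \gamma_{\op F}(\counting)$, I would use the dual formulation: for any feasible $B$ in the appropriate dual norm one has $\gamma_{\op F}(\counting) \ge \langle \counting, B\rangle$. The Matou\v{s}ek--Nikolov--Talwar certificate achieves only $0.507$; the plan is to construct a sharper Toeplitz-structured certificate whose entries better match the kernel $f$ appearing in the square root, so that dual feasibility reduces to a one-dimensional integral inequality that can be verified to yield $0.701$. The main obstacle is the sharp asymptotic analysis of the weighted sums $\sum_{k} f(k)^2/d_{i-k}^2$: the $d_i$ are only implicitly specified by the normalization recursion, so converting this recursion into an explicit asymptotic formula, matching it against the target constants $0.846$ and $0.748$, and controlling the $o(1)$ error terms via Euler--Maclaurin bookkeeping is delicate; the lower-bound construction, meanwhile, demands a careful tradeoff between the certificate's total mass and its dual-norm constraint.
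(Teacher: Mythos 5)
Your proposal differs from the paper's actual argument in two significant ways, one of which is a genuine gap.

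\textbf{Upper bound.} You define $L = TD^{-1}$, $R = DT$ and choose $D$ implicitly so that the columns of $R = DT$ all have the same $\ell_2$-norm. This scales the \emph{rows} of $T$ and leads to a nontrivial recursion $d_j^2 = C^2 - \sum_{k\geq 1} d_{j+k}^2 f(k)^2$ whose positivity and asymptotics you leave unaddressed — as you yourself note, this is the main obstacle. The paper avoids this entirely: it sets $D$ to be the \emph{explicit} diagonal of column norms of $C = \counting^{1/2}$, i.e.\ $d_i = \|C_{:,i}\|_2$ with the closed form $d_i^2 = \sum_{t=0}^{n-i} r_t^2$, and then takes the right factor to be $\widetilde{C} = CD^{-1}$ (a \emph{column} rescaling, giving $\|\widetilde{C}\|_{1\to 2} = 1$ by construction) and the left factor to be $\widetilde{B} = \counting \widetilde{C}^{-1} = \counting D C^{-1}$. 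Your $L = CD^{-1}$ coincides with the paper's \emph{right} factor, not its left factor; the paper's left factor is not of the form $CD'^{-1}$ for any diagonal $D'$. Concretely, $\widetilde{B}_{jk} = \sum_{t=0}^{j-k}\widetilde r_t d_{k+t}$, which the paper then rewrites via Abel summation and a careful two-regime split. It is not clear your implicitly normalized factorization would yield the same constants $0.846$ and $0.748$, and the bookkeeping you envision (``Euler--Maclaurin'' against a kernel you cannot write down) is far harder than necessary.

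\textbf{Lower bound.} Here your proposal contains a real misconception. You plan to ``construct a sharper Toeplitz-structured certificate'' because ``the Matou\v{s}ek--Nikolov--Talwar certificate achieves only $0.507$.'' In fact, the paper shows that the \emph{same} certificate $u=v=\mathbf 1/\sqrt n$ already gives $\gamma_2(\counting) \geq \|\counting\|_*/n = \tfrac{\log n}{\pi} + \tfrac{\gamma + \log(16/\pi)}{\pi} + o(1) \approx 0.701 + \tfrac{\log n}{\pi}$. The $0.507$ figure was merely a looser estimate of the same quantity by Dvijotham et al. The improvement comes from computing the cosecant sum $\tfrac{1}{2n}\sum_{j}\csc\bigl(\tfrac{(2j-1)\pi}{4n+2}\bigr)$ sharply — the paper splits it as $F(2n) - \tfrac12 F(n)$, subtracts a harmonic sum to expose a Riemann sum, and evaluates $\int_0^{\pi/2}\bigl(\tfrac1{\sin x}-\tfrac1x\bigr)\,dx = \log(4/\pi)$ in closed form. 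No new certificate is needed, and in fact the paper proves (via the tight analysis of the square-root factorization) that $0.701$ is the best this certificate can give, so a better constant \emph{would} require a new certificate — but $0.701$ itself does not. You should drop the ``new certificate'' plan and instead re-examine the existing certificate's value more carefully.

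Secondary point: your inequality $\gamma_2(\counting) \geq \gamma_F(\counting)$ is correct but is not how the paper handles both norms; it derives the lower bound for both from the common fact $\gamma_F(\counting) \geq \|\counting\|_*/n$.
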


\subsection*{\texorpdfstring
  {Comparison with prior work for the $\gamma_{\op 2}$-norm}
  {Comparison with prior work for the γop2-norm}}
 The best known bound for $\gamma_{\op 2}(\counting)$ prior to this work was achieved by the group algebra factorization of Henzinger and Upadhyay~\cite{henzinger2025improved}. 
 It was the first to match the non-constructive bound of Mathias~\cite{mathias1993hadamard}, i.e., it achieves $\gamma_{\op 2}(\counting) \leq 1+ \frac{\log(n)}{\pi}$. We show that their factorization actually achieves a somewhat better bound by a careful analysis. That is, for their factors $L_{\mathsf{HU}}R_{\mathsf{HU}}=\counting$, we show in Section~\ref{sub:group_algebra}  that  
\begin{align}
\norm{L_{\mathsf{HU}}}_{2\to \infty} \norm{R_{\mathsf{HU}}}_{1 \to 2} = \frac{1}{2} + \frac{1}{\pi}\paren{\gamma + {\frac{1}{ \pi}}\log\paren{\frac{8}{\pi}}} +  \frac{\log(n)}{\pi} + o(1) \approx 0.981 +  \frac{\log(n)}{\pi},
\label{eq:HUBound}    
\end{align}
 where $\gamma \approx 0.577$ \text{is the Euler-Mascheroni constant}.
Since we show equality, this exactly analyzes the performance of the algorithms in~\cite{henzinger2025improved}, i.e., it is not possible to achieve a better bound with a different analysis. Additionally, it follows that our result improves their bound by {at least} $0.14$. 

We also give the first exact bound for the square root factorization~\cite{bennett1977schur,henzinger2022constant}, namely, we show  in Section~\ref{sec:square_root_factorization} that their factors $L_{\mathsf{FHU}}R_{\mathsf{FHU}}=\counting$ satisfy  
\begin{align}
\norm{L_{\mathsf{FHU}}}_{2\to \infty} \norm{R_{\mathsf{FHU}}}_{1 \to 2} = \frac{\gamma + \log 16}{\pi} - \epsilon_n + \frac{\log (n)}{\pi} \approx 1.066 + \frac{\log (n)}{\pi},\; \text{where} \quad 0 \le \epsilon_n \le \frac{1}{5n}.
\label{eq:FHUBound}    
\end{align}

\begin{table}[t]
    \centering
    \caption{Comparison of our results for $\gamma_2(\counting)$. In the table,     $\gamma$ is the Euler-Mascheroni constant, $\log(\cdot)$ denotes the natural logarithm. All the expressions marked by * have an additional additive $o(1)$ term. Our improvements are in {\color{blue}blue}.}
    \renewcommand{\arraystretch}{1.8}
    \begin{tabular}{|c|c|c|c|}
    \hline
        & Lower bound & Upper Bound & \makecell{Explicit \\ Factors?} \\ \hline 
        Kwapie\'n \& Pe{\l}czy\'nski~\cite{kwapien1970main} & $c \log_2(n)$ & $c_*\log_2(n)$ & $\times$ \\ \hline 
        Mathias~\cite{mathias1993hadamard} & $
        \frac{1}{2} + \frac{1}{2n} + \tfrac{\log(n)}{\pi}$  & ${1} + \frac{\log(n)}{\pi}$ & $\times$ \\ \hline

        Dwork et al.~\cite{dwork2010differentially} & $-$ & $\log_2(n) + \sqrt{\log_2(n)}$ & $\surd$ \\ \hline 

        Matou{\v{s}}ek et al.~\cite{matouvsek2020factorization} & $\frac{1}{ \pi}\paren{\log\paren{\frac{2n+1}{3} } + 2}$ & $-$ & $\times$ \\ \hline 

        {\color{blue} Our improvement of}~\cite{matouvsek2020factorization}* & {\color{blue} $ \underbrace{\tfrac{1}{\pi}\paren{\gamma + \log\paren{\tfrac{16}{ \pi}}}}_{\approx 0.701} + {\frac{\log(n)}{\pi}} $} & $-$ & $\times$ \\ \hline 
        \makecell{Bennett~\cite{bennett1977schur} \\ Fichtenberger et al.~\cite{henzinger2022constant}} & $-$ & {${1 + {\frac{\gamma}{  \pi}}} + \frac{\log(n)}{\pi} \approx 1.184 + \frac{\log(n)}{ \pi}$} & $\surd$ \\ \hline 

        Dvijotham et al.~\cite{dvijotham2024efficient}& $-$ & {$(1+o(1))\paren{1 + {\frac{\gamma}{\pi}+ \frac{\log(n)}{ \pi}}}$} & $\surd$ \\ \hline

        {\color{blue} Our improvement of}~\cite{henzinger2022constant} & $-$ & {\color{blue} $\underbrace{\tfrac{1}{\pi}\paren{\gamma + \log 16}}_{\approx 1.066}+ \frac{\log(n)}{\pi} $} & $\surd$ \\ \hline

        Henzinger \& Upadhyay~\cite{henzinger2025improved} & $-$ & $1 + \frac{\log(n)}{\pi}$ & $\surd$ \\ \hline 

        {\color{blue} Our improvement of}~\cite{henzinger2025improved}* & $-$ & {\color{blue} $\underbrace{\tfrac{1}{2} + \tfrac{1}{\pi}\paren{\gamma + {\tfrac{1}{ \pi}}\log\paren{\tfrac{8}{\pi}}}}_{\approx 0.981} +  \frac{\log(n)}{\pi}$} & $\surd$ \\ \hline 
        
        {\color{blue} Our factorization}* & $-$ & {\color{blue}$\underbrace{\tfrac{1}{\pi}\paren{\gamma + \log 8}}_{\approx 0.845} + \frac{\log(n)}{\pi}$} & $\surd$ \\ \hline
        
    \end{tabular}
    \label{tab:resultsGamma2norm}
\end{table}

Note that even when $n=2$, the best possible bound for 
the factorization ~\cite{henzinger2022constant} is $0.06$ worse than ours and gets increasingly worse as $n$ increases. We note that {the bound on $\gamma_{\op 2}(\counting)$} achieved by the square root factorization 
is  worse than the bound of the group algebra factorization
~\cite{henzinger2025improved} for all $n>2$,  as can also be seen by our numerical simulation (Figure~\ref{fig:factorization_errors}). In fact, it converges to $1.0663 + \frac{\log(n)}{\pi}$ as soon as $n\approx 100$. 

On the lower bound side, our tighter analysis also allows us to improve the previously computed bound. More precisely, Matou\u{s}ek et al.~\cite{matouvsek2020factorization} set the dual certificates, $u=v={ \frac{\mathbf 1}{\sqrt{n}}}$ (i.e., the normalized all-ones vector) in the following dual formulation of $\gamma_{\op 2}(\counting)$:
\[
\gamma_{\op 2}(\counting) = \max_{v, u, \|u\|,\|v\|=1} \norm{\counting \bullet uv^\top}_*,
\]
Here $A\bullet B$ denotes the \textit{Schur product} and $\norm{\cdot} _*$ denotes the \textit{nuclear norm}. 
Dvijotham et al.~\cite{dvijotham2024efficient} showed that Matou\u{s}ek et al.~\cite{matouvsek2020factorization}'s dual certificate satisfies  $\gamma_{\op 2}(\counting)\geq \frac{1}{\pi}\paren{\log((2n+1)/3) + 2} \approx 0.507 + \frac{\log(n)}{\pi}$. This gives a gap of $\approx 0.6767$ from the then-known best known upper bound~\cite{henzinger2022constant}. They further report that empirically, the  gap between the lower bound using the dual certificate of Matou\u{s}ek et al.~\cite{matouvsek2020factorization} and 
the square root factorization
is less than $0.365$. 

{Using a more rigorous analysis, we show that the dual certificate of Matou\u{s}ek et al.~\cite{matouvsek2020factorization} achieves a better lower bound of  $\gamma_{\op 2}(\counting) \geq 0.701 + \frac{\log(n)}{\pi} + o(1)$}. {Since $\gamma_{\op 2}(\counting)\leq \norm{L_{\mathsf{FHU}}}_{2\to \infty} \norm{R_{\mathsf{FHU}}}_{1 \to 2}$, the lower bound in Theorem~\ref{thm:mainTheorem}} in conjunction with eq.~\eqref{eq:FHUBound} shows the validity of the empirical observation of~\cite{dvijotham2024efficient}, i.e., the gap between these two bounds is $0.365$.  Note that, the bound in eq.~\eqref{eq:FHUBound} is an equality; so it also shows that improving the constant in the lower bound would require a different dual certificate than \cite{matouvsek2020factorization}.
We present these comparisons succinctly in Table~\ref{tab:resultsGamma2norm}.

\subsection*{\texorpdfstring
  {Comparison with prior work for the $\gamma_{\op F}$-norm}
  {Comparison with prior work for the γopF-norm}}
 The best known upper and lower bound for $\gamma_{\op F}(\counting)$ prior to this work was for the square root factorization and it was given by Henzinger et al.~\cite{henzinger2023almost}. They showed the following:
 \[
 \frac{1}{\pi}\paren{2 +{\log\paren{\frac{(2n+1)}{3}}}} \leq \gamma_{\op F}(\counting) \leq 1 +\frac{1}{ \pi}\paren{\gamma+\log(n)}.
 \]
 
 We improve their lower bound such that the constant term is $\approx 0.2$ better. As a consequence, there is just a $0.047$ additive gap between the lower bound and the upper bound achieved by our explicit factorization (Theorem~\ref{thm:mainTheorem}). 
 We also show that the square root factorization achieves better $\gamma_F$ upper bound than the group algebra factorization~\cite{henzinger2025improved}. In fact, we show in Section~\ref{sec:square_root_factorization} that the factorization $L_{\mathsf{FHU}}R_{\mathsf{FHU}} =\counting$ fulfills the following exact relation:
\begin{align}
\text{MeanSE}(L_{\mathsf{FHU}}, R_{\mathsf{FHU}}) = \frac{1}{\sqrt{n}}\norm{L_{\mathsf{FHU}}}_{\op F} \norm{R_{\mathsf{FHU}}}_{1 \to 2} = \frac{\gamma + \log 16}{\pi} - \frac{1}{2\pi} + \frac{\log (n)}{\pi} + o(1).
\label{eq:HUUBound}    
\end{align}
Compared to eq \eqref{eq:HUBound}, this is better by an additive $0.08$ term and {only} an ${\frac{1}{ 2\pi}}$ additive term worse than Theorem~\ref{thm:mainTheorem}.  Table~\ref{tab:resultsGammaFnorm} enumerates these results.

\begin{table}[t]
    \centering
    \caption{Comparison of our results for $\gamma_{\op F}(\counting)$. In the table,     $\gamma$ is the Euler-Mascheroni constant,   $\log(\cdot)$ denotes the natural logarithm. All the expressions marked by * have an additional additive $o(1)$ term. Our improvements are in {\color{blue}blue}.}
    \renewcommand{\arraystretch}{2}
    \begin{tabular}{|c|c|c|c|}   
    \hline
        & Lower bound & Upper Bound & \makecell{Explicit \\ Factors?} \\ \hline

        Henzinger et al.~\cite{henzinger2023almost} & $\frac{1}{\pi}\paren{2 +{\log\paren{\frac{(2n+1)}{3}}}}$ & {$1 +{\frac{\gamma}{\pi}}  + \frac{\log(n)}{ \pi}\approx 1.184 + \frac{\log(n)}{\pi}$} & $\surd$ \\ \hline 
        
        {\color{blue} Our improvement of}~\cite{henzinger2023almost} & {\color{blue} $ \underbrace{\tfrac{1}{\pi}\paren{\gamma + \log\paren{\tfrac{16}{ \pi}}}}_{\approx 0.701} + \frac{\log(n)}{\pi} $} & {\color{blue} $\underbrace{\tfrac{1}{\pi}\paren{\gamma + \log (16) - \tfrac{1}{2}}}_{\approx 0.907} + \frac{\log (n)}{\pi}$} & $\surd$ \\ \hline 

        Henzinger \& Upadhyay~\cite{henzinger2025improved} & $-$ & $1 + \frac{\log(n)}{\pi}$ & $\surd$ \\ \hline 
        {\color{blue} Our improvement of}~\cite{henzinger2025improved}* & $-$ & {\color{blue} $\underbrace{\tfrac{1}{2} + \tfrac{1}{\pi}\paren{\gamma + {\tfrac{1}{ \pi}}\log\paren{\tfrac{8}{\pi}}}}_{\approx 0.981} +  \frac{\log(n)}{\pi}$} & $\surd$ \\ \hline
        {\color{blue} Our factorization}* & $-$ & {\color{blue}$\underbrace{\tfrac{1}{\pi}\paren{\gamma + \log (16) - 1} }_{\approx 0.748} + \frac{\log(n)}{ \pi}$} & $\surd$ \\ \hline
        
    \end{tabular}
    \label{tab:resultsGammaFnorm}
\end{table}

\subsection*{Implications in private training} 
Our factorization uses the square root factorization $L_{\mathsf{FHU}}$ $R_{\mathsf{FHU}}$, then normalizes the columns of $R_{\mathsf{FHU}}$ to compute $\widetilde R$, and finally compute $\widetilde L=A\widetilde R^{-1}$. Column normalization is a heuristic used in the current deployment of private training of a neural network (see \cite{choquette2022multi}). This is motivated by the empirical observation that, for $n$ up to $4096$, the right factor of the optimal factorization with respect to the MeanSE($L,R$) has normalized columns (also, see the discussion on~\cite[page 62]{pillutla2025correlated}). However, from the empirical observation, one cannot infer for which matrix factorization we should normalize the columns or whether this phenomenon exists for all $n$ or only for some bounded values of $n$. Our paper takes the first theoretical step in this direction. 
In particular, even though MeanSE($L,R$) has been traditionally used in several works~\cite{choquette2022multi,choquette2023correlated,mcmahan2022private}, a recent work shows that it might not be the only statistic defining the accuracy of private training~\cite[Figure 3]{ganesh2025design} {and Dvijotham et al.~\cite{dvijotham2024efficient} suggest that MaxSE($L,R$) might also be very useful.}

\subsection*{Implication in Other Areas of Mathematics.} 
On its own, factorization norms are a central question in functional analysis and operator algebra since the first discussion in  Schur~\cite{schur1911bemerkungen}, where he introduced Schur multipliers. In particular, the matrix $\counting$ has played a special role  since Kwapie\'n and Pe{\l}czy\'nski~\cite{kwapien1970main} showed its deeper connection with various seemingly unrelated problems, like absolutely summing operators and absolute convergence of series in $\ell_1$. Since then, $\gamma_{\op 2}(\counting)$  has found applications in non-commutative matrix algebra~\cite{junge2005best}, symplectic capacity~\cite{gluskin2019symplectic}, compact operators~\cite{aleksandrov2023triangular,kato1973continuity}, other absolute summing problems~\cite{gordon1974absolutely}, etc.

\subsection{Our Techniques for Upper Bound} 
To understand the intuition behind our factorization, we revisit the error metric and the factorizations that previously achieved the best-known results: the \emph{square root factorization}~\cite{bennett1977schur,henzinger2022constant} and the \emph{group algebra factorization}~\cite{henzinger2025improved}. Both possess distinctive structural properties, which in turn explain why each is well-suited to a particular error measure (MeanSE or MaxSE). For example, the group algebra factorization, $L_{\mathsf{HU}} R_{\mathsf{HU}} = \counting$, has the property that every row of $L_{\mathsf{HU}}$ and every column of $R_{\mathsf{HU}}$ have the same $\ell_2$-norm. This uniformity is ideal for MaxSE: no row of $L_{\mathsf{HU}}$ or column of $R_{\mathsf{HU}}$ dominates, so their contributions are evenly balanced. In contrast, the square root factorization gives disproportionate weight to the last row and the first column. While this structure is suboptimal for MaxSE, it allows the error to be averaged across all rows, which significantly lowers the MeanSE error. Thus, the square root factorization benefits from the change of metric, whereas the group algebra factorization does not.

\subsection*{Finding the factors}

Our starting point for our improvement is the square root factorization. Unlike the group algebra factorization, normalizing the columns in this setting scales the $\ell_2$-norms of the corresponding {\em rows} differently. In fact, this normalization accentuates the desired phenomenon: the disbalance between the row norms becomes larger, with the $\ell_2$-norm of the rows of the left factor increasing more rapidly than in the unnormalized square root factorization. Crucially, however, this growth remains controlled. The main technical challenge in our analysis is to bound this rate of increase as tightly as possible.  

To illustrate concretely that normalizing helps, let us consider the simplest non-trivial case of $n=2$. Normalizing the column norm of $\counting^{1/2}$ to obtain $\widetilde{C}$ gives 
\[
\widetilde C = \begin{pmatrix}
    \tfrac{2}{\sqrt{5}} & 0 \\
    \tfrac{1}{\sqrt{5}} & 1
\end{pmatrix}
 \quad \text{and} \quad 
\widetilde{B} = \counting \widetilde C^{-1} = \begin{pmatrix}
    \tfrac{\sqrt{5}}{2} & 0  \\
    \tfrac{\sqrt{5}-1}{2} & 1 
\end{pmatrix}.
\]
This factorization yields $\gamma_{\op 2}(\counting) \leq 1.1755$, whereas, for $n=2$,  Henzinger and Upadhyay~\cite{henzinger2025improved} obtain $\gamma_{\op 2}(\counting) \le 1.25$. A similar calculation can be carried out for $\gamma_{\op F}(\counting)$, showing that normalizing the columns of $R$ indeed leads to an improvement.  

This motivates the following construction: 

\begin{definition}[Normalized Square Root Factorization]
Let $C$ be the square root of $\counting$, i.e., $C^2=\counting$, $C_{:,i}$ denote the $i$-th column of $C$, and $D = \operatorname{diag}\left(\norm{C_{:,1}}_2, \cdots, \norm{C_{:,n}}_2 \right)$ denote the diagonal matrix with entries $d_1, \cdots, d_n$ containing the column norms of $C$. 
Then the \emph{Normalized Square Root (NSR)} factorization of a matrix $\counting$ is given by $\counting = \widetilde{B}\widetilde{C}$, where
\begin{equation}
    \widetilde{C} = C D^{-1} \qquad \text{and} \qquad \widetilde{B} = \counting \widetilde{C}^{-1} = \counting D C^{-1}.
\label{eq:NSRFactorization}
\end{equation}
\end{definition}


\subsection*{Proving Upper Bounds in Theorem~\ref{thm:mainTheorem}}
Prior to this work, it was known that $\gamma_{\op 2}(\counting) = c_n+ \frac{\log(n)}{\pi}$ for some $c_n \in (1/2,1)$. By construction, NSR ensures that $\|\widetilde{C}\|_{1\to 2}=1$. Therefore, to bound $\gamma_{\op 2}(\counting)$ and $\gamma_{\op F}(\counting)$, all we need is to compute the $j$-th row norm of $\widetilde B$ for $1\leq j \leq n$. Let $\widetilde{B}_{j,:}$ denote the $j$-th row of $\widetilde{B}$. We aim to show that
\begin{align}
\max_j\norm{\widetilde{B}_{j,:}}_2 \leq   \frac{1}{\pi} (\gamma+ \log(8)) + \frac{\log(n)}{ \pi} +o(1). 
\label {eq:boundwidetildeB_j_norm}   
\end{align}

A crucial aspect in proving bounds in previous works on explicit factorizations is that the entries of the factors could be controlled explicitly. For instance, in the square root factorization $\counting=C^2$, the matrix $C$ is lower triangular Toeplitz with generating function of $(1-x)^{-1/2}=r_0+r_1x+\cdots$, and its diagonal entries are given by the coefficients $r_i$. By Wallis’ inequality (Lemma~\ref{lem:WallisInequality}), these coefficients admit tight bounds, which in turn yield bounds on $\gamma_{\op 2}(\counting)$ and $\gamma_{\op F}(\counting)$. Similarly, the group algebra factorization admits an explicit representation for each entry (Lemma~\ref{lem:dft_decomposition}). {Once we have a bound on every entry of the factors, estimating both $\gamma_{\op 2}(\counting)$ and $\gamma_{\op F}(\counting)$ in the case of square root factorization~\cite{bennett1977schur, henzinger2022constant} and group algebra factorization~\cite{henzinger2025improved} is straightforward due to their structural property: the $\ell_2$ norm of rows forms an increasing sequence in square-root factorization and they are equal for the group algebra factorization}

We use the same approach and first bound the entries of the matrix $\widetilde{B}$ (denoted by $\widetilde{B}_{jk}$ for $(j,k)$-entry of $\widetilde{B}$). However, unlike the square root factorization and the group algebra factorization, the normalized square root factorization does not have the norms of rows forming a monotonically non-decreasing sequence. Therefore, we need to find the row that has the maximal norm. 

\subsubsection*{\underline{Upper Bound on Entries of the Matrix $B$}} 
Since $C=\counting^{1/2}$, the diagonal entries of $C^{-1}$ can be expressed via the generalized binomial expansion of $\sqrt{1-x}=\widetilde r_0+\widetilde r_1x+\cdots$. Let $d_i=\norm{C_{:,i}}_2$ ($1\leq i \leq n$) be the diagonal entries of $D$. Since $\widetilde{B}=\counting DC^{-1}$, we obtain
\[
 \widetilde{B}_{jk}=\sum_{t=0}^{j-k} \widetilde r_t d_{k+t} \quad \text{for}\quad 1 \leq j,k \leq n, \quad \text{where}\quad d^2_i = \sum\limits_{t = 0}^{n - i}r_t^2.
\]

The key idea is to reorganize the sum by exploiting structure in the coefficients $\widetilde r_t$. Since these are precisely the entries of $C^{-1}$ and $\counting C^{-1}=C$ {(since $C$ is the square root factorization of $\counting$)}, we have the identity $\sum_{t=0}^j \widetilde{r}_t = r_j$. Applying this via summation by parts allows us to rewrite $\widetilde B_{jk}$ in terms of differences of consecutive $d$’s. Moreover, the sequence $(d_i)_{i \geq 1}$ satisfies $d_j^2 - d_{j+1}^2 = r_{n-j}^2$, 
which lets us further express these differences explicitly. Combining both ingredients gives

\begin{equation} 
\widetilde{B}_{jk}
   = \sum_{t=0}^{j-k} \widetilde r_t d_{k+t} = d_j r_{j - k} + \sum_{t = 0}^{j - k - 1} (d_{t + k} - d_{t + k + 1}) r_t = d_j r_{j - k} + \sum_{t = 0}^{j - k - 1} \frac{r_{n - k - t}^2}{d_{t + k} + d_{t + k + 1}} r_t.  
\label{eq:B_j_k_introduction}    
\end{equation}

This decomposition is the starting point for our refined analysis: the first term captures the dominant contribution, while the second sum of positive terms can be carefully bounded. This is the challenging part. 
A first natural attempt might be to upper bound the sum in $\widetilde{B}_{jk}$ by lower bounding the denominator, e.g.\ using $d_{t+k+1} + d_{t+k}\geq 2d_j$, {since $d$ is a decreasing sequence.} 

However, this yields the wrong asymptotics ($\log^{3/2}(n)$ instead of $\log(n)/\pi$), illustrating the subtlety of the proof. Standard integral bounds are also ineffective here because of the combination of $r_t$ and $d_{t+k}$ terms. By splitting into two different regimes of $t$, with $t > \left\lfloor \tfrac{2(j - k)}{3}\right\rfloor$, we can eliminate $r_t$ by bounding it with the largest value on the interval. The choice of the constant $\tfrac{2}{3}$ for the split here is important, as this constant should be strictly larger than $\tfrac{1}{2}$; otherwise, the error term would start to dominate asymptotically, and we would not get the correct constant. Carrying out this splitting, the contribution from the regime $t > \lfloor \tfrac{2(j-k)}{3} \rfloor$ can be bounded as
\begin{equation}
    \sum\limits_{t = \left\lfloor \tfrac{2(j - k)}{3}\right\rfloor + 1}^{j - k - 1} \frac{r_{n - k - t}^2}{d_{t + k} + d_{t + k + 1}} r_t
    \le \frac{\sqrt{3/2}}{\pi} \cdot \frac{\sqrt{\log (n - k)} - \sqrt{\log (n - j + 1)}}{\sqrt{n - k}}.
    \label{eq:firstsplit}
\end{equation}

The remainder can be bounded using the rough inequality $d_t + d_{t + k + 1} \ge 2$:
\begin{equation}
    \sum\limits_{t = 0}^{\min\left\{j - k - 1, \left\lfloor \tfrac{2(j - k)}{3}\right\rfloor \right\}} \frac{r_{n - k - t}^2}{d_{t + k} + d_{t + k + 1}} r_t
    \le \frac{1}{\sqrt{n - k}}.
    \label{eq:secondsplit}
\end{equation}

Then, an upper bound on $\sum_k\widetilde{B}_{jk}^2$ introduces six terms, five of which involves either eq.~\eqref{eq:firstsplit} or eq.~\eqref{eq:secondsplit}. In Section~\ref{sec:our_factorization}, we show that the effect of these five terms is dominated by the leading term $\sum_{k} d_j^2r_{j-k}^2$. 
The intuition behind this is that, terms involving  eq.~\eqref{eq:firstsplit}, when we square the terms and sum over $k$, the first bound contribute a $\tfrac{3}{4\pi^2}\log^2 n$ term when $j$ is close to $n$, but the constant would be worse than $\tfrac{1}{\pi^2}$, which will later be shown to be optimal. The second bound can contribute a $\log n$ term when $j$ is close to $n$, but when $n - j$ is large, they contribute only $o(\log n)$, which is roughly the behavior we wish to achieve. We prove this rigorously in Section~\ref{sec:our_factorization}. 

\subsubsection*{\underline{Finding row $j$ with maximal row norm}} 
In eq.~\eqref{eq:firstsplit} and eq.~\eqref{eq:secondsplit}, we have bounded the additional sum in \eqref{eq:B_j_k_introduction} and shown that its contribution is negligible in the regime of $j$ where we expect the optimum; we now analyze the main term $d_j r_{j-k}$.
For the {leading term, $d_j r_{j-k}$, in eq.~\eqref{eq:B_j_k_introduction}}, from~\cite{bennett1977schur, henzinger2022constant} we know that $d_j \sim \sqrt{\log(n - j + 1)/\pi}$, and $r_{j - k}$ satisfies Wallis' inequality (Lemma~\ref{lem:WallisInequality}). Hence, the individual values can be tightly bounded. Moreover, the sum of their squares, corresponding to the $j$-th squared row norm of the matrix $\tilde{B}$, admits a convenient closed form { (note that we only need to worry about the leading term as we have shown the other terms are strictly dominated by it)}:
\[
    \sum_{k = 1}^{j} (d_j r_{j - k})^2 
      = d_j^2 \sum_{k = 1}^{j} r_{j - k}^2 
      = d_j^2 d_{n - j + 1}^2.
\]

To show improved constants, however, we need a more precise estimate of $d_j$, including the next asymptotic term. We prove that there exists a constant $\alpha_{\infty} \approx 1.0663$ such that
\begin{equation}
    d_{n - j + 1}^2 = \alpha_{\infty} + \frac{\log(j)}{\pi} - \epsilon_{j}, \quad \text{where} \quad 0 < \epsilon_{j} \le \tfrac{1}{5j}.
\end{equation}

Consequently, the contribution of the main term in \eqref{eq:B_j_k_introduction} to the squared MaxSE error is
\begin{equation}
    d_j^2 d_{n - j + 1}^2 
       = \frac{\log(n - j + 1)\log(j)}{\pi^2} 
         + \frac{2\alpha_{\infty}}{\pi} \log(n) 
         + o(\log(n)).
\end{equation}

The expression $\log(n - j + 1)\log(j)$ achieves its maximum when $j \sim n/2$, giving
\begin{equation}
    \max_j d_j^2 d_{n - j + 1}^2 
       = \frac{\log^2(n)}{\pi^2} 
         + \frac{2(\alpha_{\infty} - \log(2)/\pi)}{\pi} \log(n) 
         + o(\log(n)).
\end{equation}

Note that the constant $\alpha_{\infty}$ had already been computed in an unrelated context in mathematical analysis, namely the study of Landau constants (also see Section~\ref{sec:landauConstant}). In fact, Watson \cite{watson1930constants} showed that
\[
\alpha_{\infty} = \frac{\gamma + \log(16)}{\pi},
\]
where $\gamma$ is Euler’s–Mascheroni constant. Verifying that the error term is negligible in the regime $j \sim n/2$ and taking the square root completes the proof of the MaxSE bound.

In short, the key takeaways from our proof can be distilled into three points:
\begin{enumerate}
    \item The first term in eq.~\eqref{eq:B_j_k_introduction} dominates the sum.  
    \item A precise asymptotic for the values $d_j$ is required.  
    \item The row achieving the maximum $\ell_2$ norm lies at $j \sim n/2$.  
\end{enumerate}

While the first two points are essential for bounding both $\gamma_{\op 2}(\counting)$ and $\gamma_{\op F}(\counting)$, the third point is crucial specifically for $\gamma_{\op 2}(\counting)$. Proving (and applying) points (1) and (3) requires special care, since the non-dominant term also has $\Theta(\log n)$ growth, albeit with a smaller constant than $1/\pi$. A loose analysis can therefore lead to incorrect values both in the leading term $\log n$ and in the constant term we aim to improve. For the $\gamma_{\op F}(\counting)$ bound, we need to compute the average of the squared row norms. The main challenge lies in determining the constant in the leading term, while the remaining terms contribute only $o(\log n)$, which does not affect the asymptotics.\\

The proofs of the known factorizations can be found in Sections \ref{sec:square_root_factorization} and \ref{sub:group_algebra}. The square root factorization \eqref{eq:HUUBound} follows from the analysis of $d_{n - j+1}^2$, considering both the maximum and the average over $j$. The proof of the group algebra factorization \eqref{eq:HUBound} uses the same technique as the lower-bound proof presented in this paper. In particular, we need to average the sequence $\frac{1}{\sin(j\pi/n)}$, which diverges. After subtracting a harmonic sum, the remainder takes the form of a Riemann sum, and its limit is identified with the corresponding integral, which we compute explicitly.

\section{Proof of Lower Bounds in Theorem~\ref{thm:mainTheorem}}
We first state a key lemma (proved at the end of the section) required in our lower bound proof. 

\begin{lemma}
\label{lem:boundonG(n)}
Let $\gamma \approx 0.57$ denote the Euler-Mascheroni's constant. For $n \in \mathbb N$,  
\begin{align}
G(n) := \frac{1}{n} \sum\limits_{\ell = 1}^{n - 1} \frac{1}{\sin\left(\tfrac{\pi \ell}{n}\right)}  = \frac{2}{\pi} \paren{\log(n) + \gamma + \log\paren{\frac{2}{\pi}}} + o(1).
\label{eq:boundonG(n)}    
\end{align} 
\end{lemma}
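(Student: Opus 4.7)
My plan is to isolate the two endpoint singularities of $\csc(\pi x)$ at $x=0$ and $x=1$, and treat the remainder as a well-behaved Riemann sum. Concretely, I set
\begin{equation*}
r(x) := \csc(\pi x) - \frac{1}{\pi x} - \frac{1}{\pi(1-x)}.
\end{equation*}
The Taylor expansion $\csc(\pi x) = \tfrac{1}{\pi x} + \tfrac{\pi x}{6} + O(x^3)$ near $0$ (and its mirror near $1$) shows that $r$ extends to a continuous (in fact smooth) function on $[0,1]$ with $r(0) = r(1) = -1/\pi$. With this decomposition,
\begin{equation*}
G(n) = \frac{1}{n}\sum_{\ell=1}^{n-1}\left[\frac{n}{\pi \ell} + \frac{n}{\pi(n-\ell)}\right] + \frac{1}{n}\sum_{\ell=1}^{n-1} r\bigl(\tfrac{\ell}{n}\bigr) = \frac{2}{\pi}\, H_{n-1} + \frac{1}{n}\sum_{\ell=1}^{n-1} r\bigl(\tfrac{\ell}{n}\bigr),
\end{equation*}
using the substitution $k = n - \ell$ in the second copy of the singular part.

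The first term equals $\tfrac{2}{\pi}(\log n + \gamma) + O(1/n)$ by the standard asymptotic for the harmonic numbers. The second is a Riemann sum for a continuous function on $[0,1]$, so it converges to $\int_0^1 r(x)\,dx$ with $o(1)$ error. It then remains to compute the integral in closed form. Since each of the three pieces of $r$ has a divergent integral at an endpoint, I would write $\int_0^1 r = \lim_{\epsilon\to 0}\int_\epsilon^{1-\epsilon} r$. Using the antiderivative $\int\csc(\pi x)\,dx = \tfrac{1}{\pi}\log|\tan(\pi x/2)|$ together with $\tan(\pi/2 - u) = \cot u$ gives $\int_\epsilon^{1-\epsilon}\csc(\pi x)\,dx = -\tfrac{2}{\pi}\log\tan(\pi\epsilon/2)$, while $\int_\epsilon^{1-\epsilon}\bigl[\tfrac{1}{\pi x}+\tfrac{1}{\pi(1-x)}\bigr]dx = \tfrac{2}{\pi}\log\tfrac{1-\epsilon}{\epsilon}$. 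Taking $\epsilon \to 0$ with $\tan(\pi\epsilon/2)\sim \pi\epsilon/2$ yields $\int_0^1 r(x)\,dx = \tfrac{2}{\pi}\log(2/\pi)$, and summing the three contributions delivers the claimed asymptotic.

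The main delicacy lies in the integral evaluation: each summand of $r$ has a divergent antiderivative at an endpoint, so the three pieces must be combined on $[\epsilon,1-\epsilon]$ before taking $\epsilon\to 0$. Once this limit is handled correctly, everything else, namely the harmonic-number expansion of $H_{n-1}$ and the convergence of the Riemann sum for a continuous function, is completely standard, and the whole argument reduces to bookkeeping around the single integral computation.
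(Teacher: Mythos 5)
Your proposal is correct and takes a genuinely different, slightly cleaner route than the paper. The paper first passes to $G(2n)$ to exploit the symmetry $\sin(\pi\ell/(2n)) = \sin(\pi(2n-\ell)/(2n))$, which folds the sum onto $[0,1/2]$ where only the $x=0$ singularity survives; they then subtract only $\tfrac{1}{\pi x}$, obtain a Riemann sum converging to $\tfrac{2}{\pi}\int_0^{\pi/2}\bigl[\tfrac{1}{\sin x}-\tfrac{1}{x}\bigr]\,\mathsf{d}x = \tfrac{2}{\pi}\log(4/\pi)$, and finally substitute $n/2$ for $n$, noting separately that $G(n) = G(2\lfloor n/2\rfloor)+o(1)$ handles odd $n$. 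You instead subtract both endpoint singularities, $\tfrac{1}{\pi x}$ and $\tfrac{1}{\pi(1-x)}$, producing a single function $r$ that is continuous on all of $[0,1]$, and work with $G(n)$ directly with no parity reduction. This avoids the even/odd bookkeeping entirely and makes the Riemann-sum convergence a textbook application for a continuous function on a compact interval, at the cost of a slightly longer regularized-integral computation on $[\epsilon, 1-\epsilon]$ (which you handle correctly by cancelling the log-divergences before letting $\epsilon\to 0$, using $\tan(\pi(1-\epsilon)/2) = \cot(\pi\epsilon/2)$). Both deliver the same constant $\tfrac{2}{\pi}\log(2/\pi)$: in the paper this emerges as $-\tfrac{2\log 2}{\pi}+\tfrac{2}{\pi}\log(4/\pi)$ after the $n\mapsto n/2$ substitution, whereas in your argument it appears directly as $\int_0^1 r$. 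The two decompositions are interchangeable; yours is marginally more self-contained.
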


\subsection{Improving the Lower Bounds}
Now we return to our improvement on the lower bounds. 
The best known lower bound on $\gamma_{\op 2}(\counting)$ is by  Matousek et al. \cite{matouvsek2020factorization} in terms of cosecant function:

\begin{equation*}
    \gamma_{\op 2}(\counting) \ge \frac{\|\counting\|_{*}}{n} = \frac{1}{2n} \sum\limits_{j =1}^{n} \frac{1}{\sin \left(\tfrac{(2j - 1)\pi}{4n + 2}\right)}.
\end{equation*}

The constant in the lower bound above was calculated by Dvijotham et al.~\cite{dvijotham2024efficient}:
\begin{equation*}
    \frac{1}{2n} \sum\limits_{j =1}^{n} \frac{1}{\sin \left(\tfrac{(2j - 1)\pi}{4n + 2}\right)} \ge \frac{\log \left(\tfrac{2n + 1}{3}\right) + 2}{\pi} = \frac{\log (n)}{\pi} + \underbrace{\frac{2 + \log \left(\tfrac{2}{3}\right)}{\pi}}_{\approx 0.507} + o(1).
\end{equation*}

We compute the lower bound more precisely and improve the known constant term as follows:

\begin{lemma}
\label{lem:max_se_lower_bound}
    The optimal value for the $\gamma_{\op 2}(\counting)$ of the prefix sum matrix $\counting$ satisfies the following lower bound:
\begin{equation*}
    \gamma_{\op 2}(\counting) \ge \frac{\|\counting\|_{*}}{n} = \frac{\log (n)}{\pi} + \underbrace{\frac{1}{ \pi}\paren{\log \left( \frac{16}{\pi}\right) + \gamma}}_{\approx 0.701} + o(1).
\end{equation*}
\end{lemma}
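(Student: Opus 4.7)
The plan is to begin from the identity $\|\counting\|_*/n = \frac{1}{2n}\sum_{j=1}^n 1/\sin((2j-1)\pi/(4n+2))$ stated just above the lemma, and to evaluate this odd-index cosecant sum asymptotically by reducing it to two instances of the averaged sum $G(\cdot)$ from Lemma~\ref{lem:boundonG(n)}. The key observation is that, setting $m = 2n+1$, the sum $(4n+2)\,G(4n+2) = \sum_{\ell=1}^{4n+1} 1/\sin(\pi\ell/(4n+2))$ splits according to the parity of $\ell$. The even indices $\ell = 2k$ give $\sin(\pi k/(2n+1))$ and so contribute exactly $(2n+1)\,G(2n+1)$, while the odd indices $\ell = 2j-1$ for $j = 1, \dots, 2n+1$ produce precisely the cosecants appearing in $\|\counting\|_*/n$.

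A symmetry $j \leftrightarrow 2n+2-j$, based on $\sin((2j-1)\pi/(4n+2)) = \sin((4n+3-2j)\pi/(4n+2))$, together with the fact that the middle index $j = n+1$ contributes $1/\sin(\pi/2) = 1$, yields
\begin{equation*}
\sum_{j=1}^{2n+1} \frac{1}{\sin\!\left(\tfrac{(2j-1)\pi}{4n+2}\right)} \;=\; 2\sum_{j=1}^{n} \frac{1}{\sin\!\left(\tfrac{(2j-1)\pi}{4n+2}\right)} + 1.
\end{equation*}
Combining this with the even/odd split produces the exact identity
\begin{equation*}
\frac{1}{2n}\sum_{j=1}^{n} \frac{1}{\sin\!\left(\tfrac{(2j-1)\pi}{4n+2}\right)} \;=\; \frac{(4n+2)\,G(4n+2) \;-\; (2n+1)\,G(2n+1) \;-\; 1}{4n}.
\end{equation*}

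Next, I would apply Lemma~\ref{lem:boundonG(n)} to both $G(4n+2)$ and $G(2n+1)$. The leading logarithmic pieces combine via the telescoping identity $(4n+2)\log(4n+2) - (2n+1)\log(2n+1) = (2n+1)\bigl(2\log 2 + \log(2n+1)\bigr)$, while the constant contributions survive with prefactor $(2n+1) \cdot \tfrac{2}{\pi}(\gamma + \log(2/\pi))$. After dividing by $4n$ and using $\log(2n+1) = \log n + \log 2 + o(1)$ together with $(2n+1)/(2n) = 1 + O(1/n)$, all additive constants merge into $\tfrac{1}{\pi}\bigl(\gamma + 3\log 2 + \log(2/\pi)\bigr) = \tfrac{1}{\pi}\bigl(\gamma + \log(16/\pi)\bigr)$, giving the claimed bound.

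The main obstacle is the careful bookkeeping of error terms. Lemma~\ref{lem:boundonG(n)} only provides $G(N) = \tfrac{2}{\pi}(\log N + \gamma + \log(2/\pi)) + o(1)$; after multiplying by $N \in \{4n+2, 2n+1\}$ the $o(1)$ inflates to $o(n)$, and this collapses back to $o(1)$ only after the final division by $4n$. One therefore has to verify that the additive $-1/(4n)$ from the middle-pair adjustment and the prefactor $(2n+1)/(2n) = 1 + O(1/n)$ each contribute only to the $o(1)$ remainder, and that the $\Theta(n)$ constant contributions from $G(4n+2)$ and $G(2n+1)$ cancel in exactly the right way so that nothing leaks into the leading $\log(n)/\pi$ coefficient.
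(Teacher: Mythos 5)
Your proposal is correct. Let me verify the key identity: writing $(4n+2)\,G(4n+2) = \sum_{\ell=1}^{4n+1} \csc\bigl(\tfrac{\pi\ell}{4n+2}\bigr)$ and splitting by parity, the even indices $\ell = 2k$ contribute exactly $\sum_{k=1}^{2n}\csc\bigl(\tfrac{\pi k}{2n+1}\bigr) = (2n+1)\,G(2n+1)$, and the odd part, after folding across the midpoint $j = n+1$ where $\sin(\pi/2)=1$, gives $2\sum_{j=1}^n\csc\bigl(\tfrac{(2j-1)\pi}{4n+2}\bigr) + 1$. Applying Lemma~\ref{lem:boundonG(n)} to both $G$-values, the telescoping $(4n+2)\log(4n+2) - (2n+1)\log(2n+1) = (2n+1)(2\log 2 + \log(2n+1))$ and the final constant $\tfrac{1}{\pi}(\gamma + 3\log 2 + \log(2/\pi)) = \tfrac{1}{\pi}(\gamma + \log(16/\pi))$ both check out, and the $o(n)$ remainders from $(4n+2)\cdot o(1)$ and $(2n+1)\cdot o(1)$ do collapse to $o(1)$ after the division by $4n$.

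Your route is genuinely different from the paper's in one key respect. The paper first replaces the denominator $4n+2$ by $4n$ via the continuity estimate $\sin(x+o(1)) = \sin(x) + o(1)$, then performs an all-minus-even parity split to write $\tfrac{1}{2n}\sum_{j=1}^n \csc\bigl(\tfrac{(2j-1)\pi}{4n}\bigr) = F(2n) - \tfrac12 F(n)$ for a fresh auxiliary function $F$, and then essentially re-runs the Riemann-sum argument from Lemma~\ref{lem:boundonG(n)} on $F$. You instead work with the exact argument $4n+2$ throughout, perform the parity split at the level of $G(4n+2)$ itself, and use Lemma~\ref{lem:boundonG(n)} twice as a black box. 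What you buy is an exact algebraic decomposition (no approximation of the sine argument, no auxiliary $F$ to set up), at the cost of a reflection step around the middle index $j = n+1$ (which the paper avoids because in the $4n$ version the odd indices up to $2n$ already coincide with $j=1,\dots,n$) and slightly more bookkeeping with the $(2n+1)/(2n)$ prefactor and the $-1/(4n)$ correction — all of which, as you correctly note, fall into the $o(1)$ remainder.
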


\begin{proof}
Since for $x \in (0, \pi)$ we have $\sin(x  + o(1)) = \sin(x)\cos(o(1)) +\cos(x)\sin(o(1)) = \sin(x) + o(1)$, we begin with the following asymptotic expansion:
\begin{align*}
    \frac{\|\counting\|_{*}}{n} = 
    \frac{1}{2n} \sum\limits_{j =1}^{n} \frac{1}{\sin \left(\tfrac{(2j - 1)\pi}{4n + 2}\right)} = \frac{1}{2n} \sum\limits_{j =1}^{n} \frac{1}{\sin \left(\tfrac{(2j - 1)\pi}{4n}\right)} + o(1),
\end{align*}
where we used that the average of $o(1)$ elements is also $o(1)$. 

With this adjusted sum, we concentrate on the summation 
\begin{equation*}
    \frac{1}{2n} \sum\limits_{j =1}^{n} \frac{1}{\sin \left(\tfrac{(2j - 1)\pi}{4n}\right)} = \frac{1}{2n} \sum\limits_{j =1}^{2n} \frac{1}{\sin \left(\tfrac{j\pi}{4n}\right)} - \frac{1}{2n} \sum\limits_{j =1}^{n} \frac{1}{\sin \left(\tfrac{2j\pi}{4n}\right)} = F(2n) - \tfrac{1}{2}F(n),
\end{equation*}
where $F(n)$ is defined as the sum of cosecant functions:

\begin{equation*}
    F(n) := \frac{1}{n} \sum\limits_{j =1}^{n} \frac{1}{\sin \left(\frac{j}{n}\cdot \frac{\pi}{2}\right)}.
\end{equation*}

We compute $F(n)$ as follows:

\begin{equation*}
    F(n) - \frac{H_n}{\pi/2} = \frac{1}{n} \sum\limits_{j = 1}^{n} \left[\frac{1}{\sin \left(\frac{j}{n}\cdot \frac{\pi}{2}\right)} - \frac{1}{\frac{j}{n}\cdot \frac{\pi}{2}}\right] = \frac{2}{\pi}\log\left(\frac{4}{\pi}\right) + o(1),
\end{equation*}
where the sum is evaluated in equation~\eqref{eq:integral_equation} of Lemma~\ref{lem:boundonG(n)}. For the value of $H_n$, we use the standard expansion $H_n = \log (n) + \gamma + o(1)$, where $\gamma \approx 0.57721$ is the Euler–Mascheroni constant. Thus:

\begin{align*}
    F(2n) - \tfrac{1}{2}F(n) &= \frac{\log (2n) + \gamma + \log\left(\frac{4}{\pi}\right)+  o(1)}{\pi/2} + o(1) - \frac{1}{2}\left[\frac{\log (n) + \gamma + \log\left(\frac{4}{\pi}\right) + o(1)}{\pi/2} +  o(1)\right] \\
    &= \frac{\log (n)}{\pi} + {\frac{1}{\pi}\paren{\log \left(\frac{16}{\pi}\right) + \gamma}} + o(1),
\end{align*}
concluding the proof of Lemma~\ref{lem:max_se_lower_bound}.
\end{proof}

For the $\gamma_{\op F}(\counting)$, it was shown in Henzinger et al. \cite{henzinger2023almost} that

\begin{equation*} 
    \gamma_{\op F}(\counting) \ge \frac{1}{\pi} \left(2 + \log \left(\frac{2n + 1}{5}\right) + \frac{\log(2n + 1)}{2n}\right) = \frac{\log (n)}{\pi} + \underbrace{\frac{2 + \log(2/5)}{\pi}}_{\approx 0.344} + o(1).
\end{equation*}

Recall that Henzinger et al.~\cite{henzinger2023almost} achieved their bound by computing a lower bound on $\norm{A}_*$. Therefore, as a direct corollary of Lemma~\ref{lem:max_se_lower_bound}, we have also improved their bound:

\begin{corollary}
For a matrix $\counting \in \{0,1\}^{n\times n}$ defined by eq.~\eqref{eq:countingMatrix}
\begin{equation*}
    \gamma_{\op F}(\counting) \ge \frac{\|\counting\|_{*}}{n} = \frac{\log (n)}{\pi} + \underbrace{\frac{1}{ \pi}\paren{\log \left(\frac{16}{\pi}\right) + \gamma}}_{\approx 0.701} + o(1).
\end{equation*}
\end{corollary}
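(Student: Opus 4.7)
The plan is to reduce the corollary to Lemma~\ref{lem:max_se_lower_bound} by establishing the standard nuclear-norm lower bound $\gamma_{\op F}(A) \ge \|A\|_{*}/n$ for arbitrary square matrices $A$, after which the claimed asymptotic expansion follows immediately by substitution.

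For the key inequality, I would fix an arbitrary factorization $\counting = LR$ with $L,R \in \mathbb{R}^{n\times n}$ and chain two elementary estimates. First, $\|LR\|_{*} \le \|L\|_{\op F}\|R\|_{\op F}$, a consequence of von Neumann's trace inequality (equivalently Cauchy--Schwarz in Schatten norms applied to $\op{tr}(U^\top LR)$ over orthogonal $U$). Second, since $\|R\|_{\op F}^2 = \sum_{i=1}^n \|R_{:,i}\|_2^2 \le n\cdot\max_i\|R_{:,i}\|_2^2 = n\,\|R\|_\col^2$, one obtains $\|R\|_{\op F} \le \sqrt{n}\,\|R\|_\col$. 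Combining these two inequalities yields $\|\counting\|_{*} \le \sqrt{n}\,\|L\|_{\op F}\|R\|_\col$, equivalently $\tfrac{\|\counting\|_{*}}{n} \le \tfrac{1}{\sqrt{n}}\|L\|_{\op F}\|R\|_\col$. Taking the infimum of the right-hand side over all factorizations $\counting = LR$ recovers exactly $\gamma_{\op F}(\counting) \ge \|\counting\|_{*}/n$.

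With that inequality in hand, the corollary is one line: plug in the exact asymptotic $\|\counting\|_{*}/n = \tfrac{\log n}{\pi} + \tfrac{1}{\pi}\bigl(\log(16/\pi) + \gamma\bigr) + o(1)$ computed in Lemma~\ref{lem:max_se_lower_bound}. No further computation is needed.

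I do not anticipate any real obstacle. The first step is a standard two-line estimate that is essentially already implicit in~\cite{henzinger2023almost}; indeed, the paragraph preceding the statement explicitly notes that those authors proved their $\gamma_{\op F}$ lower bound via $\|\counting\|_{*}$, so one could alternatively quote their inequality directly and simply substitute our sharper value of $\|\counting\|_{*}/n$. The only care needed is to check that the two bounds compose in the direction we want, namely $\|R\|_{\op F} \le \sqrt{n}\,\|R\|_\col$ rather than the reverse, which is immediate from the column-norm identity for $\|R\|_{\op F}^2$.
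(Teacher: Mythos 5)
Your proposal is correct and takes essentially the same route as the paper: the paper proves the corollary by noting that the bound in Henzinger et al.~\cite{henzinger2023almost} proceeds via $\gamma_{\op F}(\counting)\ge\|\counting\|_*/n$ and then substituting the sharper asymptotic for $\|\counting\|_*/n$ from Lemma~\ref{lem:max_se_lower_bound}. You simply flesh out the standard nuclear-norm inequality (via von Neumann's trace inequality and $\|R\|_{\op F}\le\sqrt{n}\,\|R\|_\col$) instead of citing it, which is a valid and self-contained way to do the same thing.
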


For completeness, we also analyze the bound in Mathias~\cite[Corollary 3.5]{mathias1993hadamard}, who showed the following lower bound for the MaxSE error:

\begin{equation*}
    \gamma_{\op 2}(\counting) \ge \frac{n +1}{2n^2}\sum\limits_{j = 1}^{n}\frac{1}{\sin \left(\tfrac{(2j - 1)\pi}{2n}\right)}
\end{equation*}

We compute the leading terms in the following lemma:

\begin{lemma}
\label{lem:LB_Mathias}
The lower bound presented in Mathias~\cite[Corollary 3.5]{mathias1993hadamard} satisfy:
\begin{equation*}
    \gamma_{\op 2}(\counting) \ge \frac{n +1}{2n^2}\sum\limits_{j = 1}^{n}\frac{1}{\sin \left(\tfrac{(2j - 1)\pi}{2n}\right)} = \frac{\log (n)}{\pi} + \underbrace{\frac{\log \paren{\tfrac{8}{\pi}} + \gamma}{\pi}}_{\approx 0.481} + o(1)
\end{equation*}
\end{lemma}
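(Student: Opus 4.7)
The plan is to mirror the structure of the proof of Lemma~\ref{lem:max_se_lower_bound}, but this time leverage Lemma~\ref{lem:boundonG(n)} (the asymptotic for $G(n)$) directly rather than introducing a new auxiliary sum. The key observation is that the indices $2j-1$ for $j=1,\dots,n$ are precisely the odd integers in $\{1,\dots,2n-1\}$, so the ``odd-index'' cosecant sum can be written as a difference between a full cosecant sum of size $2n$ and an even-index one of size $n$.

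Concretely, I would first write
\begin{equation*}
   \sum_{j=1}^{n}\frac{1}{\sin\!\left(\tfrac{(2j-1)\pi}{2n}\right)}
   \;=\; \sum_{k=1}^{2n-1}\frac{1}{\sin\!\left(\tfrac{k\pi}{2n}\right)}
         \;-\; \sum_{\ell=1}^{n-1}\frac{1}{\sin\!\left(\tfrac{\ell\pi}{n}\right)}
   \;=\; 2n\,G(2n) \;-\; n\,G(n),
\end{equation*}
using the definition of $G$ in Lemma~\ref{lem:boundonG(n)}. Dividing by $2n$ gives the clean identity
\begin{equation*}
   \frac{1}{2n}\sum_{j=1}^{n}\frac{1}{\sin\!\left(\tfrac{(2j-1)\pi}{2n}\right)}
   \;=\; G(2n) - \tfrac12 G(n).
\end{equation*}

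Next, I would plug in the asymptotic from Lemma~\ref{lem:boundonG(n)}: writing $G(n) = \tfrac{2}{\pi}(\log n + \gamma + \log(2/\pi)) + o(1)$ and $G(2n) = \tfrac{2}{\pi}(\log n + \log 2 + \gamma + \log(2/\pi)) + o(1)$, the subtraction halves the $\log n$ coefficient and halves the constant, so
\begin{equation*}
   G(2n) - \tfrac12 G(n)
   \;=\; \frac{\log n}{\pi} + \frac{\gamma + \log 4 + \log(2/\pi)}{\pi} + o(1)
   \;=\; \frac{\log n}{\pi} + \frac{\gamma + \log(8/\pi)}{\pi} + o(1),
\end{equation*}
which is exactly the target constant $\approx 0.481$.

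Finally, the prefactor in Mathias's bound is $\frac{n+1}{2n^2} = \frac{1}{2n}\bigl(1 + \tfrac{1}{n}\bigr)$, so multiplying the display above by $1 + 1/n$ contributes only $\frac{1}{n}\cdot O(\log n) = o(1)$, preserving both the leading term $\tfrac{\log n}{\pi}$ and the constant $\tfrac{1}{\pi}(\gamma+\log(8/\pi))$. No step is really an obstacle here: once the odd-index sum is rewritten as $2n\,G(2n)-n\,G(n)$, everything reduces to the already-established asymptotic expansion of $G$, and the only care needed is checking that the trivial perturbation $1+1/n$ and the replacement of $\sin$ arguments with their $o(1)$-perturbed forms (as in the proof of Lemma~\ref{lem:max_se_lower_bound}) both contribute only $o(1)$ errors.
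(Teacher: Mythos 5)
Your proof is correct and follows essentially the same route as the paper: rewrite $\frac{1}{2n}\sum_{j=1}^{n}\tfrac{1}{\sin((2j-1)\pi/(2n))}$ as $G(2n)-\tfrac12 G(n)$ (exactly the odd-minus-even decomposition you give, which the paper spells out in the proof of Lemma~\ref{lem:group_algebra_max_se}), invoke Lemma~\ref{lem:boundonG(n)}, and absorb the prefactor $1+\tfrac1n$ into the $o(1)$. The arithmetic $\log 4 + \log(2/\pi) = \log(8/\pi)$ and the verification that $(1+1/n)\cdot\tfrac{\log n}{\pi}$ contributes only $o(1)$ are both carried out correctly.
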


\begin{proof}
We first factor the expression as
\[
    \frac{n+1}{2n^2} \sum_{j=1}^{n}\frac{1}{\sin \!\left(\tfrac{(2j-1)\pi}{2n}\right)}
    = \Bigl(1 + \tfrac{1}{n}\Bigr)\Biggl[\frac{1}{2n}\sum_{j=1}^{n}\frac{1}{\sin\!\left(\tfrac{(2j-1)\pi}{2n}\right)}\Biggr] =\Bigl(1 + \tfrac{1}{n}\Bigr) \left[G(2n) - \tfrac{1}{2}G(n)\right],
\]
where $G(n) = \tfrac{1}{n}\sum_{j=1}^{n-1}\tfrac{1}{\sin(\pi j/n)}$.  
By Lemma~\ref{lem:boundonG(n)}, we conclude that
\[
    \frac{n+1}{2n^2} \sum_{j=1}^{n}\frac{1}{\sin\!\left(\tfrac{(2j-1)\pi}{2n}\right)}
    = \Bigl(1+\tfrac{1}{n}\Bigr)\left[\frac{\log n}{\pi} + \frac{\gamma + \log(8/\pi)}{\pi} + o(1)\right].
\]
Since $\tfrac{\log n}{n} = o(1)$, the proof of Lemma~\ref{lem:LB_Mathias} is complete.
\end{proof}

\subsection{Proof of Lemma~\ref{lem:boundonG(n)}}

For convenience, we analyze $G(2n)$ since it has an even argument. The analysis for $G(n)$ with odd $n$ is analogous, up to $o(1)$ terms caused by rounding, in particular $G(n) = G(2\lfloor \tfrac{n}{2} \rfloor) + o(1)$. 

\begin{equation*}
    G(2n) - \frac{2H_{n - 1}}{\pi} = \frac{1}{2n} \sum\limits_{\ell = 1}^{2n - 1} \frac{1}{\sin\left(\tfrac{\pi \ell}{2n}\right)} - \frac{2H_{n - 1}}{\pi} = \frac{1}{2n} +  \frac{1}{n} \sum\limits_{\ell = 1}^{n - 1} \left[\frac{1}{\sin\left(\tfrac{\pi \ell}{2n}\right)} - \frac{1}{\tfrac{\pi \ell}{2n}}\right].
\end{equation*}

This is a Riemann sum, as we split the interval $[0, 1]$ into smaller segments of the form $\frac{\ell}{n}$, the value at $l = 0$ goes to $0$ by continuity. Thus
\begin{equation}
\label{eq:integral_equation}
    \frac{1}{n} \sum\limits_{\ell = 1}^{n - 1} \left[\frac{1}{\sin\left(\tfrac{\pi \ell}{2n}\right)} - \frac{1}{\tfrac{\pi \ell}{2n}}\right] \xrightarrow{} \int\limits_{0}^1 \left[\frac{1}{\sin(\pi y/2)} - \frac{1}{\pi y/2}\right]\, \mathsf dy = \frac{2}{\pi} \int\limits_{0}^{\pi / 2} \left[\frac{1}{\sin x} - \frac{1}{x}\right]\, \mathsf dx
\end{equation}
by the change of variable $x=\pi y/2$, which results in $2 \mathsf dx =\pi \mathsf dy$.  

The integral can be computed explicitly and gives
\begin{equation*}
    \int\limits_{0}^{\pi / 2} \left[\frac{1}{\sin x} - \frac{1}{x}\right]\, \mathsf dx = \left[\log \left(\tan\left(\tfrac{x}{2}\right)\right) - \log (x)\right]\bigg|_{0}^{\pi/2} = \log \left(\frac{2}{\pi}\right) - \lim\limits_{x \to 0^+} \log \left(\frac{\tan(x/2)}{x}\right) = \log \left(\frac{4}{\pi}\right).
\end{equation*}
Note that the limit is taken from above. Thus,
\begin{equation*}
    G(2n) = \frac{2H_{n - 1}}{\pi} + \frac{2}{\pi} \log \left(\frac{4}{\pi}\right) + o(1) = \frac{2\log (n) + 2\gamma}{\pi} + \frac{2}{\pi} \log \left(\frac{4}{\pi}\right) + o(1).
\end{equation*}

The substitution of $n/2$ instead of $n$ concludes the proof of Lemma~\ref{lem:boundonG(n)}.

\section{Basic Preliminaries}
We consider Gaussian differential privacy as the privacy notion in this paper. Informally speaking, a mechanism is $\mu$-Gaussian differential privacy if for neighboring datasets $D$ and $D'$ and given real-valued function $f$, distinguishing $f(D)$ and $f(D')$ is no easier than distinguishing between the Gaussian distributions $\mathcal N(0,1)$ and $\mathcal N(\mu,1)$. 
\begin{definition}
    [Gaussian Differential Privacy]
\label{defn:DP}
A randomized algorithm $\mathcal M$  is $\mu$-Gaussian differentially private ($\mu$-GDP) if, for any neighboring datasets $D$ and $D'$ (denoted $D \simeq D'$), there is a (possibly randomized) function $g :\mathbb R \to Y$ with
\begin{align*}
    g(z) \stackrel{d}{=} \mathcal M(D) \text{ for }  z \sim \mathcal N(0,1), \quad \text{and} \quad 
    g(z) \stackrel{d}{=} \mathcal M(D') \text{ for }  z \sim \mathcal N(\mu,1),
\end{align*}
where $X \stackrel{d}{=} Y$ means random variables $X$ and $Y$ are identically distributed.
\end{definition}

One of the canonical mechanisms to preserve $\mu$-GDP is the Gaussian mechanism. To define the Gaussian mechanism, we need to define $\ell_2$-sensitivity: 
\[
\Delta_2(f):= \max_{D \simeq D'} \norm{f(D)-f(D')}_2.
\]

\begin{definition}
    [Gaussian mechanism]
    \label{def:gaussian_mech}
    Given a function $f: \mathcal X^* \to \mathbb R^m$ with $\ell_2$-sensitivity $\Delta_2(f)$ and scale multiplier $1/\mu$, set  $\sigma=\tfrac{1}{\mu}\Delta_2(f)$. Then, the Gaussian mechanism on input $D$, outputs $y \gets f(D)+z$, where $z\sim \mathcal N(0,\sigma^2\mathbb I_m)$ is a multivariate Gaussian with every entry picked i.i.d. from $\mathcal N(0,\sigma^2)$. Furthermore, $y$ is $\mu$-GDP.
\end{definition}

\section{Our Factorization and Proof of the Upper Bounds in Theorem~\ref{thm:mainTheorem}}
\label{sec:our_factorization}

For the prefix sum matrix $A$, the square root matrix $C$ is a Lower Triangular Toeplitz (LTT) matrix defined by coefficients $r_j$ on the diagonals for $j \in [0, n - 1]$, where $r_n$ are the coefficients of $\frac{1}{\sqrt{1-x}}$. The following two lemmas tightly bound these coefficients:
\begin{lemma}[Wallis inequality~\cite{chen2005best}]
\label{lem:WallisInequality}
For $n \in \mathbb N$, we have      
    \begin{equation*}
     \frac{1}{\pi (n + 1)}\le \frac{1}{\pi(n + \tfrac{4}{\pi} - 1)} \leq r_n^2 \leq \frac{1}{\pi\paren{n+\frac{1}{4}}} \le \frac{1}{\pi n}, \quad \text{where} \quad r_n :=\begin{cases}
        1 & n = 0 \\
        \frac{1}{4^n}\binom{2n}{n} & n \geq 1
    \end{cases}.
    \end{equation*}
\end{lemma}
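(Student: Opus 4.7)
The plan is to prove both bounds by showing that the two normalized sequences $u_n := \pi(n + \tfrac14)\, r_n^2$ and $\ell_n := \pi(n + \tfrac{4}{\pi} - 1)\, r_n^2$ converge to the common limit $1$, so that $u_n \leq 1 \leq \ell_n$ delivers the two middle inequalities; the outer inequalities reduce to the numerical facts $\tfrac{4}{\pi} - 1 < 1$ and $\tfrac{1}{4} > 0$. The starting point is the elementary recurrence $r_{n+1}/r_n = (2n+1)/(2n+2)$, which follows immediately from $r_n = 4^{-n}\binom{2n}{n}$; squaring gives
\[
\frac{r_{n+1}^2}{r_n^2} = \frac{(2n+1)^2}{(2n+2)^2}.
\]
The convergence $u_n, \ell_n \to 1$ I would take from the classical Wallis asymptotic $\pi n\, r_n^2 \to 1$, a one-line consequence of Stirling's formula.

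For the upper bound, the ratio $u_{n+1}/u_n$ simplifies to $\tfrac{4n+5}{4n+1}\cdot\tfrac{(2n+1)^2}{(2n+2)^2}$, and expanding both sides produces the clean identity $(4n+5)(2n+1)^2 - (4n+1)(2n+2)^2 = 1$. Hence $u_{n+1}/u_n > 1$ for every $n \geq 0$, so $u_n$ is strictly increasing; combined with $u_n \to 1$ this forces $u_n < 1$, i.e., $r_n^2 \leq \tfrac{1}{\pi(n + 1/4)}$.

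For the lower bound, the analogous computation with $c := 4/\pi$ yields
\[
(n + c - 1)(2n+2)^2 - (n + c)(2n+1)^2 = (4c - 5)\, n + (3c - 4),
\]
so $\ell_n$ is non-increasing precisely when this affine expression is non-negative. Since $4c - 5 = 16/\pi - 5 > 0$ but $3c - 4 = 12/\pi - 4 < 0$, the expression is negative at $n = 1$ and positive for all $n \geq 2$; consequently $\ell_2 > \ell_1$ while $\ell_n$ is strictly decreasing for $n \geq 2$. A direct check gives the exact equality $\ell_1 = \pi \cdot \tfrac{4}{\pi} \cdot \tfrac14 = 1$, so combined with $\ell_n \to 1$ we conclude $\ell_n \geq 1$ for every $n \geq 1$, which is the claimed lower bound. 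The main obstacle is precisely this sign change: unlike the upper bound where the monotonicity gap is the universal constant $1$, the gap for $\ell_n$ is affine in $n$ and flips sign between $n = 1$ and $n = 2$, so a pure monotonicity argument does not suffice and the base case $n = 1$ (where the lower bound is attained with equality) must be verified by hand before concluding from the limit.
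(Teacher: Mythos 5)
Your proof is correct, and it fills a gap the paper leaves open: the paper only cites Chen and Qi~\cite{chen2005best} for this lemma and never proves it, so there is no internal proof to compare against. Your argument is the natural elementary one — normalize $r_n^2$ by the candidate sharp denominator, use the exact recurrence $r_{n+1}/r_n = (2n+1)/(2n+2)$ to reduce monotonicity of $u_n$ and $\ell_n$ to a polynomial sign check, and then squeeze against the limit $\pi n r_n^2 \to 1$. Both polynomial identities check out: $(4n+5)(2n+1)^2 - (4n+1)(2n+2)^2 = 1$ gives $u_n$ strictly increasing to $1$, and $(n+c-1)(2n+2)^2 - (n+c)(2n+1)^2 = (4c-5)n + (3c-4)$ with $c = 4/\pi$ gives the sign flip you describe between $n=1$ (value $28/\pi - 9 < 0$) and $n=2$ (value $44/\pi - 14 > 0$, using $\pi < 22/7$). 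You correctly flag the one subtlety: because $\ell_n$ is not monotone from $n=1$ onward, the limit argument only forces $\ell_n > 1$ for $n \ge 2$, and the base case $\ell_1 = \pi \cdot (4/\pi) \cdot \tfrac14 = 1$ — where the bound is exactly tight, which is precisely why $4/\pi - 1$ is the optimal constant — must be verified separately. One small caveat: as you implicitly note by restricting to $n \ge 1$ for the lower bound, the chain as written in the lemma fails at $n = 0$ (where $\ell_0 = 4 - \pi < 1$ and $1/(\pi n)$ is undefined), so the lemma's ``$n \in \mathbb{N}$'' must be read as $n \ge 1$; your proof handles exactly the range where the statement is true.
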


\begin{lemma}
\label{lem:monotonicityofa_n}
    For $n \in \mathbb N_+$ in the set of positive integers,  define the sequence  
    \begin{align}
    \label{eq:alpha_n}        
    \alpha_n = \sum\limits_{j = 0}^{n - 1} r_j^2 - \frac{\log (n)}{\pi}, \quad \text{where} \quad 
    r_k :=\begin{cases}
        1 & k = 0 \\
        \frac{1}{4^k}\binom{2k}{k} & k \geq 1
    \end{cases}
    \end{align}
    Then $\alpha_n$ is monotonically increasing in $n \in \mathbb N$. Further, the sequence satisfies $1 \leq \alpha_n \leq 1.0663$.
\end{lemma}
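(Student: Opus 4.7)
The plan is to handle the three assertions of Lemma~\ref{lem:monotonicityofa_n} separately: monotonicity of $\alpha_n$, the base value $\alpha_1$, and the upper bound $\alpha_n \le 1.0663$. The base case is immediate since $\alpha_1 = r_0^2 - (\log 1)/\pi = 1$, so once monotonicity is established, the lower bound $\alpha_n \ge 1$ follows for all $n$.

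For monotonicity, I would compute $\alpha_{n+1} - \alpha_n = r_n^2 - (1/\pi)\log(1 + 1/n)$ and reduce positivity to $\pi r_n^2 > \log(1 + 1/n)$. Applying the lower Wallis bound $r_n^2 \ge 1/(\pi(n + 4/\pi - 1))$ from Lemma~\ref{lem:WallisInequality}, this further reduces to the elementary inequality
\[
\log\!\left(1 + \tfrac{1}{n}\right) < \frac{1}{n + \tfrac{4}{\pi} - 1} \qquad (n \ge 1).
\]
Setting $x = 1/n \in (0,1]$ and $c = 4/\pi - 1$, this is equivalent to $f(x) := x - (1+cx)\log(1+x) > 0$ on $(0,1]$. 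I would check $f(0) = f'(0) = 0$ and observe that $f''(x) = (1 - 2c - cx)/(1+x)^2 > 0$ throughout $[0,1]$, since $1 - 3c = 4 - 12/\pi > 0$; integrating twice yields $f > 0$ on $(0,1]$, as required.

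For the upper bound, monotonicity reduces the task to estimating $\alpha_\infty := \sup_n \alpha_n$. Boundedness (hence existence of this supremum as a finite limit) follows from the upper Wallis bound $r_n^2 \le 1/(\pi(n + 1/4))$: telescoping against $(1/\pi)\log(1+1/n)$ shows that $\alpha_{n+1} - \alpha_n = O(1/n^2)$, so the series of increments converges absolutely and $\alpha_n \to \alpha_\infty$ with $\alpha_n \le \alpha_\infty$ for every $n$. To identify $\alpha_\infty$ in closed form, I would invoke Watson's classical 1930 evaluation of the Landau constants $G_n = \sum_{k=0}^n r_k^2$, which yields $G_n = (\log n)/\pi + (\gamma + \log 16)/\pi + o(1)$, and hence $\alpha_\infty = (\gamma + \log 16)/\pi$. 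Numerically this evaluates to $1.06627\ldots < 1.0663$, so $\alpha_n \le \alpha_\infty < 1.0663$ for all $n$.

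The main obstacle is the exact identification of $\alpha_\infty$. Monotonicity and the existence of a finite limit are short calculus arguments, but pinning down the closed form $(\gamma + \log 16)/\pi$ requires genuine special-function input. The cleanest route is to cite Watson's evaluation, which the paper already uses elsewhere in the MaxSE analysis. A self-contained alternative would exploit the integral representation $r_n = (2/\pi)\int_0^{\pi/2}\cos^{2n}\theta\, d\theta$, apply Fubini to $\sum_{k=0}^{n-1} r_k^2$ expressed as a double integral, and extract the logarithmic asymptotics using the identity $\int_0^{\pi/2}\log(2\sin\theta)\, d\theta = 0$; but the citation is by far the shorter path.
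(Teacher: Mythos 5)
Your proof is correct. The broad strategy is the same as the paper's — compute $\alpha_{n+1}-\alpha_n = r_n^2 - \tfrac{1}{\pi}\log(1+1/n)$, invoke the Chen--Qi refinement $r_n^2 \ge \frac{1}{\pi(n+4/\pi-1)}$, and reduce to the elementary inequality $\log(1+1/n) < \frac{1}{n+4/\pi-1}$ — but you finish that step differently and, in fact, more cleanly. The paper bounds the logarithm by its degree-three Taylor polynomial, obtaining a rational inequality that only holds for $n \gtrsim 1.2$, so it must verify $n=1$ by hand. You instead substitute $x=1/n$, $c=4/\pi-1$, rewrite the inequality as $f(x) := x - (1+cx)\log(1+x) > 0$ on $(0,1]$, and observe $f(0)=f'(0)=0$ together with $f''(x) = \frac{1-2c-cx}{(1+x)^2} > 0$ on $[0,1]$ because the linear numerator is minimized at $x=1$ where it equals $1-3c = 4-12/\pi > 0$; integrating twice gives $f>0$ uniformly, with no separate base case. (I checked $f''$: the derivative of $-(1+cx)/(1+x)$ is $(1-c)/(1+x)^2$, so $f''(x) = -\frac{c}{1+x} + \frac{1-c}{(1+x)^2} = \frac{1-2c-cx}{(1+x)^2}$, exactly as you wrote.) For the upper bound, the paper cites Brutman's sharp inequality on the Landau constants, whereas you identify $\alpha_\infty = (\gamma+\log 16)/\pi \approx 1.06627$ via Watson's expansion (which the paper also uses, in Section~\ref{sec:landauConstant}) and then appeal to monotonicity to get $\alpha_n \le \alpha_\infty < 1.0663$; the two routes are equivalent in rigor. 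In short: same skeleton, but your convexity argument for the middle step is a genuine simplification, and your upper-bound citation is an acceptable substitute for the paper's.
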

\begin{proof}
    
Note that 
\begin{equation*}
    \alpha_{n+1} - \alpha_n = r_n^2 - \frac{\log(n+1) - \log(n)}{\pi} = r_n^2 - \frac{\log(1 + 1/n)}{\pi}.
\end{equation*}

We aim to prove that this difference is positive. Unfortunately, the standard bound $r_n^2 \ge \frac{1}{\pi(n + 1)}$ is not tight enough to establish this\footnote{We also note that the more classical bound $r_n \ge \frac{1}{\sqrt{\pi\left(n + {1}/{2}\right)}}$ doesn't suffice either, making this an unprecedented case in matrix factorization analysis.}. Instead, we refer to the best known lower bound on Wallis' inequality due to Chen and Qi~\cite{chen2005best}:
\begin{equation*}
    r_n^2 \ge \frac{1}{\pi(n + \tfrac{4}{\pi} - 1)}.
\end{equation*}

Combining this with the general Taylor-series-based upper bound for the logarithm, $\log(1 + x) \le x - \tfrac{x^2}{2} + \tfrac{x^3}{3}$, we obtain
\begin{equation*}
    \alpha_{n+1} - \alpha_n \ge \frac{1}{\pi(n + \tfrac{4}{\pi} - 1)} - \frac{1}{\pi n} + \frac{1}{2\pi n^2} - \frac{1}{3\pi n^3} = \frac{9\pi n^2 - 24n^2 - 5\pi n + 12n + 2\pi - 8}{6\pi n^3(\pi n - \pi + 4)},
\end{equation*}
which is greater than zero when
\begin{equation*}
    n \ge \frac{5\pi - 12}{6(3\pi - 8)} + \frac{\sqrt{-624 + 360\pi - 47\pi^2}}{6(3\pi - 8)} \approx 1.2017.
\end{equation*}

We verify the base case $n = 1$ separately: $\alpha_1 = 1$, $\alpha_2 = 1 + \tfrac{1}{4} - \tfrac{\log (2)}{\pi} \approx 1.03 > 1$, concluding the proof of monotonicity. Lemma~\ref{lem:monotonicityofa_n} follows by noting that $\alpha_n$ appears in Landau's constant with an explicit bound given in~\cite{brijtman1982sharp}. 
\end{proof}

The inverse of this the square root of $\counting$, $C^{-1}$, has been studied  by Kalinin et al.~\cite[Lemma 3]{kalinin2025back}, and its entries take the following form: 
\begin{equation*}
    C^{-1} = \begin{pmatrix}
        \widetilde{r}_0 & 0 & \dots & 0\\
        \widetilde{r}_1 & \widetilde{r}_0 & \dots & 0\\
        \vdots & \vdots & \ddots & \vdots\\
        \widetilde{r}_{n - 1} & \widetilde{r}_{n - 2} & \dots & \widetilde{r}_0
    \end{pmatrix}, 
    \quad \text{where} \quad 
    \widetilde r_j =\begin{cases}
        1 & k = 0 \\
        \frac{-1}{2j - 1} r_j  & j \geq 1 
    \end{cases}
\end{equation*}
are the coefficients in the Taylor series of $\sqrt{1-x}$.

From the definition, the identity $A C^{-1} = C$ follows, which implies a prefix sum identity for the values of $\widetilde{r}_j$:
\begin{equation}
    \sum\limits_{t = 0}^{j} \widetilde{r}_t = r_j.
    \label{eq:summationr_n}
\end{equation}

The matrix $\widetilde{B}$ can be computed explicitly. Let $d_j$ be the norm of the $j$-th column of the matrix $C$, given by the positive $d_j \geq 0$ (for $j\geq 0$)  such that  
\begin{equation}
\label{eq:d_j_definition}
    d_j^2 = {\sum\limits_{t = 0}^{n - j} r_t^2} = {\alpha_{n - j + 1} + \frac{\log (n - j + 1)}{\pi}} \le {\alpha_{\infty} + \frac{\log (n - j + 1)}{\pi}}\qquad \text{for } j \in [1, n],
\end{equation}
where $\alpha_{n - j + 1}$ is defined in Theorem~\ref{thm:square_root_bounds} and bounded from below by $1$ and upper bounded by $\alpha_{\infty} \approx 1.0663$.

Then, for $j \ge k$, the entry $\widetilde{B}_{j,k}$ can be computed as:
\begin{equation}
\label{eq:b_j_k_expression}
    \widetilde{B}_{j,k} = \sum\limits_{t = 0}^{j - k} \widetilde{r}_{t} d_{k + t} \qquad \text{for } j, k \in [1, n].
\end{equation}

To compute the MaxSE, we break the proof in several steps:
\begin{enumerate}
    \item In Lemma~\ref{lem:B_j_k_upper_bound}, we compute an upper and lower bound on the entries of the matrix, $\widetilde B$, i.e.,  $\widetilde{B}_{j,k}$ for any $1 \leq j,k \leq n$.
    \item We use Lemma~\ref{lem:B_j_k_upper_bound} to bound the squared $\ell_2$-norm of any row of the matrix $\widetilde{B}$ in Lemma~\ref{lem:b_j_row_norm_bound}, i.e., $\norm{B_{j,:}}^2$. 
    \item Using Lemma~\ref{lem:b_j_row_norm_bound}, we bound $\norm{B}_{2 \to \infty}$ in Lemma~\ref{lem:rowNormOfB}. 
\end{enumerate}


We will extensively use the following indefinite integral in the proof, 
evaluated for different values of $a \in \left\{ -\tfrac{1}{2}, \, 0, \, \tfrac{1}{2}, \, 1 \right\}$: 
\begin{equation} \label{eq:log_integral}
    \int \frac{\log^a(x)}{x}\, \mathrm{d}x 
    = \frac{1}{a+1}\,\log^{\,a+1}(x) + \mathrm{const}.
\end{equation}

\begin{lemma}
[Bound on entries of $\widetilde B$]
\label{lem:B_j_k_upper_bound}
For $j > k$ the following upper bound for $\widetilde{B}_{j,k}$ holds true:
\begin{equation*}
   d_j r_{j - k} \le \widetilde{B}_{j, k}  \le d_j r_{j - k} +  \frac{\sqrt{3\log(n-k)} - \sqrt{3\log(n - j + 1)}}{\pi\sqrt{2(n - k)}}  + \frac{1}{\sqrt{n - k}},
\end{equation*}
and for $j = k$ we have $\widetilde{B}_{j,j} = d_j r_{0} = d_j$. 
\end{lemma}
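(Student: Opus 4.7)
The plan is to derive a clean decomposition of $\widetilde{B}_{j,k}$ that isolates $d_j r_{j-k}$ as the dominant term plus a non-negative remainder, and then to bound the remainder via a split at $t = \lfloor 2(j-k)/3 \rfloor$.

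First I would start with~\eqref{eq:b_j_k_expression}, $\widetilde{B}_{j,k} = \sum_{t=0}^{j-k} \widetilde r_t d_{k+t}$, and apply summation by parts using the prefix-sum identity $\sum_{s=0}^t \widetilde r_s = r_t$ from~\eqref{eq:summationr_n}. This collapses the sum to $\widetilde{B}_{j,k} = d_j r_{j-k} + \sum_{t=0}^{j-k-1} r_t(d_{k+t} - d_{k+t+1})$. Then, invoking the identity $d_{k+t}^2 - d_{k+t+1}^2 = r_{n-k-t}^2$ (immediate from~\eqref{eq:d_j_definition}), I would rewrite each forward difference as $d_{k+t} - d_{k+t+1} = r_{n-k-t}^2/(d_{k+t} + d_{k+t+1})$, producing
\[
\widetilde{B}_{j,k} \;=\; d_j r_{j-k} \;+\; \sum_{t=0}^{j-k-1} \frac{r_t\, r_{n-k-t}^2}{d_{k+t} + d_{k+t+1}}.
\]
The boundary case $j = k$ reduces to $\widetilde r_0 d_j = d_j$ since $\widetilde r_0 = 1$, and the lower bound $\widetilde{B}_{j,k}\ge d_j r_{j-k}$ is immediate because every term of the remaining sum is positive.

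For the upper bound I would follow the split described around~\eqref{eq:firstsplit}--\eqref{eq:secondsplit}. Fixing $T = \lfloor 2(j-k)/3 \rfloor$, I would bound the small-$t$ portion ($0\le t \le T$) by using $d_i \ge 1$ for every $i\in[1,n]$ (since $d_i^2 \ge r_0^2 = 1$ by~\eqref{eq:d_j_definition}), so that $d_{k+t}+d_{k+t+1} \ge 2$. Wallis (Lemma~\ref{lem:WallisInequality}) applied to the remaining $r_t r_{n-k-t}^2$ weights, combined with the standard integration in~\eqref{eq:log_integral} at $a = 0$, delivers the additive $1/\sqrt{n-k}$ term. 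For the large-$t$ portion ($T < t \le j-k-1$), monotonicity of the Wallis sequence gives $r_t \le r_{T+1}$, after which the sum telescopes into $r_{T+1}(d_{k+T+1} - d_j)$. Using Wallis to bound $r_{T+1}$ (which is comparable to $1/\sqrt{j-k}$ by the choice of $T$), combined with the asymptotic $d_i^2 \le \alpha_\infty + \log(n-i+1)/\pi$ from Lemma~\ref{lem:monotonicityofa_n} and the elementary concavity inequality $\sqrt{a+c} - \sqrt{b+c} \le \sqrt{a} - \sqrt{b}$ valid for $a \ge b \ge 0$ and $c \ge 0$, converts $d_{k+T+1} - d_j$ into a difference of $\sqrt{\log}$'s and produces the claimed term $(\sqrt{3\log(n-k)} - \sqrt{3\log(n-j+1)})/(\pi\sqrt{2(n-k)})$.

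The main obstacle is obtaining the correct scaling in the large-$t$ regime, where a naive application of the Wallis bound on $r_{T+1}$ only yields a $1/\sqrt{j-k}$ factor while the target bound involves $1/\sqrt{n-k}$. The reason the split fraction must be strictly larger than $\tfrac{1}{2}$ (here $\tfrac{2}{3}$) is precisely so that, after applying the concavity inequality to $d_i^2 \le \alpha_\infty + \log(n-i+1)/\pi$, the residual $\sqrt{(j-k)/(n-k)}$ gap is absorbed into the $\sqrt{\log(n-k)} - \sqrt{\log(n-j+1)}$ factor without introducing a spurious $\log n$ overhead that would destroy the sharp constant eventually required for $\gamma_{\op 2}(\counting)$.
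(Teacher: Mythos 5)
Your opening steps — summation by parts via $\sum_{s=0}^t\widetilde r_s = r_t$, converting the forward difference to $r_{n-k-t}^2/(d_{k+t}+d_{k+t+1})$, the positivity observation for the lower bound, and $\widetilde r_0 = 1$ for $j=k$ — reproduce the paper's derivation exactly. Your treatment of the small-$t$ block (using $d_i \ge 1$ plus Wallis plus an integral to get the $1/\sqrt{n-k}$ term) is also the paper's approach, and your telescoping idea $\sum_{t>T} r_t(d_{k+t}-d_{k+t+1}) \le r_{T+1}(d_{k+T+1}-d_j)$ is a clean alternative to the paper's integral bound on $\sum \tfrac{1}{t^{1/2}(n-k-t)\sqrt{\log(n-k-t)}}$ that is worth pursuing.

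However, there is a real gap exactly where you flag one, and I don't think the hand-wave closes it. You split at $T = \lfloor 2(j-k)/3\rfloor$, following the sketch around eq.~\eqref{eq:firstsplit}, whereas the actual proof in the paper splits at $\lfloor 2(n-k)/3\rfloor$. The difference matters. With your split, Wallis gives $r_{T+1} \le \sqrt{3}/\sqrt{2\pi(j-k)}$, so after telescoping and the concavity step $\sqrt{a+c}-\sqrt{b+c}\le\sqrt a-\sqrt b$ you arrive at roughly $\frac{\sqrt 3}{\pi\sqrt{2(j-k)}}\bigl(\sqrt{\log(n-k)}-\sqrt{\log(n-j+1)}\bigr)$, which is the stated target multiplied by the uncontrolled factor $\sqrt{(n-k)/(j-k)} \ge 1$. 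Showing that this excess is uniformly absorbed by the $1/\sqrt{n-k}$ slack requires a case analysis in the ratio $(j-k)/(n-k)$ that you do not supply; absorbing a multiplicative error into an additive budget is not automatic, and the budget here is already largely consumed by the small-$t$ block and the two boundary terms $t\in\{0,j-k-1\}$, which your sketch also does not set aside. The direct fix is to split at $T=\lfloor 2(n-k)/3\rfloor$ instead: the large-$t$ sum is then empty unless $j-k-1 > T$, and when it is nonempty Wallis gives $r_{T+1}\le\sqrt{3}/\sqrt{2\pi(n-k)}$ with the correct $1/\sqrt{n-k}$ scaling immediately, so the telescoped difference $d_{k+T+1}-d_j \le \tfrac{1}{\sqrt\pi}\bigl(\sqrt{\log(n-k)}-\sqrt{\log(n-j+1)}\bigr) + O(1)$ (using $n-k-T\le n-k$, the concavity inequality, and the small discrepancy $\alpha_\infty-\alpha_{n-j+1}$ folded into the slack) lands you on the claimed bound without any absorption argument. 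With that one change your telescoping route goes through and is, if anything, tidier than the paper's integral computation.
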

\begin{proof}

We start from the formula for $\widetilde{B}_{j, k}$, equation \eqref{eq:b_j_k_expression},  and apply summation by parts (also known as the Newton series or Abel transformation):
 \begin{align*}
        \widetilde{B}_{j,k} &= \sum\limits_{t = 0}^{j - k} \widetilde{r}_{t } d_{k + t} = d_j {\sum\limits_{t = 0}^{j - k}\widetilde{r_t}} - \sum\limits_{t = 0}^{j - k - 1} (d_{k + t + 1} - d_{k + t})\sum\limits_{l = 0}^{t} \widetilde{r}_l \tag{using eq. \eqref{eq:b_j_k_expression}} \\
        &= d_j r_{j - k} + \sum\limits_{t = 0}^{j - k - 1} (d_{t + k} - d_{t + k + 1})r_t. \tag{using eq. \eqref{eq:summationr_n}}
    \end{align*}
This implies the bound for $\tilde{B}_{j,j}$. 
Now we recall the definition of $d_t$ in eq. \eqref{eq:d_j_definition}. The sequence is decreasing, which immediately gives us the desired lower bound. 

We can rewrite the difference $d_{t + k} - d_{t + k + 1}$ by using the fact that $r^2_{n-(t+k)}=d_{t+k}^2 - d^2_{t+k+1}$ from the definition of the sequence $d_j^2$ defined in eq. \eqref{eq:d_j_definition}. Further, $d_j$ is an increasing sequence. Therefore, 
\begin{equation}
\label{eq:B_j_k_first_inequality}
\widetilde{B}_{j,k} = d_j r_{j - k} + \sum\limits_{t = 0}^{j - k - 1} \frac{r_{n - k - t}^2}{d_{t + k} + d_{t + k + 1}} r_t \le d_j r_{j - k} + \frac{1}{2} \sum\limits_{t = 0}^{j - k - 1} \frac{r_{n - k - t}^2r_t}{d_{t + k + 1}},
\end{equation}

For the rest of the sum we first consider the cases $t = 0$ and $t = j - k - 1$ separately:

\begin{equation*}
    \widetilde{B}_{j, k} \le d_j r_{j - k} + \frac{r^2_{n - k} }{2d_{k + 1}} + \frac{r_{j - k - 1}r_{n - j + 1}^2}{2d_j} + \frac{1}{2}\sum\limits_{t = 1}^{j - k - 2} \frac{r_{n - k - t}^2r_t}{d_{t + k + 1}}.
\end{equation*}

Note that, from eq. \eqref{eq:d_j_definition}, we have 
\[
d_{t + k + 1} \ge \max \left\{\sqrt{\tfrac{\max\{1, \log (n - k - t)\} }{\pi}},1 \right\}
\]

Using the Wallis inequality (Lemma~\ref{lem:WallisInequality}), this implies that 
\begin{align}
    \frac{r_{n - k - t}^2 r_t}{d_{t + k + 1}} \leq  \frac{1}{\pi(n - k - t)\sqrt{t \cdot \max(\log(n - k - t), 1)}}.
    \label{eq:rsquaredrdBound}
\end{align}
We now split the sum into two parts: values $t \le \frac{2(n - k)}{3}$, and values larger than $\frac{2(n - k)}{3}$, for which we will apply different inequalities. In particular, we decompose the sum as  
\begin{equation*}
    \frac{1}{2} \sum\limits_{t = 1}^{j - k - 2} \frac{r_{n - k - t}^2 r_t}{d_{t + k + 1}} 
    = \underbrace{\frac{1}{2} \sum\limits_{t = 1}^{ \lfloor \frac{2(n - k)}{3} \rfloor} \frac{r_{n - k - t}^2 r_t}{d_{t + k + 1}}}_{S_1} + \underbrace{\frac{1}{2} \sum\limits_{t =  \lfloor \frac{2(n - k)}{3} \rfloor + 1}^{j - k - 2} \frac{r_{n - k - t}^2 r_t}{d_{t + k + 1}}}_{S_2}.
\end{equation*}

We begin with the latter case, assuming $j - k - 2 \ge \lfloor \frac{2(n - k)}{3} \rfloor + 1$. If not, then this sum is identically $0$. 
First using eq. \eqref{eq:rsquaredrdBound}, we have 
\begin{align*}
    S_2 &= \frac{1}{2} \sum\limits_{t =  \left\lfloor \frac{2(n - k)}{3} \right\rfloor +  1}^{j - k - 2} \frac{r_{n - k - t}^2 r_t}{d_{t + k + 1}} \leq \frac{1}{2\pi}\sum\limits_{t = \left\lfloor \frac{2(n - k)}{3} \right\rfloor + 1}^{j - k - 2} \frac{1}{(n - k - t)\sqrt{t\cdot \max(\log(n - k - t), 1)}}  \\
    &\le \frac{1}{2\pi\sqrt{\frac{2(n - k)}{3}}}\sum\limits_{t = \left\lfloor \frac{2(n - k)}{3} \right\rfloor + 1}^{j - k - 2} \frac{1}{(n - k - t)\sqrt{\log(n - k - t)}}.
\end{align*}
Changing the variable gives us 
\begin{align*}
    S_2 
    &\le  \frac{1}{2 \pi}\sqrt{ \frac{3}{2(n-k)}} \sum\limits_{t= n -j + 2}^{\lceil \frac{n - k}{3} \rceil} \frac{1}{t\sqrt{\log t}}\le \frac{1}{2 \pi}\sqrt{\frac{3}{2(n-k)}} \int\limits_{t= n -j + 1}^{\lceil \frac{n - k}{3} \rceil} \frac{\mathsf dt}{t\sqrt{\log t}}\\
    &\le \frac{1}{\pi}\sqrt{\frac{3}{2(n-k)}} \left[\sqrt{\log\left(\left\lceil \frac{n - k}{3} \right\rceil\right)} - \sqrt{\log\left(n -j + 1\right)}\right] \\
    &= \frac{1}{\pi}\paren{\sqrt{\frac{3\log(n-k)}{2(n-k)}} - \sqrt{\frac{3\log(n-j+1)}{2(n-k)}}}
\end{align*}

For the rest of the sum we use the following bound:
\begin{equation*}
\frac{1}{2\pi}\sum\limits_{t = 1}^{\min\left\{\left\lfloor \frac{2(n - k)}{3} \right\rfloor,\; j - k - 2\right\}} \frac{1}{(n - k - t)\sqrt{t\cdot\max(\log(n - t- k), 1)}} \le \frac{1}{2\pi} \sum\limits_{t = 1}^{\lfloor \frac{2(n - k)}{3} \rfloor} \frac{1}{(n - k - t)\sqrt{t}}.
\end{equation*}

To bound the new sum, we observe that the summand is not monotonic in $t$. However, we can identify regions where it is increasing or decreasing. Specifically, consider:

\begin{equation*}
\frac{\mathsf d}{\mathsf dx}\left(\frac{1}{(a - x)\sqrt{x}}\right) = \frac{3x - a}{2(a - x)^2 x^{3/2}}.
\end{equation*}

Therefore, for $t \le \frac{n - k}{3}$ the function is decreasing, and for $t \ge \frac{n - k}{3}$ it is increasing. This allows us to apply integral bounds. In the first step, we remove the border case to get
\begin{align}
\sum\limits_{t = 1}^{\left\lfloor \frac{2(n - k)}{3} \right\rfloor} \frac{1}{(n - k - t)\sqrt{t}} 
\le \frac{3}{2(n - k)} +  \sum\limits_{t = 1}^{\left\lfloor \frac{2(n - k)}{3} \right\rfloor - 1} \frac{1}{(n - k - t)\sqrt{t}}.
\label{eq:summationFirstCase}
\end{align}

We now break the remaining sum into increasing and decreasing function parts so that we can apply the integral bound appropriately. That is, 

\begin{align*}
\sum\limits_{t = 1}^{\lfloor \frac{2(n - k)}{3} \rfloor-1} \frac{1}{(n - k - t)\sqrt{t}} 
& \le \int\limits_{0}^{\lfloor \frac{n - k}{3} \rfloor} \frac{1}{(n - k - x)\sqrt{x}}\, \mathsf dx + \int\limits_{\lfloor \frac{n - k}{3} \rfloor + 1}^{\lfloor \frac{2(n - k)}{3} \rfloor} \frac{1}{(n - k - x)\sqrt{x}}\, \mathsf dx \\
&\le \int\limits_{0}^{\lfloor \frac{2(n - k)}{3} \rfloor} \frac{1}{(n - k - x)\sqrt{x}}\, \mathsf dx .
\end{align*}

 The integral on the right-hand side can be computed explicitly:

\begin{equation*}
\int \frac{1}{(a - x)\sqrt{x}}  \mathsf dx = \frac{2\tanh^{-1}(\sqrt{x/a})}{\sqrt{a}} + \mathrm{const}.
\end{equation*}

As $\tanh^{-1}\left(\sqrt{\frac{\lfloor \frac{2(n - k)}{3} \rfloor }{n - k}}\right) \leq \frac{1}{2}\log(5)< 1$, taking the limit and plugging it back to eq~\eqref{eq:summationFirstCase}, we get 

\begin{align*}
\frac{1}{2\pi}\sum\limits_{t = 1}^{\lfloor \frac{2(n - k)}{3} \rfloor} \frac{1}{\sqrt{t}(n - k - t)\sqrt{\log(n - t- k)}} &\le  \frac{1}{2\pi \sqrt{n - k}} \tanh^{-1}\left(\sqrt{\frac{\lfloor \frac{2(n - k)}{3} \rfloor }{n - k}}\right) + \frac{3}{{4\pi}(n - k)}\\
&\le \frac{3}{{4\pi}(n - k)} + \frac{1}{2\pi\sqrt{n - k}}  \le \frac{{5}}{4\pi\sqrt{n - k}}.
\end{align*}

We now bound the two extra terms that we separated earlier, using that $d_j \ge 1$ for all $j$:
\begin{equation*}
    \frac{r^2_{n - k} }{2d_{k + 1}} \le \frac{1}{2\pi (n - k)} \le \frac{1}{2\pi\sqrt{n - k}} \text{ and }
    \frac{r_{j - k - 1}r_{n - j + 1}^2}{2d_j} \le \frac{1}{2\pi^{3/2}\sqrt{j - k - 1}(n - j + 1)} \le \frac{1}{\sqrt{2}\pi^{3/2} \sqrt{n - k}},
\end{equation*}
where the last inequality follows from the fact that $\sqrt{2ab^2} = \sqrt{a(b^2) + b(ab)} \ge \sqrt{a + b}$,  with $a = j - k - 1 \ge 1$ and $b = n - j + 1 \ge 1$, and consequently $\sqrt{2(j - k-1)(n - j + 1)^2}\ge \sqrt{a + b} = \sqrt{n - k}$. We now combine all terms to obtain:

\begin{align*}
   \widetilde{B}_{j, k} &\le d_j r_{j - k} + \frac{\sqrt{3}(\sqrt{\log(n-k)} - \sqrt{\log(n - j + 1)})}{\pi\sqrt{2(n - k)}}  + \frac{1}{\sqrt{n - k}}\left({\frac{5}{4\pi}}  + \frac{1}{2\pi} + \frac{1}{\sqrt{2}\pi^{3/2}}\right)\\
   &\le d_j r_{j - k} + \frac{\sqrt{3}(\sqrt{\log(n-k)} - \sqrt{\log(n - j + 1)})}{\pi\sqrt{2(n - k)}}  + \frac{{1}}{\sqrt{n - k}},
\end{align*}
which concludes the proof of Lemma~\ref{lem:B_j_k_upper_bound}.
\end{proof}

\begin{lemma}
[Bound on squared $\ell_2$ norm of rows of $\widetilde B$]
\label{lem:b_j_row_norm_bound}
The squared $j$-th row norm of matrix $\widetilde{B}$ has the following upper and lower bound:
    \begin{align*}
        d_j^2 d_{n - j + 1}^2\le \sum\limits_{k = 1}^{j} (\widetilde{B}_{j, k})^2 &\le d_j^2d_{n - j + 1}^2 +  \frac{3}{4\pi^2} \log^2 \left(\frac{n - 1}{n - j + 1}\right) + \frac{\sqrt{6}d_j}{\pi^{3/2}\sqrt{\log (n - 1})}\log^2\left( \frac{4n}{n - j + 1} \right)\\
        &\qquad + 3\sqrt{\log (n)}\log\left(\frac{n}{n - j + 1}\right) + o(\log (n))
    \end{align*}
\end{lemma}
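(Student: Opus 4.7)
The plan is to derive both bounds by squaring and summing the pointwise estimates in Lemma~\ref{lem:B_j_k_upper_bound}. For the lower bound, the inequality $\widetilde{B}_{j,k}\ge d_j r_{j-k}$ for $k\le j$ gives
\[
\sum_{k=1}^{j}\widetilde{B}_{j,k}^{2} \ge d_j^{2}\sum_{k=1}^{j}r_{j-k}^{2} = d_j^{2}\sum_{t=0}^{j-1}r_t^{2} = d_j^{2}d_{n-j+1}^{2},
\]
where the final identity is the definition of $d_{n-j+1}^2$ in \eqref{eq:d_j_definition}.

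For the upper bound I would write $\widetilde{B}_{j,k}\le a_k+b_k+c_k$ for $k<j$, with
\[
a_k := d_j r_{j-k},\qquad b_k := \tfrac{\sqrt{3\log(n-k)}-\sqrt{3\log(n-j+1)}}{\pi\sqrt{2(n-k)}},\qquad c_k := \tfrac{1}{\sqrt{n-k}},
\]
and $\widetilde{B}_{j,j}=d_j=a_j$. Expanding the square produces six sums. The leading piece $\sum a_k^{2} = d_j^{2}d_{n-j+1}^{2}$ is the same computation as in the lower bound. For $\sum b_k^{2}$ I would use $(\sqrt{x}-\sqrt{y})^{2}\le x-y$ for $x\ge y\ge 0$ to get $b_k^{2}\le \frac{3\log((n-k)/(n-j+1))}{2\pi^{2}(n-k)}$, and then bound the sum by the integral $\int_{n-j+1}^{n-1}\log(m/(n-j+1))/m\,dm$ via \eqref{eq:log_integral}, which equals $\tfrac{1}{2}\log^{2}\tfrac{n-1}{n-j+1}$ and produces the $\frac{3}{4\pi^{2}}\log^{2}\tfrac{n-1}{n-j+1}$ contribution. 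The three ``linear'' pieces $\sum c_k^{2}$, $2\sum a_k c_k$, and $2\sum b_k c_k$ I would handle collectively via Cauchy--Schwarz (using $\sum c_k^{2}=H_{n-1}-H_{n-j}=O(\log n)$ and the crude estimates $d_j,d_{n-j+1}\le \sqrt{\alpha_\infty+\log n/\pi}$ from \eqref{eq:d_j_definition}), which combine into at most $3\sqrt{\log n}\log\tfrac{n}{n-j+1}+o(\log n)$, with the constant $3$ chosen generously so as to absorb all three contributions simultaneously and uniformly in $j$.

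The delicate piece is the cross term $2\sum a_k b_k$. After the change of variable $t=j-k$ and $m=n-j+1$, and using $2\cdot\sqrt{3}/(\pi\sqrt{2})=\sqrt{6}/\pi$, it becomes
\[
\tfrac{\sqrt{6}\,d_j}{\pi}\sum_{t=1}^{j-1} r_t\cdot\tfrac{\sqrt{\log(m+t-1)}-\sqrt{\log m}}{\sqrt{m+t-1}}.
\]
A direct Cauchy--Schwarz via $\sqrt{\sum a_k^{2}}\sqrt{\sum b_k^{2}}$ would yield $\frac{\sqrt{3}}{\pi}d_j d_{n-j+1}\log\tfrac{n-1}{n-j+1}$, which is not uniformly a weakening of the target $\frac{\sqrt{6}d_j}{\pi^{3/2}\sqrt{\log(n-1)}}\log^{2}\tfrac{4n}{n-j+1}$, so I would instead estimate the sum directly: apply the identity $\sqrt{x}-\sqrt{y} = (x-y)/(\sqrt{x}+\sqrt{y})$ to pull the log difference into the numerator, lower-bound $\sqrt{\log(m+t-1)}+\sqrt{\log m}$ by $\sqrt{\log(n-1)}$ (extracting the target denominator), and combine with the Wallis bound $r_t\le 1/\sqrt{\pi t}$ from Lemma~\ref{lem:WallisInequality}. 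The remaining sum $\sum_t \log((m+t-1)/m)/\sqrt{t(m+t-1)}$ is then evaluated via the substitution $u=\sqrt{(m+t)/m}$, which converts it into an integral of $4\log(u)/\sqrt{u^{2}-1}$ on $[1,\sqrt{n/m}]$, bounded by $\tfrac{1}{2}\log^{2}(4n/m)$; the appearance of $4n$ rather than $n$ inside the logarithm absorbs the near-singularity of the integrand at $u=1$ and the discrete-to-continuous rounding. Getting this last step right is the chief obstacle of the proof, since a sloppier bound on the integrand either spoils the $d_j/\sqrt{\log(n-1)}$ prefactor or inflates the $\log^{2}$ into a $\log^{3}$ that would dominate the leading term; collecting all six pieces and discarding the remaining $o(\log n)$ residuals completes the argument.
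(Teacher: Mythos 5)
The decomposition into six sums from $(a_k+b_k+c_k)^2$, the lower bound, and the treatment of $\sum a_k^2$ and $\sum b_k^2$ are correct and match the paper's argument. However, your treatment of the cross term $2\sum_k a_k c_k$ is a genuine gap: applying Cauchy--Schwarz gives $2\sqrt{\sum a_k^2}\sqrt{\sum c_k^2}=2\,d_jd_{n-j+1}\sqrt{\sum_{k<j}(n-k)^{-1}}\approx 2\,d_jd_{n-j+1}\sqrt{\log\tfrac{n}{n-j+1}}$, and in the central regime $j,\,n-j=\Theta(n)$ one has $d_j\sim d_{n-j+1}\sim\sqrt{\log n/\pi}$ and $\log\tfrac{n}{n-j+1}=\Theta(1)$, so this bound is $\Theta(\log n)$. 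But the target $3\sqrt{\log n}\log\tfrac{n}{n-j+1}+o(\log n)$ is only $o(\log n)$ in that regime, so the Cauchy--Schwarz estimate genuinely overshoots and cannot be absorbed into the $o(\log n)$ slack. This slack is not cosmetic: the downstream computation in Lemma~\ref{lem:rowNormOfB} (Case 3) relies on $S_{(4)}+S_{(5)}+S_{(6)}=o(\log n)$ at $j\sim n/2$ to avoid spoiling the constant in front of $\log n$. The paper instead bounds $2\sum a_k c_k$ by first applying Wallis' inequality to $r_{j-k}$, reducing it to $\frac{2d_j}{\sqrt\pi}\sum_k\frac{1}{\sqrt{(j-k)(n-k)}}$, and then exploiting that the latter sum has a $\tanh^{-1}$ antiderivative to obtain $O\bigl(d_j\log\tfrac{4n}{n-j+1}\bigr)$, which is $O(\sqrt{\log n})$ in the critical regime. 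You would need to replace the Cauchy--Schwarz step for this one term by this sharper argument.

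There is also a smaller algebraic slip in your handling of $2\sum a_k b_k$: lower-bounding $\sqrt{\log(m+t-1)}+\sqrt{\log m}$ by $\sqrt{\log(n-1)}$ fails for small $t$ when $m=n-j+1$ is small (e.g.\ $m=2$, $t=1$). The paper avoids this by first majorizing $\sqrt{\log(n-k)}$ by the constant $\sqrt{\log(n-1)}$, after which the difference $\sqrt{\log(n-1)}-\sqrt{\log(n-j+1)}$ is $k$-free and the denominator bound $\sqrt{\log(n-1)}+\sqrt{\log(n-j+1)}\ge\sqrt{\log(n-1)}$ is trivially valid. Your route through a $u$-substitution giving $\int 4\log u/\sqrt{u^2-1}\,\mathrm du$ is plausible in spirit, but only after this correction to the order of operations.
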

\begin{proof}
We use Lemma~\ref{lem:B_j_k_upper_bound} to bound the values of $\widetilde{B}_{j, k}$. Since we are squaring these bounds, we need to compute six corresponding sums, keeping track of terms up to order $o(\log (n))$. This level of precision is sufficient, as we only require an $o(1)$ error for MaxSE and MeanSE after taking the square root. For $j = k$, we have $\widetilde{B}_{j, j} = d_j$; for $j > k$:

\begin{equation*}
0 < d_j r_{j - k} \le \widetilde{B}_{j, k} \le d_j r_{j - k} + \frac{\sqrt{3\log(n - k)} - \sqrt{3\log(n - j + 1)}}{\pi \sqrt{2(n - k)}} + \frac{1}{\sqrt{n - k}}
\end{equation*}

Note that the second term is equal to $0$ for $k = j - 1$ and does not appear when $j = k$, so the second term only needs to be summed up to $j - 2$, requiring $j \ge 3$. 

\begin{align*}
\underbrace{\sum_{k=1}^jd_j^2 r_{j-k}^2}_{S_{(1)}} \le \sum_{k=1}^j \widetilde{B}_{j,k}^2 \le\ 
&  \underbrace{\sum_{k=1}^jd_j^2 r_{j-k}^2}_{S_{(1)}} 
+ \underbrace{\sum_{k=1}^{j - 2} \frac{3(\sqrt{\log(n-k)} - \sqrt{\log(n - j + 1)})^2}{2\pi^2(n - k)} }_{S_{(2)}} \nonumber \\
& + \underbrace{\sum_{k=1}^{j - 2}  d_j r_{j-k} \cdot \frac{(\sqrt{6\log(n-k)} - \sqrt{6\log(n - j + 1)})}{\pi\sqrt{n - k}}}_{S_{(3)}}  
+ \underbrace{\sum_{k=1}^{j - 1} d_j r_{j-k} \cdot \frac{2}{\sqrt{n-k}}}_{S_{(4)}} \nonumber \\
& +  \underbrace{\sum_{k=1}^{j - 2}\frac{(\sqrt{6\log(n-k)} - \sqrt{6\log(n - j + 1)})}{\pi(n - k)}}_{S_{(5)}} \nonumber  +  \underbrace{\sum_{k=1}^{j - 1}\frac{1}{n-k}}_{S_{(6)}}.
\end{align*}

Now we bound each of these terms separately. 
\subsection*{\texorpdfstring{Bounding $S_{(1)}$}{Bounding S(1)}}
The lower bound follows from the definition of $d_j^2$ (eq. \eqref{eq:d_j_definition}) and the fact that rest of all the terms are positive.   
\begin{equation*}
    \sum\limits_{k = 1}^{j} d_j^2 r_{j - k}^2 = d_j^2 \sum\limits_{k = 0}^{j - 1} r_{k}^2 = d_{j}^2 d_{n - j + 1}^2,
\end{equation*}

\subsection*{\texorpdfstring{Bounding $S_{(2)}$}{Bounding S(2)}}
Note that, for $0\leq b \leq a$,  $a^2-b^2=(a-b)(a+b) \geq (a-b)^2$. 
Setting $a=\sqrt{\log(n-k)}$ and $b=\sqrt{\log(n-j+1)}$ thus gives us  
\begin{align*}
    S_{(2)} &= \frac{3}{2\pi^2}\sum\limits_{k = 1}^{j - 2}\frac{\left(\sqrt{\log(n - k)} - \sqrt{\log(n - j + 1)}\right)^2}{n - k} \le \frac{3}{2\pi^2}\sum\limits_{k = 1}^{j - 2}\frac{\log(n - k) - \log(n - j + 1)}{n - k} \\
    &\le \frac{3}{2\pi^2} \sum\limits_{k = n - j + 2}^{n - 1} \frac{\log k}{k} - \frac{3\log(n - j + 1)}{2\pi^2}\sum\limits_{k = n - j + 2}^{n - 1} \frac{1}{k}
\end{align*}

We use integral bounds for the sums. In the first sum, $k$ should be at least $3$ so that the terms are monotonically decreasing. We need to treat $k = n - j + 2$ separately; however, we argue that its contribution is $o(\log (n))$ and will disregard it. Similarly, we approximate the second sum using an integral from $n - j + 1$ to $n - 1$, ignoring boundary terms since they also contribute $o(\log (n))$. This results in the following bound:
\begin{align*}
    S_{(2)} & \leq \frac{3}{2\pi^2} \left[\frac{1}{2}(\log^2(n - 1) - \log^2(n - j + 1)) - \log(n - j + 1)\log\left(\frac{n - 1}{n - j + 1}\right)\right] + o(\log (n))\\
    & = \frac{3}{4\pi^2}\left[\log^2(n - 1) -2\log(n - j + 1)\log(n - 1)+ \log^2(n - j + 1)\right] + o(\log (n))\\
    & = \frac{3}{4\pi^2} \log^2\left(\frac{n - 1}{n - j + 1 }\right) + o(\log (n)). 
\end{align*}

\subsection*{\texorpdfstring{Bounding $S_{(3)}$}{Bounding S(3)}}
Using the identity $a^2-b^2=(a-b)(a+b)$, we get 

\begin{align*}
    S_{(3)} & = \frac{\sqrt{6}d_j}{\pi}\sum\limits_{k = 1}^{j - 2} \frac{r_{j - k}(\sqrt{\log(n -k )} - \sqrt{\log(n - j + 1)})}{\sqrt{n - k}} \\
    &\le \frac{\sqrt{6}d_j}{\pi^{3/2}}\sum\limits_{k = 1}^{j - 2} \frac{\sqrt{\log(n -1)} - \sqrt{\log(n - j + 1)}}{\sqrt{j - k}\sqrt{n - k}}\\
    &=\frac{\sqrt{6}d_j\log \left(\frac{n - 1}{n - j + 1}\right)}{\pi^{3/2}(\sqrt{\log (n - 1)} + \sqrt{\log (n - j+1)})}\sum\limits_{k = 1}^{j - 2} \frac{1}{\sqrt{j - k}\sqrt{n - k}}
\end{align*}
Since $\sqrt{\log(n-j+1)}\geq 0$, we can thus write 
\begin{align}
    S_{(3)} \le \frac{\sqrt{6}d_j}{\pi^{3/2}\sqrt{\log (n - 1)}}\log\left(\frac{n - 1}{n - j + 1}\right)\sum\limits_{k = 1}^{j - 2} \frac{1}{\sqrt{j - k}\sqrt{n - k}}
\label{eq:boundonS3}
\end{align}

Now we bound the remaining sum:
\begin{equation}
\begin{aligned}
\label{eq:double_sum}
    \sum_{k = 1}^{j - 2} \frac{1}{\sqrt{j - k}\sqrt{n - k}}
    &\le \sum_{k = 1}^{j - 2} \frac{1}{\sqrt{j - 1 - k}\sqrt{n - k}} \le \sum_{k = 1}^{j - 2} \frac{1}{\sqrt{k}\sqrt{n - j + 1 + k} } \\
    &\le \int\limits_{0}^{j - 1} \frac{\mathsf dx}{\sqrt{x}\sqrt{n - j + 1 + x} } = 2\tanh^{-1}\left( \sqrt{\frac{j - 1}{n}} \right) \\
    &= \log\left(\frac{\sqrt{n} + \sqrt{j - 1}}{\sqrt{n} - \sqrt{j - 1}}\right) = \log\left(\frac{(\sqrt{n} + \sqrt{j - 1})^2}{n - j + 1}\right) \le \log\left( \frac{4n}{n - j + 1} \right)
\end{aligned}
\end{equation}

Plugging eq. \eqref{eq:double_sum} in eq. \eqref{eq:boundonS3}, we get 

\begin{equation*}
    S_{(3)} \leq \frac{\sqrt{6}d_j}{\pi^{3/2}\sqrt{\log (n - 1})}\log\left(\frac{n - 1}{n - j + 1}\right)\log\left( \frac{4n}{n - j + 1} \right) \le \frac{\sqrt{6}d_j}{\pi^{3/2}\sqrt{\log (n - 1})}\log^2\left( \frac{4n}{n - j + 1} \right)
\end{equation*}

\subsection*{\texorpdfstring{Bounding $S_{(4)}$}{Bounding S(4)}}
Since $d_j \le \sqrt{\alpha_{\infty} + \frac{\log(n - j + 1)}{\pi}}$ (see equation \eqref{eq:d_j_definition}). Therefore, $\frac{d_j}{\sqrt{n - j + 1}}$ is bounded by a constant. That is, 
\begin{align*}
S_{(4)} 
    & = \sum_{k=1}^{j - 1} d_j r_{j-k} \cdot \frac{2}{\sqrt{n-k}} \le \frac{2d_j}{\sqrt{\pi}} \sum\limits_{k =1}^{j - 1} \frac{1}{\sqrt{j - k} \sqrt{n - k}} \\
    &= \frac{2d_j}{\sqrt{\pi}} \sum\limits_{k =1}^{j - 2} \frac{1}{\sqrt{j - k} \sqrt{n - k}} + o(\log (n)) \le \frac{2d_j}{\sqrt{\pi}}\log\left( \frac{4n}{n - j + 1} \right) + o(\log (n)). \tag{eq. \eqref{eq:double_sum}}
\end{align*}

\subsection*{\texorpdfstring{Bounding $S_{(5)}$}{Bounding S(5)}}

\begin{align*}
    S_{(5)} &= \sum_{k=1}^{j - 2}\frac{\sqrt{6}(\sqrt{\log(n-k)} - \sqrt{\log(n - j + 1)})}{\pi(n - k)} = \frac{\sqrt{6}}{\pi} \sum\limits_{k = n - j + 2}^{n - 1} \frac{\sqrt{\log k}}{k} - \frac{\sqrt{6\log(n - j + 1)}}{\pi}\sum\limits_{j = n - j + 2}^{n - 1}\frac{1}{k}\\
    &=\frac{\sqrt{6}}{\pi} \sum\limits_{k = n - j + 2}^{n - 1} \frac{\sqrt{\log k}}{k} - \frac{\sqrt{6\log(n - j + 1)}}{\pi}\sum\limits_{j = n - j + 1}^{n - 2}\frac{1}{k} + o(\log (n))
\end{align*}

We use integral bounds, noting that for $k \ge 2$ the function $\frac{\sqrt{\log k}}{k}$ is decreasing. This results in: 
\begin{align*}
    S_{(5)}&\leq \frac{\sqrt{6}}{\pi} \int\limits_{n - j + 1}^{n - 1} \frac{\sqrt{\log (x)}}{x}\, \mathsf dx - \frac{\sqrt{6\log(n - j + 1)}}{\pi} \int\limits_{n -j + 1}^{n - 1} \frac{\mathsf dx}{x} + o(\log (n))\\
    &=\frac{\sqrt{6}}{\pi} \left(\frac{2}{3}\log^{3/2}(n - 1) - \frac{2}{3}\log^{3/2}(n - j + 1)\right)\\
    &\qquad - \frac{\sqrt{6}}{\pi}\left(\log^{1/2}(n - j + 1)\log (n - 1) - \log^{3/2}(n - j + 1)\right) + o(\log (n))\\
    & = \frac{\sqrt{6}}{\pi} \frac{(\sqrt{\log (n - 1)} - \sqrt{\log (n - j + 1)})^2}{3} \left(2 \sqrt{\log (n - 1)} + \sqrt{\log (n - j + 1)}\right) + o(\log (n))\\
    & \le \frac{\sqrt{6}\sqrt{\log(n - 1)}}{\pi} \log\left(\frac{n - 1}{n - j + 1}\right) + o(\log (n)).
\end{align*}

\subsection*{\texorpdfstring{Bounding $S_{(6)}$}{Bounding S(6)}}

Again using the integral bound, we have 
\begin{align*}
    S_{(6)} &= \sum\limits_{k = 1}^{j - 1} \frac{1}{n - k} \le \frac{1}{n - j + 1} + \sum\limits_{k = n - j + 2}^{n - 1} \frac{1}{k} \le  \int\limits_{n - j + 1}^{n - 1} \frac{\mathsf dx}{x} + o(\log (n)) \\
    &= \log\left(\frac{n - 1}{n - j + 1}\right) + o(\log (n)) 
\end{align*}

\subsection*{Combining all together}  We note that the first three terms in Lemma~\ref{lem:b_j_row_norm_bound} are the same as the bound we have for $S_{(1)},$ $S_{(2)},$ and $S_{(3)}.$ Therefore, it suffices to bound the last three terms. We do it as follows: First note that, 
\begin{equation*}
d_j = \sqrt{\tfrac{\log(n - j + 1)}{\pi} + \alpha_{n - j + 1}} < \sqrt{\tfrac{\log(n)}{\pi} + \alpha_{n - j + 1}} = \sqrt{\tfrac{\log (n - 1)}{\pi}} + o(1),
\end{equation*}
Therefore,

\begin{align*}
    S_{(4)} &+ S_{(5)} + S_{(6)} \leq \frac{2d_j}{\sqrt{\pi}}\log\left( \frac{4n}{n - j + 1} \right) + \frac{\sqrt{6\log(n - 1)}}{\pi} \log\left(\frac{n - 1}{n - j + 1}\right) + \log\left(\frac{n - 1}{n - j + 1}\right)  + o(\log (n))\\
    &\le \frac{2\sqrt{\log(n - 1)}}{\pi}\log\left( \frac{n - 1}{n - j + 1}  \right) + o(\log (n)) + \left(\frac{\sqrt{6}}{\pi} + 1\right)\sqrt{\log(n - 1)} \log\left(\frac{n - 1}{n - j + 1}\right) + o(\log (n))\\
    &=  \left(\frac{2}{\pi} +\frac{\sqrt{6}}{\pi} + 1\right)\sqrt{\log(n - 1)} \log\left(\frac{n - 1}{n - j + 1}\right) + o(\log (n))\\
    &\le 3\sqrt{\log(n - 1)}\log\left(\frac{n - 1}{n - j + 1}\right) + o(\log (n)) = 3\sqrt{\log (n)}\log\left(\frac{n}{n - j + 1}\right) + o(\log (n)),
\end{align*}
concluding the proof of the Lemma~\ref{lem:b_j_row_norm_bound}.
\end{proof}

\begin{lemma}
[Bound on row-norm of $\widetilde B$]
\label{lem:rowNormOfB}
Let $(\alpha_m)_{n\geq 0}$ be the sequence defined in eq. \eqref{eq:alpha_n}. 
Then the maximum squared row norm of the matrix $\widetilde{B}$ has the following expression:
\begin{equation*}
   \norm{\widetilde{B}}_{2\to \infty} = \max_{j}  \sum\limits_{k = 1}^{j} (\widetilde{B}_{j, k})^2 = \frac{\log^2 (n) }{\pi^2} + \frac{2(\alpha_{\infty} - \frac{1}{\pi}\log(2))\log(n)}{\pi}  + o(\log (n)) \quad \text{where} \quad \alpha_\infty := \lim_{m \to \infty} \alpha_m.
\end{equation*}
\end{lemma}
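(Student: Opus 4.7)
\medskip

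The plan is to apply Lemma~\ref{lem:b_j_row_norm_bound} to every row and then optimize over $j$. The lemma expresses the squared row norm as a leading contribution $M(j) := d_j^2 d_{n-j+1}^2$ plus three explicit error terms and an $o(\log n)$ remainder, so the whole task reduces to (i) locating the argmax of $M$, (ii) computing $M$ there, and (iii) showing the error contribution at that optimum is $o(\log n)$ while the error does not overwhelm the main-term deficit anywhere else.

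For (i)--(ii), substitute the asymptotic $d_j^2 = \alpha_{n-j+1} + \tfrac{\log(n-j+1)}{\pi}$ from eq.~\eqref{eq:d_j_definition} and use the fact (Lemma~\ref{lem:monotonicityofa_n}) that $\alpha_m \to \alpha_{\infty}$. Expanding the product,
\begin{equation*}
    M(j) = \frac{\log(j)\log(n-j+1)}{\pi^2} + \frac{\alpha_\infty\bigl(\log(j)+\log(n-j+1)\bigr)}{\pi} + O(1) + o(1),
\end{equation*}
and since the product $\log(j)\log(n-j+1)$ subject to $j+(n-j+1)=n+1$ is strictly concave in $\log j$ with maximum at $j = \lfloor (n+1)/2\rfloor$, the argmax is $j^\star \sim n/2$. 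At $j=j^\star$ we have $\log(j)=\log(n-j+1)=\log(n)-\log(2)+o(1)$, giving
\begin{equation*}
    M(j^\star) = \frac{(\log(n)-\log 2)^2}{\pi^2} + \frac{2\alpha_\infty(\log(n)-\log 2)}{\pi} + o(\log n) = \frac{\log^2 n}{\pi^2} + \frac{2(\alpha_\infty - \tfrac{\log 2}{\pi})\log n}{\pi} + o(\log n),
\end{equation*}
which is exactly the claimed expression. Since the lower bound in Lemma~\ref{lem:b_j_row_norm_bound} already gives $\sum_k \widetilde{B}_{j,k}^2 \ge M(j)$, evaluating at $j^\star$ proves the ``$\ge$'' direction.

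For (iii), substitute $j=j^\star$ into the three error terms of Lemma~\ref{lem:b_j_row_norm_bound}: the ratios $\tfrac{n-1}{n-j+1}$, $\tfrac{4n}{n-j+1}$, and $\tfrac{n}{n-j+1}$ all equal constants $(2,\;8,\;2)$ up to $o(1)$, so the three error terms collapse to $O(1)+O(1)+O(\sqrt{\log n}) = o(\log n)$; combining with $M(j^\star)$ yields the ``$\le$'' direction at this particular $j$. To conclude that $j^\star$ is genuinely the argmax rather than merely a local optimum of the bound, split the range of $j$: when $t := j/n$ stays in a compact subset of $(0,1)$, $M(j)$ is within $O(\log n)$ of $M(j^\star)$ and the error terms remain $O(\sqrt{\log n})$; in the extreme regimes where $j$ or $n-j+1$ is $o(n)$, the main term $M(j)$ drops strictly below $M(j^\star)$ by $\Omega(\log^2 n)$ and this loss dominates the possible growth of the error.

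The hard part is precisely the boundary regime $n-j+1 = n^\beta$ with $\beta \in (0,1)$, since there $M$ loses only $(1-\beta)\log^2 n/\pi^2$ while the coarse bound on $S_{(3)}$ in Lemma~\ref{lem:b_j_row_norm_bound} gives $\frac{\sqrt{6\beta}(1-\beta)^2}{\pi^2}\log^2 n$, which is too large if the factor $\sqrt{\log(n-1)}+\sqrt{\log(n-j+1)}$ is simplified to just $\sqrt{\log(n-1)}$. The proof avoids this by retaining the full denominator in the intermediate bound (i.e.\ using $\tfrac{\log((n-1)/u)\log(4n/u)}{\sqrt{\log(n-1)}+\sqrt{\log u}}$ rather than $\tfrac{\log^2(4n/u)}{\sqrt{\log(n-1)}}$) before dropping to the statement of Lemma~\ref{lem:b_j_row_norm_bound}; with this sharper form one checks by a one-variable calculus exercise in $\beta$ that $\beta+\tfrac{3(1-\beta)^2}{4}+\tfrac{\sqrt{6\beta}(1-\beta)^2}{1+\sqrt{\beta}} \le 1$ with equality only at $\beta=1$, which is exactly the statement that no $j$ other than those with $j\asymp n-j+1\asymp n$ can compete. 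Taking the square root of the resulting equality completes the proof.
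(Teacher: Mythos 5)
Your overall route is the same as the paper's: apply Lemma~\ref{lem:b_j_row_norm_bound}, read off the main term $M(j)=d_j^2d_{n-j+1}^2$, locate the maximizer at $j^\star\sim n/2$, and verify that the error contributions are $o(\log n)$ there while the main-term deficit absorbs the error elsewhere. You have correctly identified the delicate regime $n-j+1=n^\beta$ with $\beta\in(0,1)$; indeed, the paper's own Case 2 ($n-j+1=o(n)$) invokes $d_j=o(\log^{1/2}n)$, which fails when $\log(n-j+1)=\Theta(\log n)$, so the paper's case analysis does not actually cover this regime either, and your concern is genuine.

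However, the inequality you assert to close the gap is false. Taking $\beta=0.25$,
\[
\beta+\tfrac{3(1-\beta)^2}{4}+\tfrac{\sqrt{6\beta}(1-\beta)^2}{1+\sqrt{\beta}}
=0.25+0.4219+0.4593\approx 1.131>1,
\]
and the expression exceeds $1$ for all $\beta$ roughly in $(0.05,0.45)$. The reason is that you only sharpened $S_{(3)}$ by retaining the full denominator $\sqrt{\log(n-1)}+\sqrt{\log u}$; the $S_{(2)}$ bound needs the same treatment. The step $\bigl(\sqrt{a}-\sqrt{b}\bigr)^2\le a-b$ used to bound $S_{(2)}$ discards a factor $\frac{\sqrt{a}-\sqrt{b}}{\sqrt{a}+\sqrt{b}}$; keeping it gives
\[
S_{(2)}\lesssim \frac{3(1-\beta)^2}{4\pi^2}\cdot\frac{1-\sqrt{\beta}}{1+\sqrt{\beta}}\,\log^2 n,
\]
and then the required pointwise condition, after substituting $t=\sqrt{\beta}$ and dividing by $(1-t)(1+t)$, reduces to $\tfrac{3(1-t)^2}{4}+\sqrt{6}\,t(1-t)\le 1$ on $[0,1]$, which holds with maximum $\approx 0.88$ at $t\approx 0.28$. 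So the idea of retaining the full denominator in the intermediate bounds is the right one, but it has to be applied to \emph{both} error terms; as written, the one-variable inequality you check is simply not true, and the $\le$ direction of the lemma does not follow from your argument in the boundary regime.
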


\begin{proof}
We recall the bounds for the $j$-th row of the matrix $\widetilde{B}$ from Lemma~\ref{lem:b_j_row_norm_bound}:
\begin{align*}
   d_j^2 d_{n - j + 1}^2 \le  \sum\limits_{k = 1}^{j} (\widetilde{B}_{j, k})^2 \le d_j^2 d_{n - j + 1}^2 &+ \frac{3}{4\pi^2} \log^2 \left(\frac{n - 1}{n - j + 1}\right) + \frac{\sqrt{6} d_j}{\pi^{3/2} \sqrt{\log(n - 1)}} \log^2\left( \frac{4n}{n - j + 1} \right)\\
    & + 3\sqrt{\log (n) }\log\left(\frac{n}{n - j + 1}\right) + o(\log (n)).
\end{align*}

For the upper bound, we consider three asymptotic regimes,  
\begin{enumerate}
    \item \textbf{Case 1:}  $j = o(n)$.
    \item \textbf{Case 2:} $n - j + 1 = o(n)$; and
    \item \textbf{Case 3:} $j=\Theta(n)$ and $n-j + 1 = \Theta(n)$;
\end{enumerate}

The first two cases are easy, as in both cases, either $d_j$ or $d_{n-j+1}$ dominates and are $O(\log^{1/2}n)$. The tricky case is the third case. We first deal with the easy case before discussing the harder case:

\subsection*{Case 1:} When 
 $j = o(n)$, we use general upper bounds for $d_j$ and $d_{n - j + 1}$ using eq. \eqref{eq:d_j_definition}: 
\begin{equation*}
    d_j \le \sqrt{\alpha_{\infty} + \frac{\log (n - j + 1)}{\pi}} = O(\log^{1/2} n), \qquad d_{n - j + 1} \le \sqrt{\alpha_{\infty} + \frac{\log (j)}{\pi}} = o(\log^{1/2} n).
\end{equation*}

\subsection*{Case 2:}
  $n - j + 1 = o(n)$, we have:
\begin{equation*}
    d_j \le \sqrt{\alpha_{\infty} + \frac{\log (n - j + 1)}{\pi}} = o(\log^{1/2} n), \qquad d_{n - j + 1} \le \sqrt{\alpha_{\infty} + \frac{\log (j)}{\pi}} = O(\log^{1/2} n).
\end{equation*}

Substituting these bounds into the inequality, we obtain:
\begin{align*}
    \sum\limits_{k = 1}^{j} (\widetilde{B}_{j, k})^2 &\le o(\log^2 (n)) + \frac{3}{4\pi^2} \log^2 (n) + o(\log^2 (n)) + o(\log^{3/2} n) \\
    &\le \frac{3}{4\pi^2} \log^2(n) + o(\log^2 (n)).
\end{align*}

\subsection*{Case 3:} When $j = \Theta(n)$ and $n - j + 1 = \Theta(n)$, we can compute $d_j^2$ and $d_{n - j + 1}^2$ more precisely:
\begin{align}
\begin{split}
\label{eq:bound_on_d_j}    
    &d_j^2 = \alpha_{n - j + 1} + \frac{\log(n - j + 1)}{\pi} = \alpha_{\infty} + \frac{\log(n - j + 1)}{\pi} + o(1), \\
    &d_{n - j + 1}^2 = \alpha_j + \frac{\log(j)}{\pi} = \alpha_{\infty} + \frac{\log(j)}{\pi} + o(1).
\end{split}
\end{align}

Now, recall from  Lemma~\ref{lem:b_j_row_norm_bound}, we have  
\begin{align*}
    \sum\limits_{k = 1}^{j} (\widetilde{B}_{j, k})^2 & \le d_j^2d_{n - j + 1}^2 +  \frac{3}{4\pi^2} \log^2 \left(\frac{n - 1}{n - j + 1}\right) + \frac{\sqrt{6}d_j}{\pi^{3/2}\sqrt{\log (n - 1})}\log^2\left( \frac{4n}{n - j + 1} \right)\\
    &\qquad + 3\sqrt{\log (n)}\log\left(\frac{n}{n - j + 1}\right) + o(\log (n)) 
\end{align*}
Therefore, plugging the bounds in eq. \eqref{eq:bound_on_d_j} of $d_j$ and $d_{n-j-1}$, we get 
\begin{align*}
    \sum\limits_{k = 1}^{j} (\widetilde{B}_{j, k})^2 & \le \paren{\alpha_{\infty} + \frac{\log(n - j + 1)}{\pi} + o(1)}\paren{ \alpha_{\infty} + \frac{\log(j)}{\pi} + o(1)} +  \frac{3}{4\pi^2} \log^2 \left(\frac{n - 1}{n - j + 1}\right) \\
    & \qquad  + \frac{\sqrt{6}d_j}{\pi^{3/2}\sqrt{\log (n - 1})}\log^2\left( \frac{4n}{n - j + 1} \right) + 3\sqrt{\log (n)}\log\left(\frac{n}{n - j + 1}\right) + o(\log (n)) 
\end{align*}
Now, recall that $a_\infty \leq 1.0663$, and $\log(j)\leq \log(n)$. Furthermore, due to the case we are in, 
\[
\log \left(\frac{n - 1}{n - j + 1}\right) = \Theta(1) \quad \text{and} \quad \frac{\sqrt{6}d_j}{\sqrt{\pi^3\log (n - 1})}\log^2\left( \frac{4n}{n - j + 1} \right) = \Theta\paren{\frac{d_j^2}{\sqrt{\log(n)}}} = \Theta(\sqrt{\log(n)})
\]

Using this fact and expanding the product, we get 
\begin{align*}
    \sum\limits_{k = 1}^{j} (\widetilde{B}_{j, k})^2 
    &\le \frac{\log(n - j + 1) \log(j)}{\pi^2} + \frac{\alpha_{\infty}\left( \log(n - j + 1) + \log(j) \right)}{\pi}  + \Theta(1) +  \frac{3}{4\pi^2} \log^2 \left(\frac{n - 1}{n - j + 1}\right) \\
    & \qquad  + \frac{\sqrt{6}d_j}{\pi^{3/2}\sqrt{\log (n - 1})}\log^2\left( \frac{4n}{n - j + 1} \right) + 3\sqrt{\log (n)}\log\left(\frac{n}{n - j + 1}\right) + o(\log (n)) \\
    &\le \frac{\log(n - j + 1) \log(j)}{\pi^2} + \frac{\alpha_{\infty}\left( \log(n - j + 1) + \log(j) \right)}{\pi} + \underbrace{\Theta(1)  + \Theta(\log^{1/2} n)}_{=o(\log(n))} + o(\log (n)) 
\end{align*}
Now $\log(n-j+1),\log(j) \leq \log(n)$, so $\log(n-j+1)+\log(j) \leq 2\log(n)$. Further, $\log(n - j + 1) \log(j) \leq \log^2(n/2)$. This is because the maximum of the two is achieved when $n-j+1=j$ or when $j\in \{ \lfloor(n-1)/2\rfloor, \lceil (n-1)/2 \rceil \}$. 
That is, 
\begin{align*}
    \sum\limits_{k = 1}^{j} (\widetilde{B}_{j, k})^2 
    &\le \frac{\log(n - j + 1) \log(j)}{\pi^2} + \frac{2\alpha_{\infty}}{\pi} \log (n) + o(\log (n)) \\
    &\le \frac{\log^2(n/2)}{\pi^2} + \frac{2\alpha_{\infty}}{\pi} \log (n) + o(\log (n)) \\
    &\le \frac{\log^2 (n)}{\pi^2} + \frac{2(\alpha_{\infty} - \log(2)/\pi)}{\pi} \log (n) + o(\log (n)).
\end{align*}

In the last regime, the dominant term is $\frac{\log^2 (n)}{\pi^2}$. Since in the last regime the leading term is only $\frac{3\log^2 (n)}{4\pi^2}$ and in the second regime it is asymptotically smaller, the maximum over all $j$ is achieved in the last regime, concluding the upper bound.

\subsection*{Lower Bound}

For the lower bound, we simply consider $j = n/2$, which gives us the same expression, concluding the proof of Lemma~\ref{lem:rowNormOfB}. 
\end{proof}

\begin{corollary}
\label{cor:MaxSE_NSR}
The maximum squared error (MaxSE) of the Normalized Square Root (NSR) factorization is given by:
\begin{equation*}
    \mathrm{MaxSE}(\widetilde{B}, \widetilde{C}) = \alpha_{\infty} - \frac{\log (2)}{\pi} + \frac{\log (n)}{\pi} + o(1) \le \MaxSEUpperBoundExplicit.
\end{equation*}
\end{corollary}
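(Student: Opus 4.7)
The plan is to derive Corollary~\ref{cor:MaxSE_NSR} as a nearly-immediate consequence of Lemma~\ref{lem:rowNormOfB}, with the only real work being a careful square-root expansion and the evaluation of the closed-form constant $\alpha_\infty$. First I would observe that the Normalized Square Root construction in eq.~\eqref{eq:NSRFactorization} forces every column of $\widetilde{C}$ to have unit $\ell_2$ norm, so that $\norm{\widetilde{C}}_{1 \to 2} = 1$. By the definition of $\mathrm{MaxSE}(L,R)$ from Section~\ref{sec:introduction} (which equals $\norm{L}_{2 \to \infty} \norm{R}_{1 \to 2}$ under $1$-Gaussian differential privacy), this immediately reduces the problem to
\[
\mathrm{MaxSE}(\widetilde{B}, \widetilde{C}) = \norm{\widetilde{B}}_{2 \to \infty} = \sqrt{\max_j \sum_{k=1}^{j} \widetilde{B}_{j,k}^2}.
\]

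Next, I would plug in the bound provided by Lemma~\ref{lem:rowNormOfB}, which says
\[
\max_j \sum_{k=1}^{j} \widetilde{B}_{j,k}^2 = \frac{\log^2(n)}{\pi^2} + \frac{2(\alpha_\infty - \log(2)/\pi)}{\pi} \log(n) + o(\log n),
\]
and factor out $\frac{\log^2(n)}{\pi^2}$ from under the square root. A first-order Taylor expansion of $\sqrt{1+x}$ applied to the bracketed expression then yields
\[
\mathrm{MaxSE}(\widetilde{B}, \widetilde{C}) = \frac{\log(n)}{\pi} + \left(\alpha_\infty - \frac{\log 2}{\pi}\right) + o(1).
\]
The key point is that the $o(\log n)$ error inside the square root becomes $o(1)$ after this expansion, which is the level of precision demanded by the corollary.

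Finally, to turn this into the explicit numerical upper bound $\MaxSEUpperBoundExplicit$, I would invoke Watson's closed form $\alpha_\infty = (\gamma + \log 16)/\pi$ from the Landau-constant analysis mentioned after eq.~\eqref{eq:firstsplit}--\eqref{eq:secondsplit}. Simplifying gives
\[
\alpha_\infty - \frac{\log 2}{\pi} = \frac{\gamma + \log 16 - \log 2}{\pi} = \frac{\gamma + \log 8}{\pi} \approx 0.8457,
\]
which is bounded above by $0.846$, completing the proof.

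No step here constitutes a genuine obstacle, since Lemma~\ref{lem:rowNormOfB} has already carried out the heavy analytic work of isolating the dominant $j \sim n/2$ regime and controlling the five subdominant contributions $S_{(2)},\ldots,S_{(6)}$. The only minor subtlety worth flagging explicitly in the write-up is the justification that the $o(\log n)$ remainder survives the square root as an $o(1)$ term, which one verifies by writing $\sqrt{L^2/\pi^2 + aL + o(L)} = (L/\pi)\sqrt{1 + a\pi^2/L + o(1/L)}$ with $L = \log n$ and expanding.
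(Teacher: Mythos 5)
Your proposal is correct and follows essentially the same route as the paper: using $\norm{\widetilde{C}}_{1\to 2}=1$ to reduce to $\norm{\widetilde{B}}_{2\to\infty}$, applying Lemma~\ref{lem:rowNormOfB}, expanding the square root (with the $o(\log n)$ remainder dropping to $o(1)$), and substituting the value of $\alpha_\infty$. The only cosmetic difference is that the paper cites $\alpha_\infty \approx 1.0663$ from Theorem~\ref{thm:square_root_bounds} while you invoke Watson's closed form directly; both yield the same numerical bound.
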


\begin{proof}
The maximum column norm of the matrix $\widetilde{C}$ is $1$ by design, so the factorization error is equal to the maximum row norm of the matrix $\widetilde{B}$. Hence,
\begin{equation*}
    \sqrt{\max_{j} \sum\limits_{k = 1}^{j} (\widetilde{B}_{j, k})^2} 
    = \sqrt{\frac{\log^2 (n)}{\pi^2} + \frac{2(\alpha_{\infty} - \log(2)/\pi)}{\pi} \log (n) + o(\log (n))}= \frac{\log (n)}{\pi} + \left( \alpha_{\infty} - \frac{\log (2)}{\pi} \right) + o(1).
\end{equation*}

This proves the stated expression. Substituting the numerical value of $\alpha_{\infty}$ from Theorem~\ref{thm:square_root_bounds} completes the proof of Corollary~\ref{cor:MaxSE_NSR}.
\end{proof}
Before we compute the MeanSE of NSR, we prove the following auxiliary lemma:

\begin{lemma}
    \label{lem:sum_log_j_log_n_minus_j}
    The following asymptotic expansion holds:
    \begin{equation*}
        \frac{1}{n} \sum\limits_{j = 1}^{n} \log (j)\log(n + 1 - j) = \log^2(n) - 2\log (n) + 2 - \frac{\pi^2}{6} + o(1).
    \end{equation*}
\end{lemma}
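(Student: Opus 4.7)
The strategy is to extract the $\log(n)$ dependence by writing $\log(j) = \log(n) + \log(j/n)$ and $\log(n+1-j) = \log(n) + \log((n+1-j)/n)$, so that
\[
\log(j)\log(n+1-j) = \log^2(n) + \log(n)\Bigl[\log(j/n) + \log((n+1-j)/n)\Bigr] + \log(j/n)\log((n+1-j)/n).
\]
Averaging over $j \in [1,n]$ produces three contributions which I would analyze separately: a constant-in-$j$ term giving $\log^2(n)$, a pair of cross terms handled by Stirling's formula, and a genuine Riemann sum whose limit is an explicit integral.

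\textbf{Step 1 (cross terms).} The substitution $k = n+1-j$ gives $\sum_{j=1}^n \log((n+1-j)/n) = \sum_{k=1}^n \log(k/n)$, so the cross-term contribution equals $\tfrac{2\log(n)}{n}\bigl(\log(n!) - n\log(n)\bigr)$. Stirling's approximation $\log(n!) = n\log(n) - n + \tfrac{1}{2}\log(2\pi n) + O(1/n)$ yields $\log(n!) - n\log(n) = -n + O(\log n)$, and multiplying by $\tfrac{2\log(n)}{n}$ gives $-2\log(n) + o(1)$, which is the second term in the claimed expansion.

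\textbf{Step 2 (remaining sum as a Riemann sum).} The residual piece is $\frac{1}{n}\sum_{j=1}^n \log(j/n)\log((n+1-j)/n)$. The integrand $g(x) = \log(x)\log(1-x)$ satisfies $\lim_{x\to 0^+}g(x) = \lim_{x\to 1^-}g(x) = 0$, so it extends continuously to $[0,1]$. I would compute the integral via $\log(1-x) = -\sum_{k\ge 1}x^k/k$, giving
\[
\int_0^1 \log(x)\log(1-x)\,\mathsf dx = \sum_{k\ge 1}\frac{1}{k(k+1)^2} = \sum_{k\ge 1}\!\left(\tfrac{1}{k} - \tfrac{1}{k+1} - \tfrac{1}{(k+1)^2}\right) = 1 - \bigl(\tfrac{\pi^2}{6} - 1\bigr) = 2 - \tfrac{\pi^2}{6},
\]
using partial fractions, a telescoping sum, and $\zeta(2) = \pi^2/6$.

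\textbf{Main obstacle.} The sum is not literally a Riemann sum for $g$ because the second factor is $\log(1 - (j-1)/n)$ rather than $\log(1 - j/n)$; in particular the naive substitution would be singular at $j = n$, even though the actual term vanishes. I would split into two error components: first, the shift discrepancy $\log(1 - (j-1)/n) - \log(1 - j/n) = \log(1 + 1/(n-j))$ is bounded by $1/(n-j)$ for $j \le n-1$, and weighting by $|\log(j/n)|$ and summing yields a total of $O(\log n)$, which becomes $o(1)$ after dividing by $n$; second, the difference between $\frac{1}{n}\sum_{j=1}^{n-1}g(j/n)$ and $\int_0^1 g(x)\,\mathsf dx$ is $o(1)$ since $g$ is continuous on $[0,1]$ (standard Riemann sum convergence). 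Combining these error estimates with Steps 1 and 2 gives $\log^2(n) - 2\log(n) + 2 - \tfrac{\pi^2}{6} + o(1)$, as claimed.
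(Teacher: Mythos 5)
Your proof is correct and follows essentially the same route as the paper: the same decomposition $\log(j)=\log(n)+\log(j/n)$, reduction of the cross term via Stirling (which the paper packages as a Riemann sum for $\int_0^1\log x\,\mathsf dx$), and the same series computation of $\int_0^1\log(x)\log(1-x)\,\mathsf dx=2-\tfrac{\pi^2}{6}$. The only difference is that you spell out the error analysis (the shift from $(n+1-j)/n$ to $1-j/n$ and the Riemann-sum error) that the paper asserts implicitly, which is a welcome bit of extra rigor but not a different argument.
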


\begin{proof}
We first rewrite the sum as follows:
\begin{align*}
    \frac{1}{n} \sum\limits_{j = 1}^{n} \log (j)\log(n + 1 - j) &= \frac{1}{n} \sum\limits_{j = 1}^{n} \left(\log (n) + \log \paren{\tfrac{j}{n}}\right)\left(\log (n) + \log \tfrac{n + 1 - j}{n}\right) \\
    &= \log^2 (n) + \frac{\log (n)}{n} \sum\limits_{j = 1}^{n} \left[\log \paren{\tfrac{j}{n}} + \log \tfrac{n + 1 - j}{n}\right] + \frac{1}{n} \sum\limits_{j = 1}^{n} \log \left(\tfrac{j}{n}\right) \log \left(\tfrac{n + 1 - j}{n}\right) \\
    &= \log^2 (n) + \frac{2\log (n)}{n} \sum\limits_{j = 1}^{n} \log \paren{\tfrac{j}{n}} + \frac{1}{n} \sum\limits_{j = 1}^{n} \log \left(\tfrac{j}{n}\right) \log \left(\tfrac{n + 1 - j}{n}\right).
\end{align*}

Now we recognize that the two sums are Riemann sums and approximate them by their corresponding integrals, up to $o(1)$ terms:

\begin{equation*}
    \frac{1}{n} \sum\limits_{j = 1}^{n} \log (j)\log(n + 1 - j) = \log^2 (n) + 2\log (n) \int\limits_{0}^{1} \log (x) \, \mathsf dx + \int\limits_{0}^{1} \log (x) \log (1 - x) \, \mathsf dx + o(1).
\end{equation*}

The first integral can be computed explicitly:
\begin{equation*}
    \int \log (x) \, \mathsf dx = x \log (x) - x + \mathsf{const.},
\end{equation*}
and evaluating from 0 to 1 gives $-1$. The second integral is more involved; we compute it using a method suggested in \cite{stackexchangeintegral}:

\begin{align*}
    \int\limits_0^1 \log(1-x) \log (x) \, \mathsf dx &= \int\limits_0^1 \left(-\sum_{n=1}^\infty \frac{x^n}{n}\right) \log (x) \, \mathsf dx 
    = \sum_{n=1}^\infty \frac{-1}{n} \int\limits_0^1 x^n \log (x) \, \mathsf dx \\
    &= \sum_{n=1}^\infty \frac{1}{n} \cdot \frac{1}{(n+1)^2} 
    = \sum_{n=1}^\infty \left[ \frac{1}{n} - \frac{1}{n+1} - \frac{1}{(n+1)^2} \right] = 2 - \sum_{n=1}^\infty \frac{1}{n^2} 
    = 2 - \frac{\pi^2}{6}.
\end{align*}
Substituting the integrals back into the equation, we get Lemma~\ref{lem:sum_log_j_log_n_minus_j}.
\end{proof}
\begin{theorem}
\label{thm:meanSE_NSR}
    The MeanSE of the Normalized Square Root (NSR) factorization is given by the following identity:
    \begin{equation*}
        \mathrm{MeanSE}(\widetilde{B}, \widetilde{C}) = \MeanSEUpperBound,
    \end{equation*}
where  $\alpha_{\infty}$ is defined in Theorem~\ref{thm:square_root_bounds} and equal to $\alpha_{\infty} = \frac{\gamma + \log 16}{\pi}$, see Corollary~\ref{cor:alpha_infinity}. 
\end{theorem}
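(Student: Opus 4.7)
Since the columns of $\widetilde{C}$ are normalized by construction, $\|\widetilde{C}\|_{1\to 2}=1$, and therefore
\[
  \mathrm{MeanSE}(\widetilde{B},\widetilde{C})^{2} \;=\; \frac{1}{n}\|\widetilde{B}\|_{F}^{2} \;=\; \frac{1}{n}\sum_{j=1}^{n}\sum_{k=1}^{j}\widetilde{B}_{j,k}^{\,2}.
\]
The approach is to (i) use Lemma~\ref{lem:b_j_row_norm_bound} to split each squared row norm into a dominant piece $d_j^{2}\,d_{n-j+1}^{2}$ and three nonnegative corrections, (ii) compute the average of the dominant piece exactly (up to $o(1)$) using Lemma~\ref{lem:sum_log_j_log_n_minus_j}, (iii) show that the averages of the three correction terms are only $o(\log n)$, and (iv) take square roots.

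For the dominant piece, the plan is to use the exact identity $d_{j}^{2}=\alpha_{n-j+1}+\tfrac{\log(n-j+1)}{\pi}$ from eq.~\eqref{eq:d_j_definition} together with $\alpha_{m}=\alpha_{\infty}-\varepsilon_{m}$ with $0\le\varepsilon_{m}\le \tfrac{1}{5m}$. Expanding the product $d_j^{2}d_{n-j+1}^{2}$ yields four terms: a constant $\alpha_{n-j+1}\alpha_{j}$, two mixed terms of the form $\alpha\cdot \tfrac{\log(\cdot)}{\pi}$, and the cross-term $\tfrac{\log(j)\log(n-j+1)}{\pi^{2}}$. The cross-term, averaged, is exactly Lemma~\ref{lem:sum_log_j_log_n_minus_j}: it contributes $\tfrac{1}{\pi^{2}}(\log^{2} n-2\log n+2-\tfrac{\pi^{2}}{6})+o(1)$. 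The mixed terms, using Stirling's formula $\tfrac{1}{n}\sum_{j=1}^{n}\log j=\log n-1+o(1)$ together with $\tfrac{1}{n}\sum_j \varepsilon_{n-j+1}\log j=O(\log^{2} n/n)=o(1)$, each contribute $\tfrac{\alpha_{\infty}}{\pi}(\log n-1)+o(1)$. The constant term averages to $\alpha_{\infty}^{2}+o(1)$ by the same $\varepsilon_{m}$ bound. Setting $X:=\tfrac{\log n}{\pi}+\alpha_{\infty}-\tfrac{1}{\pi}$ and regrouping,
\[
  \frac{1}{n}\sum_{j=1}^{n}d_{j}^{2}\,d_{n-j+1}^{2} \;=\; X^{2}\;+\;\frac{1-\pi^{2}/6}{\pi^{2}}\;+\;o(1).
\]

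Next, I would bound the averages of the three correction terms in Lemma~\ref{lem:b_j_row_norm_bound} by Riemann sum arguments under the substitution $m=n-j+1$. The first correction averages to $\tfrac{3}{4\pi^{2}}\cdot\tfrac{1}{n}\sum_{m}\log^{2}(n/m)\to \tfrac{3}{2\pi^{2}}\int_{0}^{1}\log^{2}(1/x)\,\mathsf dx=\tfrac{3}{\pi^{2}}$, an $O(1)$ contribution. The second correction has the prefactor $\tfrac{d_{j}}{\sqrt{\log n}}=O(1)$ times an $O(1)$ Riemann integral, so it is also $O(1)$. The third correction is $3\sqrt{\log n}\cdot\tfrac{1}{n}\sum_{j}\log(n/(n-j+1))\to 3\sqrt{\log n}\cdot\int_{0}^{1}\log(1/x)\,\mathsf dx=3\sqrt{\log n}$, which is $o(\log n)$. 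Together with the lower bound $\sum_{k}\widetilde{B}_{j,k}^{2}\ge d_{j}^{2}d_{n-j+1}^{2}$, this sandwiches
\[
  X^{2}+\tfrac{1-\pi^{2}/6}{\pi^{2}}+o(1) \;\le\; \mathrm{MeanSE}^{2} \;\le\; X^{2}+o(\log n).
\]
Since $X=\Theta(\log n)$, the Taylor expansion $\sqrt{X^{2}+\Delta}=X+\tfrac{\Delta}{2X}+O(\Delta^{2}/X^{3})$ with $\Delta=o(\log n)$ gives $\mathrm{MeanSE}=X+o(1)$, which is exactly the statement.

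The main obstacle I anticipate is the careful bookkeeping of three independent error scales: the $O(\varepsilon_{m})=O(1/m)$ corrections inside $\alpha_{m}$ (which must be shown to contribute $o(1)$ after $1/n$ averaging against $\log j$ factors), the $O(1)$ residue $\tfrac{1-\pi^{2}/6}{\pi^{2}}$ arising from Lemma~\ref{lem:sum_log_j_log_n_minus_j} (which survives inside the square but is absorbed by the final square root), and the $o(\log n)$ contribution from the nontrivial correction terms of Lemma~\ref{lem:b_j_row_norm_bound} (in particular the $3\sqrt{\log n}\log(n/(n-j+1))$ term, whose averaged size $\Theta(\sqrt{\log n})$ is the largest of the three corrections but is still suppressed after dividing by $X$ in the square root step).
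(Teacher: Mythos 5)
Your proposal follows the paper's proof of Theorem~\ref{thm:meanSE_NSR} essentially step for step: start from $\tfrac{1}{n}\|\widetilde{B}\|_F^2$, invoke Lemma~\ref{lem:b_j_row_norm_bound}, expand $d_j^2d_{n-j+1}^2$ via $d_j^2=\alpha_{n-j+1}+\tfrac{\log(n-j+1)}{\pi}$, use Lemma~\ref{lem:sum_log_j_log_n_minus_j} for the cross-term, check the $\varepsilon_m$-corrections average to $o(1)$, show the three residual terms average to $o(\log n)$, and take square roots. The only noteworthy (and valid) variation is that you dispatch the $\tfrac{1}{n}\sum_j\varepsilon_j\log(n-j+1)$ term via the crude bound $O(\log^2 n/n)=o(1)$ where the paper computes a dilogarithm integral; your slight arithmetic slip in the first correction's constant ($\tfrac{3}{\pi^2}$ vs.\ $\tfrac{3}{2\pi^2}$) is immaterial since only $O(1)$ is needed.
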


\begin{proof}
    The squared MeanSE error is given by:
    \begin{equation*}
        \text{MeanSE}^2(\widetilde{B}, \widetilde{C}) = \frac{1}{n}\sum\limits_{j = 1}^{n}\sum\limits_{k = 1}^{j} \widetilde{B}^2_{j, k}
    \end{equation*}

    We bounded the squared $j$-th row norm of the matrix $\widetilde{B}$ in Lemma~\ref{lem:b_j_row_norm_bound}:
    \begin{align*}
        d_j^2 d_{n - j + 1}^2 \le \sum\limits_{k = 1}^{j} (\widetilde{B}_{j, k})^2 &\le d_j^2 d_{n - j + 1}^2 + \frac{3}{4\pi^2} \log^2 \left(\frac{n - 1}{n - j + 1}\right) + \frac{\sqrt{6}d_j}{\pi^{3/2} \sqrt{\log(n - 1)}} \log^2\left( \frac{4n}{n - j + 1} \right)\\
        &\qquad + 5\sqrt{\log (n)} \log\left(\frac{n}{n - j + 1}\right) + o(\log (n))
    \end{align*}

    Now we need to average these values over the index $j$. First, analyze the main asymptotic term:
    \begin{align*}
        \sum\limits_{j = 1}^n d_j^2 d_{n - j + 1}^2 &= \sum\limits_{j = 1}^n \left( \frac{\log(n - j + 1)}{\pi} + \alpha_{n - j + 1} \right) \left( \frac{\log(j)}{\pi} + \alpha_j \right)\\
        &= \frac{1}{\pi^2} \sum\limits_{j = 1}^{n} \log(n - j + 1) \log(j) + \frac{1}{\pi} \sum\limits_{j = 1}^{n} \left[ \alpha_j \log(n - j + 1) + \alpha_{n - j + 1} \log(j) \right] + \sum\limits_{j = 1}^{n} \alpha_j \alpha_{n - j + 1}.
    \end{align*}

    The first sum we analyzed in Lemma~\ref{lem:sum_log_j_log_n_minus_j}. For the second sum, we use the decomposition $\alpha_j = \alpha_{\infty} - \epsilon_j$, where $0 < \epsilon_j \le \frac{1}{2j}$, as given by Theorem~\ref{thm:square_root_bounds}. The third sum is then trivially $\alpha_{\infty}^2 + o(1)$. We thus rewrite the second sum as:
    \begin{align*}
        &\frac{2\alpha_{\infty}}{n\pi} \sum\limits_{j = 1}^{n} \log(j) - \frac{1}{\pi n} \sum\limits_{j = 1}^{n} \left[ \epsilon_j \log(n - j + 1) + \epsilon_{n - j + 1} \log(j) \right] = \frac{2\alpha_{\infty}}{n\pi} \log(n!) - \frac{2}{\pi n} \sum\limits_{j = 1}^{n} \epsilon_j \log(n - j + 1)\\
        &\quad = \frac{2\alpha_{\infty} (\log (n) - 1)}{\pi} + o(1) - \frac{2}{\pi n} \sum\limits_{j = 1}^{n} \epsilon_j \log(n - j + 1).
    \end{align*}
We now show that the final term is $o(1)$:
\begin{align*}
    0 < \frac{2}{\pi n} \sum\limits_{j = 1}^{n} \epsilon_j \log(n - j + 1) &\le \frac{1}{\pi n} \sum\limits_{j = 1}^{n} \frac{\log(n - j + 1)}{j} = \frac{H_n \log (n)}{\pi n} + \frac{1}{\pi n^2} \sum\limits_{j = 1}^{n} \frac{\log(1 - \tfrac{j - 1}{n})}{j / n}\\
    &\le o(1) + \frac{1}{\pi n} \int\limits_0^1 \frac{\log(1 - x)}{x}\; \mathsf dx + o(n^{-1})\\
    &= o(1) - \frac{\mathrm{Li}_2(1)}{\pi n} = o(1) - \frac{\zeta(2)}{\pi n} = o(1),
\end{align*}
where $\mathrm{Li}_2(x)$ is the dilogarithmic function. 

Combining all terms, we get:
\begin{align*}
    \frac{1}{n} \sum\limits_{j = 1}^n d_j^2 d_{n - j + 1}^2 &= \frac{\log^2 (n) - 2\log (n) + 2 }{\pi^2} - \frac{1}{6} + \frac{2\alpha_{\infty}(\log (n) - 1)}{\pi} + \alpha_{\infty}^2 + o(1)\\
    &= \frac{\log^2 (n)}{\pi^2} + \frac{2}{\pi} \left( \alpha_{\infty} - \frac{1}{\pi} \right) \log (n) + o(\log (n)).
\end{align*}

The only remaining part is to prove that the contribution from the average of the extra upper bound terms is $o(\log (n))$:
\begin{align*}
    &\frac{1}{n} \sum\limits_{j = 1}^{n} \left[ \frac{3}{4\pi^2} \log^2\left(\frac{n - 1}{n - j + 1}\right) + \frac{\sqrt{6} d_j}{\pi^{3/2} \sqrt{\log(n - 1)}} \log^2\left( \frac{4n}{n - j + 1} \right) +  3\sqrt{\log (n)} \log\left(\frac{n}{n - j + 1}\right) + o(\log (n)) \right]\\
    &\qquad \le \left( \frac{3}{4\pi^2} + \frac{\sqrt{6}}{\pi^2} + o(1) \right) \cdot \frac{1}{n} \sum\limits_{j = 1}^{n} \log^2\left( \frac{4n}{n - j + 1} \right) + \frac{3 \sqrt{\log (n)}}{n} \sum\limits_{j = 1}^{n} \log\left( \frac{n}{n - j + 1} \right) + o(\log (n))\\
    &\qquad = \left( \frac{3}{4\pi^2} + \frac{\sqrt{6}}{\pi^2} + o(1) \right) \cdot \frac{1}{n} \sum\limits_{j = 1}^{n} \left( \log(4n) - \log(n - j + 1) \right)^2 + 3 \sqrt{\log (n)} \left[ \log (n) - \frac{\log (n)!}{n} \right] + o(\log (n))\\
    &\qquad \le \left( \frac{1}{3} + o(1) \right) \cdot \frac{1}{n} \sum\limits_{j = 1}^{n} \left[ \log^2(4n) - 2\log(4n) \log(n - j + 1) + \log^2(n - j + 1) \right] + o(\log (n))\\
    &\qquad = \left( \frac{1}{3} + o(1) \right) \left[ \log^2(4n) - 2\log(4n) \cdot \frac{\log (n)!}{n} + \frac{1}{n} \sum\limits_{j = 1}^{n} \log^2 (j) \right] + o(\log (n))\\
    &\qquad = \left( \frac{1}{3} + o(1) \right) \left[ \log^2 (n) + 2 \log(4) \log (n) - 2\log(4n)(\log (n) - 1) + \frac{1}{n} \sum\limits_{j = 1}^{n} \log^2 (j) \right] + o(\log (n))
\end{align*}

We use an integral bound for the sum of $\log^2 (j)$:
\begin{align*}
    \frac{1}{n} \sum\limits_{j = 1}^{n} \log^2 (j) &< \frac{1}{n} \int\limits_{1}^{n + 1} \log^2 (x) \; \mathsf dx = \frac{(n + 1)(\log^2(n + 1) - 2\log(n + 1) + 2) - 2}{n}\\
    &= \log^2 (n) - 2\log (n) + o(\log (n))
\end{align*}

This gives us 
\begin{align*}
    &\left( \frac{1}{3} + o(1) \right) \left[ \log^2 (n) + 2 \log 4 \cdot \log (n) - 2 \log(4n)(\log (n) - 1) + \log^2 (n) - 2\log (n) \right] + o(\log (n))\\
    &\qquad = \left( \frac{1}{3} + o(1) \right) \cdot 2\log (2) + o(\log (n)) = o(\log (n)),
\end{align*}
concluding the proof of Theorem~\ref{thm:meanSE_NSR}.
\end{proof}

\section{Improved Analysis of Known Factorizations}

\subsection{Group Algebra Factorization}
\label{sub:group_algebra}
Let $\omega=e^{\iota \pi/n}$ be a  $2n$-th root of unity.  A recent work by Henzinger and Upadhyay \cite{henzinger2025improved} presented the following factorization of the matrix $\counting = L R$ into two complex matrices, where 
\begin{align}
\label{eq:group_algebra_factorization}
    L = L_{\mathsf{HU}} :=  \begin{pmatrix}
    b_f(\omega^0) &  \dots & b_f(\omega^{2n - 1})\\
    \vdots  & \ddots & \vdots\\
    b_f(\omega^{-n + 1})  & \dots & b_f(\omega^{n})\\
\end{pmatrix} \in \mathbb{C}^{n \times 2n}, \quad 
R = R_{\mathsf{HU}} :=\begin{pmatrix}
    b_f(\omega^0)  & \dots & b_f(\omega^{n - 1})\\
    \vdots  & \ddots & \vdots\\
    b_f(\omega^{-2n + 1})  & \dots & b_f(\omega^{-n}) \\
\end{pmatrix} \in \mathbb{C}^{2n \times n},
\end{align}
with the values $b_f(\omega^k)$ given by the function: 
\begin{equation*}
        b_f(x) = \frac{1}{2n} \sum\limits_{l = 0}^{2n - 1} x^l \left(\sum\limits_{k = 0}^{n - 1} \omega^{kl} \right)^{1/2}.
    \end{equation*}

We note that in Section~\ref{sec:introduction}, we refer to these matrices as $L_{\mathsf{HU}}$ and $R_{\mathsf{HU}}$, respectively. We prefer this notation in this section so as not to use double subscripts when denoting the entries of these matrices. 
    
The values $b_f(\omega^{k})$ are $2n$-periodic, i.e., $b_f(\omega^{k}) = b_f(\omega^{k+2n})$ for all $k \in \mathbb Z$. This implies that the matrices $L$ and $R$ row- and column-\textbf{circulant}, respectively.

\begin{lemma}
\label{lem:dft_decomposition}
    The Group Algebra factorization $L, R$ (see equation~\ref{eq:group_algebra_factorization}) is equivalent to matrices $\widetilde{L} \in \mathbb{C}^{n \times 2n}$ and $\widetilde{R} \in \mathbb{C}^{2n \times n}$ given by:
    \begin{equation*}
        \widetilde{L}_{j, k} = \begin{cases}
            \frac{1}{\sqrt{2}}, & k = 0, \\
            \frac{\omega^{jk}}{\sqrt{n(1 - \omega^{-k})}}, & k \equiv 1 \pmod 2, \\
            0, & k \equiv 0 \pmod 2,\; k \ne 0,
        \end{cases}
        \hspace{1cm}
        \widetilde{R}_{j, k} = \begin{cases}
            \frac{1}{\sqrt{2}}, & j = 0, \\
            \frac{\omega^{-jk}}{\sqrt{n(1 - \omega^{-j})}}, & j \equiv 1 \pmod 2, \\
            0, & j \equiv 0 \pmod 2,\; j \ne 0,
        \end{cases}
    \end{equation*}
    where $\omega = e^{i\pi/n}$ is a $2n$-th root of unity.
\end{lemma}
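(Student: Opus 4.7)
The plan is to diagonalize both $L$ and $R$ via the $2n$-point DFT and then strip off the sparse structure of the Fourier coefficients defining $b_f$. First I would evaluate the inner geometric sum: since $\omega^n=-1$,
\[
    \sum_{k=0}^{n-1}\omega^{kl}=\begin{cases} n, & l=0,\\ 0, & l\ne 0 \text{ even},\\ \dfrac{2}{1-\omega^l}, & l \text{ odd}.\end{cases}
\]
Setting $c_l:=\bigl(\sum_{k=0}^{n-1}\omega^{kl}\bigr)^{1/2}$, only $n+1$ of the $2n$ coefficients $c_l$ are nonzero, and for odd $l$ we have $|c_l|^2=1/\sin(\pi l/(2n))$.

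Next, let $W\in\mathbb{C}^{2n\times 2n}$ be the symmetric unitary DFT with $W_{jl}=\omega^{jl}/\sqrt{2n}$, and let $W_J$, $W_K$ denote its first $n$ rows. Because $L_{j,k}=b_f(\omega^{k-j})=\tfrac{1}{2n}\sum_l c_l\,\omega^{(k-j)l}$ is essentially an inverse DFT, one obtains the compact form $L=\overline{W_J}\,D_c\,W^T$ and, analogously, $R=\overline{W}\,D_c\,W_K^{T}$, where $D_c=\mathrm{diag}(c_0,\dots,c_{2n-1})$. The identity $W^T\overline{W}=I$ (which holds because $W$ is both symmetric and unitary) collapses the product to $LR=\overline{W_J}\,D_c^{\,2}\,W_K^T$, recovering the Fourier representation of $\counting$ used in~\cite{henzinger2025improved}.

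The key step is to introduce the alternative diagonal matrix $\widetilde D$ defined by $\widetilde D_0=\sqrt{n}$, $\widetilde D_l=\sqrt{2/(1-\omega^{-l})}$ for odd $l$, and $\widetilde D_l=0$ otherwise, and to set $\widetilde L:=W_J\widetilde D$ and $\widetilde R:=\widetilde D\,\overline{W_K^{T}}$. A direct entrywise expansion recovers precisely the sparse formulas claimed in the lemma. To verify $\widetilde L\widetilde R=\counting$, I would expand to get $(\widetilde L\widetilde R)_{j,k}=\tfrac{1}{2n}\sum_l \omega^{l(j-k)}\widetilde D_l^{\,2}$ and compare with $(LR)_{j,k}=\tfrac{1}{2n}\sum_l \omega^{l(k-j)}c_l^{\,2}$. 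On odd indices $\widetilde D_l^{\,2}=\overline{c_l^{\,2}}$ (because $\overline{1-\omega^l}=1-\omega^{-l}$), so the reindexing $l\mapsto 2n-l$, which permutes odd residues and sends $\omega^l$ to $\omega^{-l}$, matches the two sums term by term.

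The main obstacle is not the linear algebra itself but the careful bookkeeping of complex conjugates: the ``natural'' square roots on the two sides ($c_l$ versus $\widetilde D_l$) differ by conjugation, and one must use both the symmetry $W^T=W$ and the substitution $l\mapsto 2n-l$ to reconcile them. As a closing remark, the equivalence also preserves the factorization norms needed to bound $\gamma_2(\counting)$ and $\gamma_{\op F}(\counting)$: Parseval applied to $b_f(\omega^m)=\tfrac{1}{2n}\sum_l c_l\omega^{ml}$ gives $\|L\|_{2\to\infty}^2=\tfrac{1}{2n}\sum_l|c_l|^2$, and summing $|\widetilde L_{j,l}|^2$ directly from the sparse form, using $|1-\omega^{-l}|=2\sin(\pi l/(2n))$, yields the same value, with an identical argument on the $R$-side.
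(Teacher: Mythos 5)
Your proof is correct, and it reaches the same endpoint — a DFT-based rewriting — but structures the argument differently from the paper. The paper simply sets $\widetilde{L}=L F_{2n}^{*}$ and $\widetilde{R}=F_{2n}R$, computes the entries directly via character orthogonality, and then gets ``equivalence'' (same product, same row/column norms) for free from unitarity of $F_{2n}$. You instead first extract the circulant eigendecomposition $L=\overline{W_J}D_cW^T$, $R=\overline{W}D_cW_K^T$ (which is essentially what the paper proves later, in Corollary~\ref{cor:L_R_decomposition}), then \emph{define} $\widetilde{L}=W_J\widetilde{D}$, $\widetilde{R}=\widetilde{D}\,\overline{W_K^T}$ with a conjugated diagonal, and check entrywise that this matches the lemma's sparse formulas. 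Because you define $\widetilde{L},\widetilde{R}$ independently rather than as unitary images of $L,R$, you then have to verify both that $\widetilde{L}\widetilde{R}=\counting$ (via the $l\mapsto 2n-l$ reindexing and $\widetilde{D}_l^2=\overline{c_l^2}$) and that the norms match (via Parseval), steps the paper avoids. In fact one can shortcut your argument by observing that your $\widetilde{L}$ equals $LW=LF_{2n}^{*}$: since $W^2$ is the index-reversal permutation, $LW=\overline{W_J}D_cW^2$ has $(j,k)$-entry $\tfrac{\omega^{jk}}{\sqrt{2n}}c_{2n-k}=\tfrac{\omega^{jk}}{\sqrt{2n}}\widetilde{D}_k$, matching $W_J\widetilde{D}$ entrywise — which would let you inherit both the product identity and the norm equality immediately and collapse your proof to the paper's. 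As written, your route is longer but correct, and it has the side benefit of making the circulant/spectral structure of the group algebra factorization fully explicit at the outset.
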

\begin{proof}
    Consider the Discrete Fourier Transform (DFT) matrix $F_{2n}$ defined as:
    \begin{equation*}
        F_{2n} = \frac{1}{\sqrt{2n}} \begin{pmatrix}
            1 & 1 & \dots & 1 \\
            1 & \omega^{-1} & \dots & \omega^{-(2n - 1)} \\
            \vdots & \vdots & \ddots & \vdots \\
            1 & \omega^{-(2n - 1)} & \dots & \omega^{-(2n - 1)(2n - 1)}
        \end{pmatrix},
    \end{equation*}
    where $\omega = e^{i\pi / n}$ is a $2n$-th root of unity. This matrix is unitary, and therefore does not change the Frobenius norm of $L$ nor the maximum column norm of $R$, preserving the factorization error. Thus, we define a new factorization as $\widetilde{L} = L  \times F_{2n}^{*}$ and $\widetilde{R} = F_{2n} \times R$, where $F^{*}_{2n}$ denotes the conjugate transpose of $F_{2n}$.

    To compute the product, we begin with $\widetilde{L}$:
    \begin{align*}
        \widetilde{L}_{j, k} &= \sum_{t = 0}^{2n - 1} L_{j, t} (F_{2n}^*)_{t, k}
        = \frac{1}{\sqrt{2n}} \sum_{t = 0}^{2n - 1} b_f(\omega^{-j + t}) \omega^{tk} = \frac{1}{(2n)^{3/2}} \sum_{t = 0}^{2n - 1} \omega^{tk} \left[\sum_{l = 0}^{2n - 1} \omega^{(t - j)l} \left(\sum_{m = 0}^{n - 1} \omega^{ml}\right)^{1/2} \right] \\
        &= \frac{1}{(2n)^{3/2}} \sum_{t = 0}^{2n - 1} \omega^{tk} \left[ \sqrt{n} + \sqrt{2} \sum_{l = 0}^{n - 1} \frac{\omega^{(t - j)(2l + 1)}}{(1 - \omega^{2l + 1})^{1/2}} \right]\\
        &= \frac{\mathbbm{1}_{k = 0}}{\sqrt{2}} + \frac{1}{2n^{3/2}} \sum_{l = 0}^{n - 1} \frac{\omega^{-j(2l + 1)}}{(1 - \omega^{2l + 1})^{1/2}} \sum_{t = 0}^{2n - 1} \omega^{t(2l + 1 + k)} \\
        &= \frac{\mathbbm{1}_{k = 0}}{\sqrt{2}} + \frac{1}{\sqrt{n}} \sum_{l = 0}^{n - 1} \frac{\omega^{-j(2l + 1)}}{(1 - \omega^{2l + 1})^{1/2}} \mathbbm{1}_{2l + 1 = 2n - k} = \frac{\mathbbm{1}_{k = 0}}{\sqrt{2}} + \frac{\omega^{jk}}{(1 - \omega^{-k})^{1/2}} \mathbbm{1}_{\text{$k$ odd}}.
    \end{align*}

    Similarly, for matrix $\widetilde{R}$:
    \begin{align*}
        \widetilde{R}_{j, k} &= \sum_{t = 0}^{2n - 1} (F_{2n})_{j, t} R_{t, k}
        = \frac{1}{\sqrt{2n}} \sum_{t = 0}^{2n - 1} \omega^{-jt} b_f(\omega^{-t + k}) = \frac{1}{(2n)^{3/2}} \sum_{t = 0}^{2n - 1} \omega^{-jt} \left[\sum_{l = 0}^{2n - 1} \omega^{(k - t)l} \left(\sum_{m = 0}^{n - 1} \omega^{ml}\right)^{1/2} \right] \\
        &= \frac{1}{(2n)^{3/2}} \sum_{t = 0}^{2n - 1} \omega^{-jt} \left[ \sqrt{n} + \sqrt{2} \sum_{l = 0}^{n - 1} \frac{\omega^{(k - t)(2l + 1)}}{(1 - \omega^{2l + 1})^{1/2}} \right] \\
        &= \frac{\mathbbm{1}_{j = 0}}{\sqrt{2}} + \frac{1}{2n^{3/2}} \sum_{l = 0}^{n - 1} \frac{\omega^{k(2l + 1)}}{(1 - \omega^{2l + 1})^{1/2}} \sum_{t = 0}^{2n - 1} \omega^{-t(j + 2l + 1)} \\
        &= \frac{\mathbbm{1}_{j = 0}}{\sqrt{2}} + \frac{1}{\sqrt{n}} \sum_{l = 0}^{n - 1} \frac{\omega^{k(2l + 1)}}{(1 - \omega^{2l + 1})^{1/2}} \mathbbm{1}_{2l + 1 = 2n - j} = \frac{\mathbbm{1}_{j = 0}}{\sqrt{2}} + \frac{\omega^{-jk}}{(1 - \omega^{-j})^{1/2}} \mathbbm{1}_{\text{$j$ odd}}.
    \end{align*}
This concludes the proof of Lemma~\ref{lem:dft_decomposition}.
\end{proof}

\begin{corollary}
\label{cor:L_R_decomposition}
The matrices $L$ and $R$ defined in eq.~\eqref{eq:group_algebra_factorization} can be expressed as follows:
\begin{equation*}
    L = P F_{2n}^{*} \Sigma^{1/2} F_{2n}, \quad R = F_{2n}^{*} \Sigma^{1/2} F_{2n} P^{\top},
\end{equation*}
where $\Sigma \in \mathbb{C}^{2n \times 2n}$ is a diagonal matrix, and $P \in \mathbb{R}^{2n \times n}$ is the projection matrix onto the first $n$ coordinates, given by:
\begin{equation*}
    \Sigma =  \text{diag}\left(n,\; \frac{2}{1 - \omega^{-1}}, \ldots,  \frac{1 - \omega^{-kn}}{1 - \omega^{-k}}, \ldots, \frac{2}{1 - \omega^{-(2n - 1)}},\;  0\right), \quad 
    P = \begin{pmatrix}
        I_n & 
        \mathbf{0}^{n\times n}
    \end{pmatrix},
\end{equation*}
where $\omega = e^{i\pi/n}$ is a $2n$-th root of unity.
\end{corollary}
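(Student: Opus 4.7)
The plan is to combine Lemma~\ref{lem:dft_decomposition} with the unitarity of $F_{2n}$. From that lemma, $\widetilde{L} = L F_{2n}^{*}$ and $\widetilde{R} = F_{2n} R$, so, since $F_{2n}^{*} F_{2n} = I_{2n}$, inverting gives $L = \widetilde{L} F_{2n}$ and $R = F_{2n}^{*} \widetilde{R}$. It therefore suffices to show the two intermediate identities $\widetilde{L} = P F_{2n}^{*} \Sigma^{1/2}$ and $\widetilde{R} = \Sigma^{1/2} F_{2n} P^{\top}$ for the diagonal matrix $\Sigma$ in the statement; the corollary then follows immediately by composition, yielding $L = P F_{2n}^{*} \Sigma^{1/2} F_{2n}$ and $R = F_{2n}^{*} \Sigma^{1/2} F_{2n} P^{\top}$.

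To establish the first identity, I will compute entries directly. For $j \in [0, n-1]$ and $k \in [0, 2n-1]$, the product simplifies via $(F_{2n}^{*})_{j,k} = \omega^{jk}/\sqrt{2n}$ to $(P F_{2n}^{*} \Sigma^{1/2})_{j,k} = \omega^{jk}\sqrt{\Sigma_k}/\sqrt{2n}$. Matching this against the three cases for $\widetilde{L}_{j,k}$ in Lemma~\ref{lem:dft_decomposition} pins $\Sigma_k$ down uniquely: the target $1/\sqrt{2}$ at $k=0$ forces $\Sigma_0 = n$; the target $\omega^{jk}/\sqrt{n(1-\omega^{-k})}$ at odd $k$ forces $\Sigma_k = 2/(1-\omega^{-k})$; and the target $0$ at even nonzero $k$ forces $\Sigma_k = 0$. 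These three cases are subsumed under the single formula $\Sigma_k = (1-\omega^{-kn})/(1-\omega^{-k})$, because $\omega^{-kn} = e^{-i\pi k} = (-1)^k$ makes the numerator equal to $2$ for odd $k$ and $0$ for even nonzero $k$, while the removable singularity at $k=0$ has L'H\^opital limit $n$. The identity for $\widetilde{R}$ is obtained symmetrically by computing $(\Sigma^{1/2} F_{2n} P^{\top})_{j,k}$ and matching the three piecewise cases of $\widetilde{R}_{j,k}$; the same diagonal $\Sigma$ works because the piecewise structure of $\widetilde{R}$ in Lemma~\ref{lem:dft_decomposition} is the transpose-conjugate of that of $\widetilde{L}$.

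There is no deep obstacle here; the work is bookkeeping to align the piecewise definition of $\widetilde{L}_{j,k}$ and $\widetilde{R}_{j,k}$ with the piecewise structure of the diagonal $\Sigma$. The emergence of the factor $F_{2n}^{*} \Sigma^{1/2} F_{2n}$ is consistent with, and in fact explains, the row- and column-circulant structure of $L$ and $R$ noted after eq.~\eqref{eq:group_algebra_factorization}: it is precisely the standard DFT diagonalization of the underlying $2n \times 2n$ circulant operator, with the projection $P$ chopping the circulant block down to the $n \times 2n$ (respectively $2n \times n$) shape of $L$ (respectively $R$).
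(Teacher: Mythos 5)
Your proposal is correct and follows essentially the same route as the paper: both start from Lemma~\ref{lem:dft_decomposition}, invert via unitarity of $F_{2n}$ to reduce to showing $\widetilde{L} = P F_{2n}^{*} \Sigma^{1/2}$ and $\widetilde{R} = \Sigma^{1/2} F_{2n} P^{\top}$, and then match the piecewise description of $\widetilde{L}, \widetilde{R}$ against the diagonal $\Sigma$. The paper does this by writing $\widetilde{L}$ as a product of a truncated $\sqrt{2n}\,F_{2n}^{*}$ with a diagonal matrix that it identifies as $\Sigma^{1/2}/\sqrt{2n}$, whereas you verify entrywise; these are the same bookkeeping. Your additional observations that $\omega^{-kn}=(-1)^k$ unifies the three cases of $\Sigma_k$ and that the removable singularity at $k=0$ yields $n$ are correct and make the structure slightly more transparent than the paper's presentation, but do not change the argument.
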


\begin{proof}
From Lemma~\ref{lem:dft_decomposition}, we have that $L = \widetilde{L} F_{2n}$ and $R = F_{2n}^{*} \widetilde{R}$, where
\begin{equation*}
    \widetilde{L} = 
    \begin{pmatrix}
        1 & 1 & \cdots & 1\\
        1 & \omega & \dots & \omega^{2n - 1}\\
        \vdots & \vdots & \ddots & \vdots\\
        1 & \omega^{n - 1} & \cdots & \omega^{(n - 1)(2n - 1)} 
    \end{pmatrix} \times
    \begin{pmatrix}
        \frac{1}{\sqrt{2}} & 0 & \dots & 0\\
        0 & \frac{1}{\sqrt{n}(1 - \omega^{-1})} & \dots & 0\\
        \vdots & \vdots & \ddots & \vdots\\
        0 & 0 & \dots & 0
    \end{pmatrix}.
\end{equation*}
The first matrix is a truncated $F_{2n}^{*}$ scaled by $\sqrt{2n}$. When the second matrix is multiplied by $\sqrt{2n}$, we obtain exactly $\sqrt{\Sigma}$; hence, $\widetilde{L} = P F_{2n}^{*} \Sigma^{1/2}$. Analogously, we get $\widetilde{R} = \Sigma^{1/2} F_{2n} P^{\top}$, which completes the proof of Corollary~\ref{cor:L_R_decomposition}.
\end{proof}

We note that this form of the matrices $L$ and $R$ corresponds to the factorizations $\mathbf{B_F}$ and $\mathbf{C_F}$ in Section K of \cite{choquette2022multi}.

\begin{lemma}
The Group Algebra factorization defined by factors in eq.~\eqref{eq:group_algebra_factorization} is equivalent to
\begin{equation*}
    L = P M_{\text{circ}}^{1/2}, \quad R = M_{\text{circ}}^{1/2} P^{\top},
\end{equation*}
where $M_{\text{circ}} \in \mathbb{R}^{2n \times 2n}$ is a circulant extension of the matrix $M$, and $P$ is the projection matrix onto the first $n$ coordinates. These matrices are given by:
\begin{equation*}
    M_{\text{circ}} = \begin{pmatrix}
        1 & 0 & \dots & 0 & 0 & 1 & \dots & 1\\
        1 & 1 & \dots & 0 & 0 & 0 & \dots & 1\\
        \vdots & \vdots & \ddots & \vdots & \vdots & \vdots & \ddots & \vdots\\
        1 & 1 & \dots & 1 & 0 & 0 & \dots & 0\\
        0 & 1 & \dots & 1 & 1 & 0 & \dots & 0\\
        \vdots & \vdots & \ddots & \vdots & \vdots & \vdots & \ddots & \vdots\\
        0 & 0 & \dots & 0 & 1 & 1 & \dots & 1
    \end{pmatrix}, 
    \quad 
    P = \begin{pmatrix}
        I_n & 
        \mathbf{0}^{n\times n}
    \end{pmatrix}.
\end{equation*}
\end{lemma}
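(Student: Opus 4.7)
The plan is to build directly on Corollary~\ref{cor:L_R_decomposition}, which already expresses the factors as $L = P F_{2n}^{*} \Sigma^{1/2} F_{2n}$ and $R = F_{2n}^{*} \Sigma^{1/2} F_{2n} P^{\top}$. It therefore suffices to identify the $2n \times 2n$ matrix $K := F_{2n}^{*} \Sigma^{1/2} F_{2n}$ with $M_{\text{circ}}^{1/2}$, i.e.\ to show that $K^2 = M_{\text{circ}}$, where the square root is interpreted via the circulant/DFT spectral decomposition.

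First I would use unitarity: since $F_{2n} F_{2n}^{*} = I_{2n}$, a direct multiplication gives $K^2 = F_{2n}^{*} \Sigma^{1/2} F_{2n} F_{2n}^{*} \Sigma^{1/2} F_{2n} = F_{2n}^{*} \Sigma F_{2n}$. The task then reduces to verifying the identity $F_{2n}^{*} \Sigma F_{2n} = M_{\text{circ}}$. Since $M_{\text{circ}}$ is circulant with first column $c = (1, 1, \ldots, 1, 0, \ldots, 0)^{\top}$ consisting of $n$ ones followed by $n$ zeros, it is simultaneously diagonalized by the $2n$-point DFT: $M_{\text{circ}} = F_{2n}^{*} D F_{2n}$, where $D$ is the diagonal matrix of its eigenvalues $\lambda_k = \sum_{j=0}^{n-1} \omega^{-jk}$ for $k = 0, \ldots, 2n-1$. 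Evaluating the geometric sum, $\lambda_0 = n$; and for $k \neq 0$, $\lambda_k = \tfrac{1 - \omega^{-kn}}{1 - \omega^{-k}}$. Using $\omega^{-kn} = e^{-ik\pi} = (-1)^k$, this collapses to $\lambda_k = \tfrac{2}{1 - \omega^{-k}}$ for odd $k$ and $\lambda_k = 0$ for even $k \neq 0$. These values match the diagonal entries of $\Sigma$ listed in Corollary~\ref{cor:L_R_decomposition} exactly, so $D = \Sigma$ and thus $K^2 = M_{\text{circ}}$. Substituting back, $L = PK = P M_{\text{circ}}^{1/2}$ and $R = K P^{\top} = M_{\text{circ}}^{1/2} P^{\top}$.

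The only conceptual subtlety, rather than a real obstacle, is the meaning of $M_{\text{circ}}^{1/2}$: since $M_{\text{circ}}$ is circulant but not Hermitian, its eigenvalues $\lambda_k$ and $\lambda_{2n-k}$ form complex conjugate pairs (for odd $k$). We define $M_{\text{circ}}^{1/2}$ as the circulant matrix whose $k$-th DFT eigenvalue is $\sqrt{\lambda_k}$, taking the principal branch and choosing consistently conjugate roots on conjugate-pair indices so that the resulting matrix is real. With this convention, $K = F_{2n}^{*} \Sigma^{1/2} F_{2n}$ is precisely this real circulant square root, making the identification well-defined and completing the proof.
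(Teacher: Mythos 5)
Your proof is correct and follows the same high-level plan as the paper's: start from Corollary~\ref{cor:L_R_decomposition}, reduce to the identity $F_{2n}^{*}\Sigma F_{2n} = M_{\text{circ}}$, and then pass to square roots. The main difference is in how that identity is justified. The paper cites Theorem~J.1 and Equation~(44) of \cite{choquette2022multi} for the fact that $M_{\text{circ}} = F_{2n}^{*}\Sigma F_{2n}$, whereas you derive it from scratch: $M_{\text{circ}}$ is circulant with first column $(1,\dots,1,0,\dots,0)^{\top}$ (with $n$ ones), so its DFT eigenvalues are the geometric sums $\lambda_k = \sum_{j=0}^{n-1}\omega^{-jk}$, which evaluate to $n$ at $k=0$, to $\tfrac{2}{1-\omega^{-k}}$ for odd $k$, and to $0$ for even $k\neq 0$ (since $\omega^{-kn}=(-1)^k$). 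These are exactly the diagonal entries of $\Sigma$, so $D=\Sigma$. Your route is self-contained and, I think, preferable pedagogically. You also add a genuinely useful clarification that the paper omits: since $M_{\text{circ}}$ is real but not symmetric, its nonzero eigenvalues come in conjugate pairs $\lambda_k,\lambda_{2n-k}=\overline{\lambda_k}$, and for $M_{\text{circ}}^{1/2} = F_{2n}^{*}\Sigma^{1/2}F_{2n}$ to be a well-defined real matrix one must take the principal branch consistently on those pairs; since each $\lambda_k$ has positive real part this is unambiguous. The paper's proof, by contrast, just says "by squaring both sides" without fixing the branch.

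One small remark: the displayed $\Sigma$ in Corollary~\ref{cor:L_R_decomposition} ends with "$\ldots, \tfrac{2}{1-\omega^{-(2n-1)}},\ 0$", suggesting $2n+1$ entries; the trailing $0$ appears to be a typo in the paper, and your eigenvalue computation confirms that the $(2n-1)$-st (last) entry is $\tfrac{2}{1-\omega^{-(2n-1)}}$, not $0$. Your derivation is therefore a useful sanity check on the stated form of $\Sigma$.
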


\begin{proof}
First, we show that $M_{\text{circ}} = F_{2n}^{*} \Sigma F_{2n}$, with $\Sigma$ defined in Corollary~\ref{cor:L_R_decomposition}; for this, we refer to Theorem J.1 and Equation (44) of \cite{choquette2022multi}. Then, we show that $M_{\text{circ}}^{1/2} = F_{2n}^{*} \Sigma^{1/2} F_{2n}$ by squaring both sides. Multiplying both sides on the left by $P$ proves that $L = P M_{\text{circ}}^{1/2}$. The proof for matrix $R$ is analogous. We thus have the lemma.
\end{proof}
\begin{lemma}
\label{lem:group_algebra_max_se}
    The MaxSE and MeanSE errors of the Group Algebra Factorization defined by factors in eq.~\eqref{eq:group_algebra_factorization} are given by the following identity:
    \begin{equation*}
        \mathrm{MaxSE}(L, R) = \mathrm{MeanSE}(L, R) = \frac{\log (n)}{\pi} + \underbrace{\frac{1}{\pi}\paren{\log \paren{\tfrac{8}{\pi}} + \gamma} + \frac{1}{2}}_{\approx 0.981} + o(1),
    \end{equation*}
    where $\gamma \approx 0.57721$ is the Euler–Mascheroni constant.  
\end{lemma}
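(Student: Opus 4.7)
The plan is to exploit the explicit DFT form provided by Lemma~\ref{lem:dft_decomposition}, which effectively diagonalizes $L$ and $R$, and then reduce the whole calculation to the cosecant sum already analyzed in Lemma~\ref{lem:boundonG(n)}. Since $\widetilde L = L F_{2n}^*$ and $\widetilde R = F_{2n} R$ with $F_{2n}$ unitary, row norms of $L$ coincide with row norms of $\widetilde L$ and column norms of $R$ coincide with column norms of $\widetilde R$. Using the closed-form entries and $|\omega^{jk}|=1$, the squared $j$-th row norm of $\widetilde L$ becomes
\begin{equation*}
\|\widetilde L_{j,:}\|_2^2 \;=\; \tfrac{1}{2} \;+\; \sum_{l=0}^{n-1}\frac{1}{n\,|1-\omega^{-(2l+1)}|} \;=\; \tfrac{1}{2} \;+\; \frac{1}{2n}\sum_{l=0}^{n-1}\frac{1}{\sin(\pi(2l+1)/(2n))},
\end{equation*}
after applying $|1-e^{-i\theta}|=2\sin(\theta/2)$. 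Crucially the value is independent of $j$, confirming that every row of $L$ has the same $\ell_2$ norm; an identical computation yields the same for columns of $\widetilde R$, and by the symmetric presentation $L=PM_{\text{circ}}^{1/2}$, $R=M_{\text{circ}}^{1/2}P^{\top}$ the two common values coincide. Denote this common squared norm by $a$.

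Next I would identify the cosecant sum with a difference of $G$-values. Splitting $G(2n)=\tfrac{1}{2n}\sum_{\ell=1}^{2n-1}\tfrac{1}{\sin(\pi\ell/(2n))}$ into odd $\ell = 2l+1$ and even $\ell = 2j$ parts, the even part equals $\tfrac{1}{2}G(n)$, so
\begin{equation*}
\frac{1}{2n}\sum_{l=0}^{n-1}\frac{1}{\sin(\pi(2l+1)/(2n))} \;=\; G(2n) - \tfrac{1}{2}G(n).
\end{equation*}
Substituting the asymptotics of Lemma~\ref{lem:boundonG(n)}, the $\log n$ contributions combine with coefficient $\tfrac{1}{\pi}$ and the constant terms simplify, using $\log 4 + \log(2/\pi) = \log(8/\pi)$, to give
\begin{equation*}
a \;=\; \tfrac{1}{2} \;+\; \frac{\log n}{\pi} \;+\; \frac{1}{\pi}\bigl(\gamma + \log(8/\pi)\bigr) \;+\; o(1).
\end{equation*}

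Finally, since every row of $L$ has squared $\ell_2$ norm $a$, we have $\|L\|_{2\to\infty}=\sqrt{a}$ and moreover $\tfrac{1}{\sqrt{n}}\|L\|_F = \sqrt{a}$ (the Frobenius sum of $n$ equal terms), and likewise $\|R\|_{1\to 2}=\sqrt{a}$. Therefore
\begin{equation*}
\mathrm{MaxSE}(L,R) \;=\; \|L\|_{2\to\infty}\,\|R\|_{1\to 2} \;=\; a \;=\; \tfrac{1}{\sqrt{n}}\|L\|_F\,\|R\|_{1\to 2} \;=\; \mathrm{MeanSE}(L,R),
\end{equation*}
which is precisely the claimed identity. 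The only delicate points in carrying this out are (i) isolating the $k=0$ column of $\widetilde L$, which contributes the isolated $\tfrac{1}{2}$ rather than a cosecant term, and (ii) correctly parsing the odd/even split that links our sum to $G(2n)-\tfrac{1}{2}G(n)$; the rest is a direct invocation of Lemma~\ref{lem:boundonG(n)} together with the $j$-independence enforced by the circulant structure.
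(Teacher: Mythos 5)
Your proposal is correct and follows essentially the same structure as the paper's proof: establish that every row of $L$ (and column of $R$) has the same squared $\ell_2$-norm equal to $\tfrac{1}{2} + \tfrac{1}{2n}\sum \csc\bigl(\tfrac{(2l+1)\pi}{2n}\bigr)$, identify that cosecant sum with $G(2n) - \tfrac{1}{2}G(n)$, and invoke Lemma~\ref{lem:boundonG(n)}. The one point where you diverge is a minor but genuine improvement in self-containedness: the paper quotes the row-norm identity as Lemma~7 of an external reference (Henzinger et al.~\cite{henzinger2025binned}), whereas you re-derive it directly from the entries in Lemma~\ref{lem:dft_decomposition} via $|1-e^{-i\theta}| = 2\sin(\theta/2)$ and the observation that $|\omega^{jk}|=1$ makes the row index $j$ drop out. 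Your indexing ($l = 0,\dots,n-1$ over odd $2l+1$) and the paper's ($l = 1,\dots,n$ over $2l-1$) describe the same set of cosecant arguments, and the remaining asymptotic computation, including the simplification $\log 4 + \log(2/\pi) = \log(8/\pi)$, is identical.
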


\begin{proof}
    In Lemma 7 of \cite{henzinger2025binned}, it was shown that:
    \begin{equation*}
        \|L\|_{2\to \infty}^2 = \|R\|_{1 \to 2}^2 = \frac{1}{2} + \frac{1}{2n} \sum\limits_{l = 1}^{n} \frac{1}{\sin\left(\tfrac{\pi(2l - 1)}{2n}\right)},
    \end{equation*}
    and since the matrix $L$ is circulant, all its rows have the same norm. Therefore, we obtain:
    \begin{equation*}
         \text{MaxSE}(L, R) = \text{MeanSE}(L, R) = \frac{1}{2} + \frac{1}{2n} \sum\limits_{l = 1}^{n} \frac{1}{\sin\left(\tfrac{\pi(2l - 1)}{2n}\right)}.
    \end{equation*}
    Now we analyze the sum. We rewrite it as follows:
    \begin{equation*}
        \frac{1}{2n} \sum\limits_{l = 1}^{n} \frac{1}{\sin\left(\tfrac{\pi(2l - 1)}{2n}\right)} 
        = \frac{1}{2n} \sum\limits_{l = 1}^{2n - 1} \frac{1}{\sin\left(\tfrac{\pi \ell}{2n}\right)} 
        - \frac{1}{2n} \sum\limits_{l = 1}^{n - 1} \frac{1}{\sin\left(\tfrac{\pi 2l}{2n}\right)} 
        = G(2n) - \tfrac{1}{2}G(n),
    \end{equation*}

From Lemma~\ref{lem:boundonG(n)}, we get 
\begin{equation*}
     G(2n) - \tfrac{1}{2} G(n) = \frac{\log (n)}{\pi} + \frac{\gamma + \log \paren{\tfrac{8}{\pi}}}{\pi} + o(1),
\end{equation*}

Equating the two concludes the proof of Lemma~\ref{lem:group_algebra_max_se}.
\end{proof}

\subsection{Square Root Factorization}
\label{sec:square_root_factorization}

The \emph{Square Root Factorization} of the matrix $\counting$ is defined as the matrix square root $C = \counting^{1/2}$, whose entries were shown in \cite{henzinger2022constant} to be
\begin{equation}
    C =
    \begin{pmatrix}
        r_0 & 0 & \dots & 0 \\
        r_1 & r_0 & \dots & 0 \\
        \vdots & \vdots & \ddots & \vdots \\
        r_{n - 1} & r_{n - 2} & \dots & r_0
    \end{pmatrix},
\end{equation}
where $r_0 = 1$ and, for $k \ge 1$, $r_k = \left|\binom{-1/2}{k}\right| = \frac{1}{4^k}\binom{2k}{k}$. The coefficients $r_k$ satisfy the classical bounds
\begin{equation}
    \frac{1}{\sqrt{\pi (k + 1)}} \le r_k \le \frac{1}{\sqrt{\pi k}},
\end{equation}
which follow directly from Wallis' inequality.  

The MaxSE error of the square root factorization can be bounded by
\begin{equation}
    \mathrm{MaxSE}(C, C) \le \frac{\log (n)}{\pi} + \underbrace{1 + \frac{\gamma}{\pi}}_{\approx 1.18},
\end{equation}
where $\gamma$ denotes the Euler–Mascheroni constant~\cite[Theorem 2.7]{pillutla2025correlated}. In the next theorem, we improve upon this bound.

\begin{theorem}
\label{thm:square_root_bounds}
The MaxSE error of the square root factorization of size $n$ is given by
\begin{equation*}
    \sum\limits_{j = 0}^{n - 1} r_j^2 = \alpha_n + \frac{\log (n)}{\pi} = \alpha_{\infty} - \epsilon_n + \frac{\log (n)}{\pi},
\end{equation*}
where $0 \le \epsilon_n \le \frac{1}{5n}$, and $\alpha_{\infty} \approx 1.0663$. In particular, $1 \leq a_n \leq 1.0663$.
\end{theorem}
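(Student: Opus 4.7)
The first equality $\sum_{j=0}^{n-1} r_j^2 = \alpha_n + \log(n)/\pi$ is simply the definition of $\alpha_n$ in eq.~\eqref{eq:alpha_n}, so the content is establishing the limit $\alpha_\infty$, the numerical value $\approx 1.0663$, and the rate bound $\epsilon_n \le 1/(5n)$. The plan is to first extract everything possible from Lemma~\ref{lem:monotonicityofa_n}, which already proves that $(\alpha_n)$ is monotonically increasing and bounded above by $1.0663$. This immediately yields the ``in particular'' clause $1 = \alpha_1 \le \alpha_n \le \alpha_\infty \le 1.0663$, gives existence of $\alpha_\infty := \lim_{n\to \infty} \alpha_n$, and ensures $\epsilon_n := \alpha_\infty - \alpha_n \ge 0$. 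The exact value $\alpha_\infty = (\gamma + \log 16)/\pi$ (from which $\approx 1.0663$ comes) is then imported from the Landau-constant analysis referenced as Corollary~\ref{cor:alpha_infinity}.

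For the rate, I would write $\epsilon_n$ as the telescoping tail
\[
\epsilon_n \;=\; \sum_{k = n}^\infty (\alpha_{k+1} - \alpha_k) \;=\; \sum_{k = n}^\infty \left( r_k^2 - \tfrac{1}{\pi}\log(1 + 1/k) \right),
\]
whose summands are non-negative by Lemma~\ref{lem:monotonicityofa_n}. To upper-bound each summand, I would apply the Wallis upper bound $r_k^2 \le \tfrac{1}{\pi(k+1/4)}$ from Lemma~\ref{lem:WallisInequality} together with the Taylor lower bound $\log(1+1/k) \ge 1/k - 1/(2k^2)$ (valid for $k \ge 1$). An elementary simplification yields
\[
\pi(\alpha_{k+1} - \alpha_k) \;\le\; \frac{1}{k + 1/4} - \frac{1}{k} + \frac{1}{2k^2} \;=\; \frac{2k + 1}{2k^2(4k + 1)}.
\]
Since $\frac{2k+1}{4k+1}$ is decreasing in $k$ with maximum $3/5$ at $k = 1$, this gives the clean estimate $\pi(\alpha_{k+1} - \alpha_k) \le \tfrac{3}{10 k^2}$. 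Summing via the telescoping inequality $\sum_{k=n}^{\infty} \tfrac{1}{k^2} < \sum_{k=n}^{\infty} \tfrac{1}{k^2 - 1/4} = \sum_{k=n}^{\infty} \bigl(\tfrac{1}{k - 1/2} - \tfrac{1}{k + 1/2}\bigr) = \tfrac{2}{2n-1}$ then gives $\pi \epsilon_n \le \tfrac{3}{5(2n-1)}$, and a routine check that $3n \le \pi(2n-1)$ for all $n \ge 1$ (equivalent to $n \ge \pi/(2\pi - 3) \approx 0.96$) concludes $\epsilon_n \le 1/(5n)$.

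The main obstacle will be calibrating the Wallis constant sharply enough: the cruder bound $r_k^2 \le 1/(\pi k)$ would kill the telescoping (each summand would be non-summable at the $1/k$ level) and the bound $r_k^2 \le 1/(\pi(k+1))$ would flip the sign of the $1/k^2$-coefficient and therefore would not prove positivity-type estimates needed from Lemma~\ref{lem:monotonicityofa_n} above. The shift $k + 1/4$ from Chen--Qi is precisely what produces the $\frac{1}{4k^2}$-rate governing $\epsilon_n$. A minor technical point is making the constant $1/5$ work uniformly for all $n \ge 1$ (rather than just asymptotically): the telescoping bound $\tfrac{2}{2n-1}$ above handles this cleanly, but the more natural choice $\sum_{k \ge n} 1/k^2 \le 1/(n-1)$ would force a separate verification at $n = 1$.
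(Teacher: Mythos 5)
Your proposal is correct, and it proves the rate bound $\epsilon_n \le 1/(5n)$ by a genuinely different (and in a sense more self-contained) route than the paper. The paper's argument first compares $r_j^2$ against $\tfrac{1}{\pi(j+1)}$ and expresses
\[
\alpha_n = \sum_{j=0}^{n-1}\Bigl[r_j^2 - \tfrac{1}{\pi(j+1)}\Bigr] + \frac{\gamma}{\pi} + \frac{1}{2\pi n} - \frac{s_n}{\pi},
\]
invoking the harmonic-number expansion $H_n = \log n + \gamma + \tfrac{1}{2n} - s_n$ with $0 \le s_n \le \tfrac{1}{8n^2}$. It then bounds the tail via the crude Wallis estimate $r_j^2 \le \tfrac{1}{\pi j}$, which telescopes to $\tfrac{1}{\pi n}$, and the $\tfrac{1}{2\pi n}$ from the harmonic expansion does the rest, giving $\epsilon_n \le \tfrac{1}{2\pi n} + \tfrac{1}{8\pi n^2} < \tfrac{1}{5n}$. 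You instead bypass harmonic numbers entirely by writing $\epsilon_n = \sum_{k\ge n}(\alpha_{k+1}-\alpha_k)$ and bounding each increment $r_k^2 - \tfrac{1}{\pi}\log(1+1/k)$ directly with the sharper Chen--Qi form $r_k^2 \le \tfrac{1}{\pi(k+1/4)}$ plus the two-term Taylor lower bound for $\log(1+1/k)$; this packages the increment as $\tfrac{2k+1}{2\pi k^2(4k+1)} \le \tfrac{3}{10\pi k^2}$, after which the telescoping estimate $\sum_{k\ge n}k^{-2} < 2/(2n-1)$ closes the computation for all $n\ge 1$. The trade-off: the paper makes do with the looser Wallis bound but pays for it with the $s_n$ error control in Stirling/Euler--Maclaurin, while you reuse the same Chen--Qi bound that the monotonicity proof (Lemma~\ref{lem:monotonicityofa_n}) already requires and need only elementary Taylor estimates. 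One small inaccuracy in your commentary: the crude bound $r_k^2 \le \tfrac{1}{\pi k}$ does not make the increments non-summable (they are still $O(k^{-2})$ after subtracting $\tfrac{1}{\pi}\log(1+1/k)$); it merely yields the weaker constant $\tfrac{1}{\pi(2n-1)}$, which exceeds $\tfrac{1}{5n}$ at $n=1$. The actual proof you give is unaffected by this remark.
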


\begin{proof}
We begin by upper bounding $\alpha_n$ as 
\begin{align*}
    \alpha_n = \sum\limits_{j = 0}^{n - 1} r_j^2 - \frac{\log (n)}{\pi} \le 1 + \frac{1}{\pi} \sum\limits_{j = 1}^{n - 1} \frac{1}{j} - \frac{\log (n)}{\pi} 
    \le 1 + \frac{1}{\pi}.
\end{align*}

From Lemma~\ref{lem:monotonicityofa_n}, the  sequence $a_n$ is monotonically increasing and bounded from above, it converges. The limiting constant $\alpha_{\infty}$ can be computed numerically to any desired precision. The lower bound $\epsilon_n \ge 0$ follows directly from monotonicity. 

To prove the upper bound $\epsilon_n = \alpha_{\infty} - \alpha_n \le \frac{1}{5n}$, we rewrite $\alpha_n$ as follows:
\begin{equation*}
    \alpha_n = \sum\limits_{j = 0}^{n - 1} r_j^2 - \frac{\log (n)}{\pi} 
    = \sum\limits_{j = 0}^{n - 1} \left[ r_j^2 - \frac{1}{\pi(j+1)} \right] 
    + \frac{H_n}{\pi} - \frac{\log (n)}{\pi} 
    = \sum\limits_{j = 0}^{n - 1} \left[ r_j^2 - \frac{1}{\pi(j+1)} \right] 
    + \frac{\gamma}{\pi} + \frac{1}{2\pi n} - \frac{s_n}{\pi},
\end{equation*}
where in the last step we used the harmonic number expansion $H_n = \log (n) + \gamma + \frac{1}{2n} - s_n$, with $0 \le s_n \le \frac{1}{8n^2}$.

Hence, the difference between $\alpha_{\infty}$ and $\alpha_n$ is
\begin{align*}
    \epsilon_n = \alpha_{\infty} - \alpha_n 
    &= \sum\limits_{j = n}^{\infty} \left[ r_j^2 - \frac{1}{\pi(j+1)} \right] - \frac{1}{2\pi n} + \frac{s_n}{\pi} \\
    &\le \frac{1}{\pi} \sum\limits_{j = n}^{\infty} \left( \frac{1}{j} - \frac{1}{j+1} \right) - \frac{1}{2\pi n} + \frac{1}{8\pi n^2} = \frac{1}{2\pi n} + \frac{1}{8\pi n^2} < \frac{1}{5n},
\end{align*}
which concludes the proof of Theorem~\ref{thm:square_root_bounds}.
\end{proof}

The only previously known bound for the MeanSE error of the square root factorization was obtained by upper-bounding it via the MaxSE error. In the following theorem, we show that tighter constants can be achieved by computing the MeanSE exactly.

\begin{theorem}
    The MeanSE error of the square root factorization is given by:
    \begin{equation*}
        \mathrm{MeanSE}(C,C) = \frac{\log (n)}{\pi} + \underbrace{\alpha_{\infty} - \frac{1}{2\pi}}_{\approx 0.907} + o(1).
    \end{equation*}
\end{theorem}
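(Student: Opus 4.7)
The plan is to expand $\text{MeanSE}(C,C)^2 = \tfrac{1}{n}\|C\|_{\op F}^2 \cdot \|C\|_{1\to 2}^2$ and then take a square root. Let $d_j^2 = \sum_{t=0}^{n-j} r_t^2$ denote the squared norm of column $j$ of $C$, as defined in eq.~\eqref{eq:d_j_definition}. Since the columns of $C$ are nested (column $j$ contains the entries $r_0,\dots,r_{n-j}$), we have $\|C\|_{1\to 2}^2 = d_1^2 = \sum_{j=0}^{n-1} r_j^2$ and $\|C\|_{\op F}^2 = \sum_{j=1}^{n} d_j^2$. Theorem~\ref{thm:square_root_bounds} immediately gives
\[
    \|C\|_{1\to 2}^2 = \alpha_\infty + \tfrac{\log(n)}{\pi} + o(1).
\]

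The main computation is the average $\tfrac{1}{n}\sum_{j=1}^{n} d_j^2$. Reindexing $j \mapsto n-j+1$ and applying Theorem~\ref{thm:square_root_bounds} componentwise,
\[
    \tfrac{1}{n}\sum_{j=1}^{n} d_j^2 = \tfrac{1}{n}\sum_{j=1}^{n} \alpha_j + \tfrac{1}{n\pi}\sum_{j=1}^{n} \log(j).
\]
For the first term, writing $\alpha_j = \alpha_\infty - \epsilon_j$ with $0 \le \epsilon_j \le \tfrac{1}{5j}$ yields
$\tfrac{1}{n}\sum_{j=1}^{n}\alpha_j = \alpha_\infty - O\!\left(\tfrac{H_n}{n}\right) = \alpha_\infty + o(1)$. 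For the second term, Stirling's formula gives $\tfrac{1}{n}\log(n!) = \log(n) - 1 + o(1)$, so
\[
    \tfrac{1}{n}\|C\|_{\op F}^2 = \alpha_\infty - \tfrac{1}{\pi} + \tfrac{\log(n)}{\pi} + o(1).
\]

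Multiplying these two asymptotics, with $X := \tfrac{\log(n)}{\pi}$,
\[
    \text{MeanSE}(C,C)^2 = \bigl(\alpha_\infty + X\bigr)\bigl(\alpha_\infty - \tfrac{1}{\pi} + X\bigr) + o(X) = (\alpha_\infty + X)^2 - \tfrac{\alpha_\infty + X}{\pi} + o(X).
\]
Taking a square root and using $\sqrt{1 - y} = 1 - \tfrac{y}{2} + O(y^2)$ with $y = \tfrac{1/\pi}{\alpha_\infty + X} = O\!\left(\tfrac{1}{\log n}\right)$,
\[
    \text{MeanSE}(C,C) = (\alpha_\infty + X)\sqrt{1 - \tfrac{1/\pi}{\alpha_\infty + X}} + o(1) = \alpha_\infty + X - \tfrac{1}{2\pi} + o(1),
\]
which is exactly the claimed bound.

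The only place where care is needed is the last step: the $o(X)$ error in the product of asymptotics must shrink to $o(1)$ after the square root, which it does because the leading term is $\Theta(X)$ and the additive correction $-\tfrac{1}{2\pi}$ arises from the first-order Taylor expansion of $\sqrt{1-y}$. The contributions from $\sum \epsilon_j$ and the sub-leading Stirling terms are both $o(1)$ in the average, so they cannot corrupt the constant. No hidden obstacle beyond bookkeeping the error terms through the product and the square root.
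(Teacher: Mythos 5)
Your proof is correct and follows essentially the same route as the paper: both compute $\tfrac{1}{n}\|C\|_{\op F}^2$ and $\|C\|_{1\to 2}^2$ separately via the asymptotic $\sum_{j<m} r_j^2 = \alpha_\infty + \tfrac{\log m}{\pi} + o(1)$ (Theorem~\ref{thm:square_root_bounds}), average over columns using Stirling, and take a square root. Your final step is a touch more explicit about the Taylor expansion of the square root than the paper's one-line $\sqrt{X^2 + 2X(\alpha_\infty - \tfrac{1}{2\pi}) + o(X)} = X + \alpha_\infty - \tfrac{1}{2\pi} + o(1)$, but this is purely presentational.
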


\begin{proof}
    The squared normalized Frobenius norm $\tfrac{1}{n}\|C\|_F^2$ is given by
    \begin{equation*}
        \frac{1}{n} \sum\limits_{k = 1}^{n} \sum\limits_{j = 0}^{k - 1} r_j^2 = \frac{1}{n} \sum\limits_{k = 1}^{n} \left[\frac{\log k}{\pi} + \alpha_k\right] = \frac{\log (n)!}{\pi n} + \frac{1}{n} \sum\limits_{k = 1}^{n}\alpha_k = \frac{\log (n) - 1 + o(1)}{\pi} + \alpha_{\infty} + o(1), 
    \end{equation*}
where in the last identity we used that the sequence $\alpha_k$ converges to $\alpha_{\infty}$ (therefore their averages also converge to $\alpha_{\infty}$) and Stirling's approximation for $\log (n!)$. The maximum column norm of matrix $C$ is given by:
\begin{equation*}
    \|C\|_{1\to 2}^2 = \sum\limits_{j = 0}^{n - 1} r_j^2 = \frac{\log (n)}{\pi} + \alpha_{\infty} + o(1).
\end{equation*}

Thus, the MeanSE is given by:

\begin{equation*}
    \text{MeanSE}(C,C) = \sqrt{\frac{\log ^2n}{\pi^2} + \frac{2\log (n)}{\pi} \left(\alpha_{\infty} - \frac{1}{2\pi}\right) + o(\log (n))} = \frac{\log (n)}{\pi} + \underbrace{\alpha_{\infty} - \frac{1}{2\pi}}_{\approx 0.907} + o(1).
\end{equation*}
This concludes the proof of theorem.
\end{proof}

\begin{lemma}
The square root factorization $C \times C = \counting$ is equivalent to 
\begin{equation*}
F_n S^{1/2} \times S^{1/2} F_n^{*} = \counting, \quad \text{where} \quad S = F_n^{*} \counting F_n.
\end{equation*}
\end{lemma}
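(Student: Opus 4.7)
The plan is to verify the identity by a short change-of-basis computation that exploits the unitarity of the DFT matrix $F_n$, namely $F_n F_n^* = F_n^* F_n = I_n$.

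First, I would exhibit an explicit square root of $S$. Since the square root factorization already provides $C$ with $C^2 = \counting$, the natural candidate for $S^{1/2}$ is $F_n^* C F_n$. A direct check shows
\begin{equation*}
(F_n^* C F_n)(F_n^* C F_n) = F_n^* C (F_n F_n^*) C F_n = F_n^* C^2 F_n = F_n^* \counting F_n = S,
\end{equation*}
so $S^{1/2} := F_n^* C F_n$ is indeed a square root of $S$. This step simultaneously disposes of the well-definedness issue (any branch question is resolved by this explicit formula) and identifies $S^{1/2}$ as the image of $C$ under conjugation by $F_n$.

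Next, I would substitute this square root into the proposed factorization and simplify:
\begin{equation*}
F_n S^{1/2} \cdot S^{1/2} F_n^* = F_n \bigl(S^{1/2}\bigr)^2 F_n^* = F_n S F_n^* = F_n (F_n^* \counting F_n) F_n^* = \counting,
\end{equation*}
using only unitarity and associativity. Moreover, reading off the individual factors gives $F_n S^{1/2} = C F_n$ and $S^{1/2} F_n^* = F_n^* C$, so the two factorizations differ only by right/left multiplication by a unitary. In particular the Frobenius norm (hence $\gamma_{\op F}$) is preserved by unitary invariance, which is what ``equivalent'' formalizes in our setting.

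I do not anticipate any real obstacle. The only mildly delicate point is ensuring that the symbol $S^{1/2}$ refers to a consistent square root; this is handled up front by the explicit choice $F_n^* C F_n$ in the first step, after which every subsequent line reduces to associativity and the identity $F_n F_n^* = I_n$. The content of the lemma is therefore representational: it recasts the square root factorization in the Fourier basis, which is the analogue of the DFT reformulation used in Corollary~\ref{cor:L_R_decomposition} for the group algebra factorization.
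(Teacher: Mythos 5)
Your proof is correct and follows essentially the same approach as the paper: define $S^{1/2} := F_n^* C F_n$, verify by squaring that it is indeed a square root of $S$, and then observe that the proposed factorization differs from $C \times C$ only by left/right multiplication by the unitary $F_n$, which preserves the relevant factorization norms. The only difference is that you spell out the intermediate algebra a bit more explicitly than the paper does.
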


\begin{proof}
We first show that $F_n^{*} C F_n = S^{1/2}$ by squaring both sides of the equation. Then, the new factorization is given by $C F_n \times F_n^{*} C$. Since multiplying a factorization by a unitary matrix $F_n$ and its inverse does not change the factorization error, the result follows.
\end{proof}

\subsection{Discussion on Landau  constants}
\label{sec:landauConstant}
In this work, we are the first in the matrix factorization community to observe that the sums of $r_k^2$ have been studied previously in the context of mathematical analysis, where they are known as the \emph{Landau constants}, defined by
\begin{equation}
    G_n = \sum\limits_{k = 0}^{n} \frac{1}{16^k}\binom{2k}{k}^2 =\sum_{k = 0}^{n} r_k^2.
\end{equation}
In his original work \cite{landau1916darstellung},  Landau showed that $G_n \sim \frac{\log (n)}{\pi}$.  The following more precise expansion was later obtained.

\begin{theorem}[Watson \cite{watson1930constants}]
The Landau constant has the following closed form expression:
    \begin{equation}
        G_n = \sum_{k = 0}^{n} r_k^2
        = \frac{\log (n + 1)}{\pi} + \frac{\gamma + \log 16}{\pi} - \frac{1}{4\pi (n + 1)} + \mathcal{O}\!\left(\frac{1}{n^2}\right).
    \end{equation}
\end{theorem}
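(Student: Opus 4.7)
The existence of $\alpha_{\infty} = \lim_n (G_n - \tfrac{\log n}{\pi})$ along with the crude tail bound $\alpha_{\infty} - G_n + \tfrac{\log n}{\pi}\le \tfrac{1}{5n}$ is already supplied by Theorem \ref{thm:square_root_bounds}. The task therefore reduces to (i) identifying $\alpha_{\infty}$ explicitly as $(\gamma+\log 16)/\pi$ and (ii) refining the $O(1/n)$ correction to the precise $-\tfrac{1}{4\pi(n+1)} + O(1/n^2)$.

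My plan is to isolate an absolutely convergent correction to $G_n - H_n/\pi$ and identify its limit via the generating function of $r_k^2$. First I would record the refined Stirling expansion
\[
 r_n^2 \;=\; \tfrac{1}{\pi n} - \tfrac{1}{4\pi n^2} + \tfrac{1}{32\pi n^3} + O\!\paren{\tfrac{1}{n^4}},
\]
so that both $L := \sum_{k\geq 1}\paren{r_k^2 - \tfrac{1}{\pi k}}$ and the tail $T_n := \sum_{k>n}\paren{r_k^2 - \tfrac{1}{\pi k}}$ are absolutely convergent. Then $G_n = 1 + \tfrac{H_n}{\pi} + L - T_n$, reducing the problem to computing $L$ exactly and $T_n$ to the required accuracy.

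For $L$, I would invoke the classical identity $\sum_{k\geq 0}r_k^2 x^k = \tfrac{2}{\pi}K(\sqrt{x})$ for the complete elliptic integral of the first kind (which follows by squaring the binomial series for $(1-k^2\sin^2\theta)^{-1/2}$ and integrating over $\theta$), combined with $-\log(1-x) = \sum_{k\geq 1}x^k/k$, to write
\[
 1 + \sum_{k\geq 1}\paren{r_k^2-\tfrac{1}{\pi k}}x^k \;=\; \tfrac{2}{\pi}K(\sqrt{x}) + \tfrac{1}{\pi}\log(1-x).
\]
The standard singular expansion $K(k) = \log(4/k') + O(k'^2\log(1/k'))$ with $k' = \sqrt{1-k^2}$ yields $\tfrac{2}{\pi}K(\sqrt{x}) = \tfrac{\log 16}{\pi} - \tfrac{1}{\pi}\log(1-x) + o(1)$ as $x\to 1^-$, so the right-hand side tends to $\tfrac{\log 16}{\pi}$. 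Since the boundary series converges (terms are $O(1/k^2)$), Abel's theorem yields $1 + L = \tfrac{\log 16}{\pi}$. For the tail, the Step 1 expansion plus Euler–Maclaurin give $T_n = -\tfrac{1}{4\pi}\sum_{k>n}\tfrac{1}{k^2} + O(1/n^3) = -\tfrac{1}{4\pi(n+1)} + O(1/n^2)$. Combining with $H_n = \log n + \gamma + \tfrac{1}{2n} + O(1/n^2)$ and $\log(n+1) = \log n + \tfrac{1}{n} - \tfrac{1}{2n^2} + O(1/n^3)$ to re-collect the $1/n$-terms produces Watson's identity in the stated form.

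The main obstacle is identifying $\alpha_{\infty}$ in closed form: Wallis-type inequalities on the individual $r_k^2$ are too crude to yield the constant $\log 16$, which in effect encodes the logarithmic singularity of $K$ at $x=1$. Any proof must invoke an analytic identity equivalent to the elliptic-integral generating function (for instance, using $r_n^2 = \tfrac{4}{\pi^2}\int_0^{\pi/2}\!\!\int_0^{\pi/2}\sin^{2n}\theta\sin^{2n}\phi\, d\theta\, d\phi$ and summing geometrically) and handle the boundary $x=1$ through Abel/Tauberian passage. A secondary difficulty is disciplined bookkeeping of $1/n$-corrections so that the answer lands on the clean form $\log(n+1)$, $1/(n+1)$ rather than $\log n$, $1/n$.
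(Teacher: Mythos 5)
The paper does not prove this theorem; it cites Watson (1930) directly and uses only the resulting identity $\alpha_\infty = (\gamma + \log 16)/\pi$ (see Corollary~\ref{cor:alpha_infinity}). Your proposal is therefore not a comparison but a self-contained reconstruction, and it is essentially correct. The decomposition $G_n = 1 + H_n/\pi + L - T_n$ with $L = \sum_{k\geq 1}\bigl(r_k^2 - \tfrac{1}{\pi k}\bigr)$, the identification of $1+L = \tfrac{\log 16}{\pi}$ via the elliptic-integral generating function $\sum_{k\geq 0} r_k^2 x^k = \tfrac{2}{\pi}K(\sqrt{x})$, the logarithmic singularity $K(k)\sim\log(4/k')$, and the Abel passage to $x=1$ together constitute the standard modern route to the constant; your Euler--Maclaurin treatment of the tail $T_n$ via $r_k^2 - \tfrac{1}{\pi k} = -\tfrac{1}{4\pi k^2} + O(k^{-3})$ then correctly reproduces $-\tfrac{1}{4\pi(n+1)}$. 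One small slip: the error incurred after replacing $T_n$ by its leading term $-\tfrac{1}{4\pi}\sum_{k>n} k^{-2}$ is $O\bigl(\sum_{k>n} k^{-3}\bigr)=O(1/n^2)$, not $O(1/n^3)$, but this is harmless since the target error is $O(1/n^2)$. Your closing remark is the crucial structural observation: Wallis-type pointwise inequalities on $r_k^2$ (which is all the paper uses in Lemma~\ref{lem:monotonicityofa_n} and Theorem~\ref{thm:square_root_bounds}) suffice to prove convergence and a $1/(5n)$ rate but cannot pin down the additive constant, which is an analytic quantity encoded in the $x\to 1$ singularity of $K$. Some generating-function or integral-representation input is indispensable; the paper avoids supplying one by outsourcing the closed form to Watson.
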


\begin{corollary}
\label{cor:alpha_infinity}
Using $\sum_{k = 0}^{n - 1} r_k^2 = G_{n - 1}$, we deduce that
\[
\alpha_{\infty} = \frac{\gamma + \log 16}{\pi}.
\]
\end{corollary}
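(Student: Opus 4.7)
The plan is to combine Watson's asymptotic expansion of the Landau constants with the definition of $\alpha_n$ from Theorem~\ref{thm:square_root_bounds} and then take the limit $n\to\infty$. Since Theorem~\ref{thm:square_root_bounds} asserts that $\alpha_n$ is monotone increasing and bounded above, the limit $\alpha_\infty := \lim_{n\to\infty} \alpha_n$ exists, so it suffices to identify its value.

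Concretely, first I would write out the identity
\[
\sum_{k=0}^{n-1} r_k^2 = G_{n-1} = \frac{\log n}{\pi} + \frac{\gamma + \log 16}{\pi} - \frac{1}{4\pi n} + \mathcal O\!\left(\frac{1}{n^2}\right),
\]
by substituting $n-1$ for $n$ in Watson's formula and using $\log n$ in place of $\log((n-1)+1)$. Next, I would invoke the definition $\alpha_n = \sum_{k=0}^{n-1} r_k^2 - \tfrac{\log n}{\pi}$ from eq.~\eqref{eq:alpha_n}, yielding
\[
\alpha_n = \frac{\gamma + \log 16}{\pi} - \frac{1}{4\pi n} + \mathcal O\!\left(\frac{1}{n^2}\right).
\]

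Finally, letting $n\to\infty$ makes the $\tfrac{1}{4\pi n}$ and $\mathcal O(1/n^2)$ terms vanish, which gives $\alpha_\infty = (\gamma + \log 16)/\pi$ as claimed. There is essentially no obstacle here: the entire content is bundled into Watson's theorem, which is already stated above; the corollary is a two-line bookkeeping step that also sharpens the quantitative bound on $\epsilon_n = \alpha_\infty - \alpha_n$ from the earlier $\tfrac{1}{5n}$ to $\tfrac{1}{4\pi n} + \mathcal O(1/n^2)$, consistent with (but strictly better than) the bound proved in Theorem~\ref{thm:square_root_bounds}.
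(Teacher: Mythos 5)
Your proposal is correct and matches the paper's intended derivation: the corollary is stated in the paper without an explicit proof because it follows immediately by substituting $n-1$ into Watson's expansion, subtracting $\tfrac{\log n}{\pi}$ to recover $\alpha_n$, and letting $n\to\infty$, exactly as you do. The only minor caveat is your closing remark that this "strictly" sharpens the bound $\epsilon_n \le \tfrac{1}{5n}$: Watson's $\mathcal O(1/n^2)$ term has no explicit constant, so what you obtain is a sharper asymptotic for $\epsilon_n$, not an explicit upper bound that supersedes the one in Theorem~\ref{thm:square_root_bounds}; but this is a side comment and does not affect the correctness of the corollary itself.
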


In the work of Brutman \cite{brijtman1982sharp}, the following bounds were established:
\begin{equation}
    1 + \frac{\log (n + 1)}{\pi} \le G_n \le \alpha_{\infty} + \frac{\log (n + 1)}{\pi}.
\end{equation}

In subsequent work, Falaleev \cite{falaleev1991inequalities} improved the lower bound:
\begin{equation}
    \alpha_{\infty} + \frac{\log (n + 3/4)}{\pi} \le G_n \le \alpha_{\infty} + \frac{\log (n + 1)}{\pi}.
\end{equation}

These results are sufficient to cover the content of Theorem~\ref{thm:square_root_bounds}. However, our proof was discovered independently and differs technically from the previous work. For the sake of completeness and self-containment, we choose to present our proof rather than relying only on the existing theory of Landau constants, whose topic is both rich and intriguing. For more modern treatments, including sharper inequalities and higher-order expansions, we refer the reader to \cite{chen2014asymptotic, chen2017unified,zhao2009some}.

\section*{Acknowledgements.} 

We thank Joel Andersson for valuable comments on the early version of the draft.


\erclogowrapped{5\baselineskip}Monika Henzinger:  This project has received funding from the European Research Council (ERC) under the European Union's Horizon 2020 research and innovation programme (Grant agreement No.\ 101019564) and the Austrian Science Fund (FWF) grant DOI 10.55776/Z422.

\sloppy 
Nikita Kalinin: This work is
supported in part by the Austrian Science Fund (FWF) [10.55776/COE12]. A part of this work was done while visiting  University of Copenhagen.

Jalaj Upadhyay: This project was supported in my part by NSF CNS 2433628, Google Seed Fund grant, Google Research Scholar Award, Dean Research Seed Fund, and Decanal Research Grant. A part of work was done while visiting Institute of Science and Technology-Austria. 

\bibliographystyle{plain}
\bibliography{references}

\end{document}